\providecommand{\tabularnewline}{\\}
\numberwithin{equation}{section}
\numberwithin{figure}{section}
\theoremstyle{plain}
\newtheorem{thm}{\protect\theoremname}
\theoremstyle{plain}
\newtheorem{lem}[thm]{\protect\lemmaname}
\theoremstyle{remark}
\newtheorem{rem}[thm]{\protect\remarkname}
\theoremstyle{definition}
\newtheorem{defn}[thm]{\protect\definitionname}
\theoremstyle{plain}
\newtheorem{prop}[thm]{\protect\propositionname}
\theoremstyle{remark}
\newtheorem*{rem*}{\protect\remarkname}
\theoremstyle{plain}
\newtheorem{cor}[thm]{\protect\corollaryname}
\date{}
  \newcommand\fs@myRoundBox{\def\@fs@cfont{\bfseries}\let\@fs@capt\floatc@plain
  \def\@fs@pre{\begin{mdframed}[style=myFigureBoxStyle]}%
  \def\@fs@mid{\vspace{\abovecaptionskip}}%
  \def\@fs@post{\end{mdframed}}\let\@fs@iftopcapt\iffalse}
\providecommand{\corollaryname}{Corollary}
\providecommand{\definitionname}{Definition}
\providecommand{\lemmaname}{Lemma}
\providecommand{\propositionname}{Proposition}
\providecommand{\remarkname}{Remark}
\providecommand{\theoremname}{Theorem}
\title{Set Cover with Delay -- Clairvoyance is not Required}
\author{%
	\begin{tabular}{c}
		Yossi Azar\tabularnewline
		\texttt{azar@tau.ac.il}\tabularnewline
		{\small{}Tel Aviv University}\tabularnewline
		\tabularnewline
		Shay Kutten\tabularnewline
		\texttt{kutten@technion.ac.il}\tabularnewline
		{\small{}Technion - Israel Institute of Technology}\tabularnewline
	\end{tabular}\and%
	\begin{tabular}{c}
		Ashish Chiplunkar\tabularnewline
		\texttt{ashishc@iitd.ac.in}\tabularnewline
		{\small{}Indian Institute of Technology Delhi}\tabularnewline
		\tabularnewline
		Noam Touitou\tabularnewline
		\texttt{noamtouitou@mail.tau.ac.il}\tabularnewline
		{\small{}Tel Aviv University}\tabularnewline
\end{tabular}}
\begin{document}
\global\long\def\E{\mathbb{E}}%

\global\long\def\ADV{\operatorname{ADV}}%

\global\long\def\ALG{\operatorname{ALG}}%

\global\long\def\OPT{\operatorname{OPT}}%

\global\long\def\ON{\operatorname{ON}}%

\global\long\def\ONF{\operatorname{ONF}}%

\global\long\def\ONR{\operatorname{ONR}}%

\global\long\def\SCD{\operatorname{SCD}}%

\global\long\def\SCDALG{\operatorname{SCDALG}}%

\global\long\def\OSC{\operatorname{OSC}}%

\newcommand{\diff}[1]{\,\mathrm{d}#1}

\maketitle
\begin{abstract}
In most online problems with delay, clairvoyance (i.e. knowing the future delay of a request upon its arrival) is required for polylogarithmic competitiveness. In this paper, we show that this is not the case for set cover with delay (SCD) -- specifically, we present the first non-clairvoyant algorithm, which is $O(\log n \log m)$-competitive, where $n$ is the number of elements and $m$ is the number of sets. This matches the best known result for the classic online set cover (a special case of non-clairvoyant SCD). Moreover, clairvoyance does not allow for significant improvement -- we present lower bounds of $\Omega(\sqrt{\log n})$ and $\Omega(\sqrt{\log m})$ for SCD which apply for the clairvoyant case. 

In addition, the competitiveness of our algorithm does not depend on the number of requests. Such a guarantee on the size of the universe alone was not previously known even for the clairvoyant case -- the only previously-known algorithm (due to Carrasco et al.) is clairvoyant, with competitiveness that grows with the number of requests. 

For the special case of vertex cover with delay, we show a simpler, deterministic algorithm which is $3$-competitive (and also non-clairvoyant).
\end{abstract}
\thispagestyle{empty}
\setcounter{page}{0}
\clearpage

\section{Introduction}

In problems with delay, requests are released over a timeline. The algorithm must serve these requests by performing some action, which incurs a cost. While a request is pending (i.e. has been released but not yet served), the request accumulates delay cost. The goal of the algorithm is to minimize the sum of costs incurred in serving requests and the delay costs of requests.

There are two variants of such problems. In the clairvoyant variant, the delay function of a request (which determines the delay accumulation of that request over time) is revealed to the algorithm upon the release of the request. In the non-clairvoyant variant, at any point in time the algorithm is only aware of delay accumulated up to that point. 

Most online problems with delay do not admit competitive non-clairvoyant algorithms -- namely, there exist lower bounds for competitiveness which are polynomial in the size of the input space (e.g. the number of points in the metric space upon which requests are released). This is the case, for example, in the multilevel aggregation problem \cite{DBLP:conf/esa/BienkowskiBBCDF16,DBLP:conf/soda/BuchbinderFNT17}, the facility location problem \cite{DBLP:conf/focs/AzarT19} and the service with delay problem \cite{DBLP:conf/stoc/AzarGGP17}. However, these problems do admit \emph{clairvoyant} algorithms which are polylog-competitive. An additional such problem is that of matching with delay (presented in \cite{DBLP:conf/stoc/EmekKW16}), for which the only known algorithms are for when all requests have an identical, linear delay function (and are in particular clairvoyant). Rather surprisingly, we show in this paper that the online set cover with delay problem \emph{does} admit a competitive non-clairvoyant algorithm. 

In the online set cover with delay problem (SCD), a universe of elements and a family of sets are
known in advance. Requests then arrive over time on the
elements, and accumulate delay cost until served by the algorithm.
The algorithm may choose to buy a set at any time, at a cost
specific to that set (and known in advance to the algorithm). Buying
a set serves all pending requests (requests released but not yet
served) on elements of that set; future requests on those elements, that have yet to arrive when the set is bought, must be served separately at a future point in time. For that reason, a set may be bought an
unbounded number of times over the course of the algorithm. The goal
of an algorithm is to minimize the sum of the total buying cost and
the total delay cost. We note that one could also consider the problem in which sets are bought permanently, and cover future requests; however, it is easy to see that this problem is equivalent to the classic online set cover, and is thus of no additional interest. In Appendix \ref{sec:NonclairvoyantIsOptimal2}, we show that this problem is a special case of our problem.

As a variant of set cover, the SCD problem is very general, capturing many problems. Nevertheless, we give two possible motivations for the problem. 

\textbf{Summoning experts:} consider a company which
occasionally requires the help of experts. At any time, a problem
may arise which requires external assistance in some field, and
negatively impacts the performance of the company while unresolved.
At any time, the company may hire any one of a set of experts to
come to the company, solve all standing problems in that expert's
fields of expertise, and then leave. The company aims to minimize
the total cost of hiring experts, as well as the negative impact of
unresolved problems.

\textbf{Cluster-covering with delay:} suppose antennas generate data requests over time, which must be satisfied by an external server, with a cost to leaving a request pending. To satisfy an update request by an antenna, the server sends the data to a center antenna which transmits it at a certain radius, at a certain cost (which depends on the center antenna and the radius). All requests on antennas inside that radius are served by that transmission. This problem is a covering problem with delay costs, which can be described in terms of SCD. As an SCD instance, the elements are the antennas, and the sets are pairs of a center antenna and a (reasonable) transmission radius (the number of sets is quadratic in the number of antennas).

Carrasco \textit{et al.} \cite{DBLP:conf/latin/CarrascoPSV18} provided a clairvoyant algorithm for the SCD problem, which is $O(\log N)$ competitive (where $N$ is the number of the requests). However, as the number of requests becomes large, the competitive ratio of this algorithm tends to infinity -- even for a very small universe of elements and sets. Thus, this algorithm does not provide a guarantee in terms of the underlying input space, as we would like. In addition, their algorithm has exponential running time (through making oracle calls which compute optimal solutions for NP-hard problems).

In this paper, we present the first algorithm for SCD which is polylog-competitive in the size of the universe, which is also the first algorithm for the problem which runs in polynomial time. Surprisingly, this algorithm is also non-clairvoyant, showing that the SCD problem admits non-clairvoyant competitive algorithms. Our randomized algorithm is $O(\log n \log m)$-competitive, where $n$ is the number of elements and $m$ is the number of sets. In this paper, we show a reduction from the classic online set cover to SCD, which implies (due to \cite{KormanSetCoverRandomization}) that our upper bound is tight for a polynomial-time, non-clairvoyant algorithm for SCD. 

While our algorithm is optimal for the non-clairvoyant setting, one could wonder if there exists a \emph{clairvoyant} algorithm which performs significantly better -- especially considering the aforementioned problems, in which the gap between the clairvoyant and non-clairvoyant cases is huge. We answer this in the negative -- namely, we show lower bounds of $\Omega(\sqrt{\log n})$ and $\Omega(\sqrt {\log m} )$ on the competitiveness of any randomized clairvoyant algorithm, showing that there is no large gap which clairvoyant algorithms could bridge. Nevertheless, a quartic gap still exists, e.g. in the case that $m=
\Theta(n)$. We conjecture that the gap is in fact quadratic, and leave this as an open problem.

In this paper, we also consider the problem of vertex cover with delay (denoted VCD). In the VCD problem, vertices of graph are given, with a buying cost associated with each vertex. Requests on the edges of the graph arrive over time, and accumulate delay until served by buying a vertex touching the edge (at the cost of that vertex's price). This problem corresponds to SCD where every element is in exactly two sets.

\subsection{Our Results}

We denote as before the number of elements in an SCD instance by $n$, and the number of sets by $m$. We also define $k\le m$ to be the maximum number of sets to which a specific element may belong. We consider arbitrary (nondecreasing) continuous delay functions (not only linear functions). 

In this paper, we present:
\begin{enumerate}
	\item An $O(\log k\cdot\log n)$-competitive, randomized, non-clairvoyant algorithm for SCD, based on rounding of a newly-designed $O(\log k)$-competitive algorithm for the fractional version of SCD.  The competitive ratio of this algorithm is tight -- we show a reduction from (classic) online set cover to non-clairvoyant SCD.
	
	\item Lower bounds of $\Omega(\sqrt{\log k})$ and $\Omega(\sqrt{\log n})$ on competitiveness for \textbf{clairvoyant} SCD, showing that clairvoyance cannot improve competitiveness beyond a quadratic factor. 
	
	\item A simple, deterministic, non-clairvoyant algorithm for vertex cover with delay (VCD) which is $3$-competitive.
\end{enumerate}

Our randomized algorithm for SCD is the first (sub-polynomial competitive) non-clairvoyant algorithm for this problem. Moreover, this is the first algorithm which is polylog-competitive in the size of the universe (even among clairvoyant algorithms).

%

In the process of obtaining our $\Omega(\sqrt{\log k})$ and $\Omega(\sqrt{\log n})$
lower bounds, we in fact obtain an $\Omega(\sqrt{\log m})$ lower
bound (which immediately implies $\Omega(\sqrt{\log k})$ since $k\le m$).
The lower bounds also apply for the unweighted setting. These lower bounds improve over the lower bound of $\Omega(\log \log n)$ given in \cite{DBLP:conf/latin/CarrascoPSV18}\footnote{The lower bound of \cite{DBLP:conf/latin/CarrascoPSV18} shows $\Omega(\log N)$-competitiveness, but relies on a universe which is exponentially larger than the number of requests. As they mention in their paper, this therefore translates to an $\Omega(\log \log n)$ lower bound on competitiveness.}.

For VCD, while our algorithm is $3$-competitive, note that there is a lower bound of $2$. The lower bound uses a graph with
a single edge which is requested multiple times; this graph corresponds to the TCP acknowledgment problem,
analyzed in \cite{DBLP:conf/stoc/DoolyGS98}.

\begin{rem}
	While our $O(\log k \cdot \log n)$-competitive algorithm is presented for the case in which the sets and elements are known in advance, it can easily be modified for the case in which each element, as well as which of the sets contain it, becomes known to the algorithm only after the arrival of a request on that element. Moreover, the algorithm can in fact operate in the original setting of Carrasco \textit{et al.} \cite{DBLP:conf/latin/CarrascoPSV18}, as it does not need to know the family of sets itself, but rather the family of \emph{restrictions} of the sets to the elements that have already arrived. This can be done through standard doubling techniques applied to $\log n$ and $\log k$ (i.e. squaring of $n$ and $k$).
\end{rem}

\subsection{Our Techniques}

In the course of designing a non-clairvoyant algorithm for the SCD problem,
we also consider a fractional version of SCD. In this version, an
algorithm may choose to buy a fraction of a set at any moment.
Buying a fraction of a set partially serves requests present on an
element of that set, which causes them to accumulate less future
delay. As with the original version, a request is only served by
fractions bought after its arrival. Hence, the sum of fractions
bought for a single set over time is unbounded (i.e. a set may be bought many times).


In the \textbf{fractional $O(\log k)$-competitive algorithm}, each
request that can be served by a set contributes some amount to the
buying of that set. This amount depends exponentially on the delay
accumulated by that request, as well as the delay of previous requests.
Typically in algorithms with exponential contributions, these contributions
are summed. Interestingly, our algorithm instead chooses the maximum
of the contributions of the requests as the buying function of the
set. The choice of maximum over sum is crucial to the proof (using
sum instead of maximum would lead to a linear competitive ratio).

The analysis of this algorithm is based on dual fitting: we first present
a linear programming representation of the fractional SCD problem,
then use a feasible solution to the dual problem to charge the delay
of the algorithm to the optimum. This is the reason for using the
maximum in the buying function; each quantity satisfies a different
constraint in the dual, and choosing the maximum satisfies all constraints.
We then charge the buying cost of the algorithm to $O(\log k)$ times
its delay, which concludes the analysis.

Next, we design a randomized competitive algorithm for the integer version of SCD using \emph{2-level} randomized rounding of the fractional algorithm. At the \emph{top level}, we construct a \textbf{randomized $O(\log k\cdot\log N)$-competitive
algorithm for the integer version},
with $N$ the number of requests. The top-level rounding consists of maintaining
for each set a random threshold, and buying the set when the total
buying of that set in the \emph{fractional }algorithm exceeds the
threshold. In addition, special service of a request is performed
in the probabilistically unlikely event that the request is half-served in the fractional algorithm but is still pending in the rounding. Since in our
problem we may buy a set an unbounded number of times, we require
use of multiple subsequent thresholds. To analyze this, we make use
of Wald's equation for stopping time.

We add the \emph{bottom level} to improve the $O(\log k\cdot\log N)$-competitive algorithm to a
\textbf{randomized $O(\log k\cdot\log n)$-competitive algorithm for
the integer version}. The bottom level partitions time into phases for each element separately, and aggregates requests on that element that are released in the same phase. The competitive ratio of the resulting algorithm is asymptotically optimal for solving non-clairvoyant SCD in polynomial time, as shown by the reduction from the classic online set cover to non-clairvoyant SCD given in Appendix \ref{sec:NonclairvoyantIsOptimal2}.

Perhaps the most novel techniques in this paper are used for \textbf{the
lower bounds of $\Omega(\sqrt{\log k})$ and $\Omega(\sqrt{\log n})$} for the clairvoyant case. The lower bounds are obtained by a recursive construction. Given a recursive
instance for which any algorithm has a lower bound on the competitive
ratio, we amplify that bound by duplicating every set in the recursive
instance into two sets, one slightly more expensive than the other.
Both sets perform the same function with respect to the recursive instance,
but the algorithm also has an incentive to choose the expensive family
of sets, since they serve some additional requests. If the algorithm
chooses to buy a lot of expensive sets, the optimum releases another
copy of the recursive instance, serviceable only by expensive sets.
This forces the algorithm to buy the expensive sets twice; the optimum
only buys them once. If, on the other hand, the algorithm chooses
the inexpensive sets, it misses the opportunity to serve the additional
requests and the recursive instance simultaneously, and must serve
them separately.

The recursive description of our construction for the lower bounds
is significantly more natural than its iterative description. Few
lower bounds in online algorithms have this property -- another such
lower bound is found in \cite{DBLP:conf/soda/BansalC09}.

\textbf{The $3$-competitive deterministic algorithm for VCD} is simple and based on counters. This algorithm is only $k+1$ competitive for
general SCD, and is thus significantly worse than the previous randomized
algorithm that we have shown for general SCD.

\subsection{Other Related Work}

A different problem called online set cover is considered in \cite{DBLP:conf/stoc/AwerbuchAFL96},
in which the algorithm accumulates value for every element that arrives
on a bought set, and aims to maximize total value. This problem appears
to be fundamentally different from the online set cover in which we
minimize cost, in both techniques and results.


The problem of set cover in the online setting has seen much additional
work, e.g. in \cite{DBLP:journals/siamcomp/GrandoniGLMSS13,DBLP:conf/approx/BhawalkarGP14,DBLP:journals/tcs/DobrevEKKKKM17,DBLP:conf/infocom/PananjadyBV15,DBLP:conf/cocoa/AbshoffMH14}. The set cover problem has also been studied in the streaming model (e.g. \cite{DBLP:conf/icalp/EmekR14,DBLP:conf/soda/ChakrabartiW16}),  stochastic model (e.g. \cite{DBLP:journals/talg/ImNZ16}), dynamic model (e.g. \cite{DBLP:conf/stoc/GuptaK0P17}), and in the context of universal algorithms (e.g. \cite{DBLP:conf/stoc/JiaLNRS05,DBLP:journals/siamcomp/GrandoniGLMSS13}) and communication complexity  (e.g. \cite{DBLP:conf/icalp/Nisan02}).

There are known inapproximability results for the (offline) set cover
and vertex cover problems. In \cite{DBLP:conf/stoc/DinurS14} it
is shown that the offline set cover problem is unlikely to be approximable
in polynomial time to within a factor better than $\ln n$. For the
offline vertex cover, it is shown in \cite{DBLP:journals/jcss/KhotR08}
that it is NP hard to approximate within a factor better than $2$,
assuming the \emph{Unique Games Conjecture}. These results apply to
our SCD and VCD problems, as an instance of offline set cover (or
vertex cover) can be released at time $0$. Of course, these inapproximability
results do not constitute lower bounds for the online model, in which
unbounded computation is allowed -- unlike the information-theoretic lower bound of $\Omega(\sqrt{\log n})$ for SCD which is given in this paper.

The field of online problems with delay over time has been of interest
recently. This includes the problems of min-cost perfect matching with delays \cite{DBLP:conf/stoc/EmekKW16,DBLP:conf/soda/AzarCK17,DBLP:conf/approx/AshlagiACCGKMWW17,DBLP:conf/waoa/BienkowskiKS17,DBLP:journals/corr/abs-1804-08097,2018arXiv180603708A}, online service with delay  \cite{DBLP:conf/stoc/AzarGGP17,DBLP:conf/sirocco/BienkowskiKS18,DBLP:conf/focs/AzarT19} and multilevel aggregation \cite{DBLP:conf/esa/BienkowskiBBCDF16,DBLP:conf/soda/BuchbinderFNT17,DBLP:conf/focs/AzarT19}.


\subsection*{Paper Organization}
In Section \ref{sec:FractionalAlgorithm}, we present and analyze a fractional non-clairvoyant algorithm for SCD. In Section \ref{sec:RoundingAlgo}, we show how to round the previous algorithm in a non-clairvoyant manner to obtain our algorithm for the original (integral) SCD.  In Section \ref{sec:LowerBounds}, we show lower bounds for clairvoyant SCD. In Appendix \ref{sec:NonclairvoyantIsOptimal2}, we show that the algorithm obtained in Section \ref{sec:RoundingAlgo} is optimal for the non-clairvoyant case. In Section \ref{sec:VertexCover}, we give a simple, deterministic, non-clairvoyant algorithm for vertex cover with delay.
\section{\label{sec:Preliminaries}Preliminaries}

We denote the sets by $\{S_{i}\}_{i=1}^{m}$, with $m$ the number
of sets. We denote by $n$ the number of elements. We define $k$
to be the minimal number for which every element belongs to at most
$k$ sets. Requests $q_{j}$ arrive on the elements. We denote the
arrival time of request $q_{j}$ by $r_{j}$, and write (with a slight
abuse of notation) $q_{j}\in S_{i}$ if the element on which $q_{j}$
has been released belongs to the set $S_{i}$.

Each request $q_{j}$ has an arbitrary momentary
delay function $d_{j}(t)$, defined for $t\ge r_{j}$. The accumulated
delay of the request at time $t\ge r_{j}$ is defined to be $\int_{r_{j}}^{t}d_{j}(t)\diff{t}$.
At any time in which a request is pending, its momentary delay is
added to the cost of the algorithm; that is, the algorithm incurs
a cost of $\int_{r_{j}}^{\tau_{j}}d_{j}(t)\diff{t}$ (the accumulated delay
of $q_{j}$ at time $\tau_{j}$) for every request $q_{j}$, where $\tau_{j}$
is the time in which $q_{j}$ is served. Each set $S_{i}$ has a price
$c(S_{i})\ge1$ which the algorithm must pay when it decides to buy
the set. Buying a set serves all pending requests which belong to
the set (but does not affect future requests). The \emph{buying cost
}of an online algorithm $\ON$ is $\text{Cost}_{\ON}^{p}=\sum_{i}n_{i}\cdot c(S_{i})$,
where $n_{i}$ is the number of times $S_{i}$ has been bought by
the algorithm. The \emph{delay cost }of $\ON$ is $\text{Cost}_{\ON}^{d}=\sum_{j}\int_{r_{j}}^{\tau_{j}}d_{j}(t)\diff{t}$,
where $\tau_{j}$ is the time in which $q_{j}$ is served by the algorithm ($\tau_{j}$ is $\infty$ if $q_j$ is never  served by the algorithm)\footnote{We solve the more general problem in which the algorithm doesn't have to serve all requests (observe that the adversary can still force the algorithm to serve all requests by adding infinite delay at time infinity). This allows the problem to capture additional problems (e.g. prize-collecting problems, in which a penalty could be paid to avoid serving a specific request)}.

Overall, the cost of $\ON$ for the problem is $\text{Cost}_{\ON}=\text{Cost}_{\ON}^{p}+\text{Cost}_{\ON}^{d}$

\section{\label{sec:FractionalAlgorithm}The Non-Clairvoyant Algorithm for Fractional SCD}
We first describe a fractional relaxation of the (integer) set cover with delay problem. In this fractional relaxation, a set can be bought in parts. A fractional algorithm
determines for each set $S_{i}$ a nonnegative momentary buying function
$x_{i}(t)$. The total buying cost a fractional online algorithm $F$ incurs
is $\text{Cost}_{F}^{p}=\sum_{i}c(S_{i})\cdot\int_{0}^{\infty}x_{i}(t)\diff{t}$.

In the fractional version, a request can be partially served. Under
a fractional algorithm $F$, for any request $q_{j}$, and any set
$S_{i}$ such that $q_{j}\in S_{i}$, the set $S_{i}$ covers $q_{j}$
at a time $t\ge r_{j}$ by the amount $\int_{r_{j}}^{t}x_{i}(t^{\prime})\diff{t^{\prime}}$ (which is obviously nondecreasing as a function of $t$).
The total amount by which $q_{j}$ is covered at time $t$ is
\[
\gamma_{j}(t)=\sum_{i|q_{j}\in S_{i}}\int_{r_{j}}^{t}x_{i}(t^{\prime})\diff{t^{\prime}}
\]
If at time $t$ we have $\gamma_{j}(t)\ge1$, then $q_{j}$ is considered
served, and the algorithm does not incur delay. However, if $\gamma_{j}(t)<1$,
the algorithm $F$ incurs delay proportional to the uncovered fraction
of $q_{j}$. Formally, at time $t$ the request $q_{j}$ incurs $d_{j}^{F}(t)$
delay in $F$, where
\begin{equation}
d_{j}^{F}(t)=\begin{cases}
d_{j}(t)\cdot(1-\gamma_{j}(t)) & \text{if }\gamma_{j}(t)<1\\
0 & \text{otherwise}
\end{cases}\label{eq:d_j_ONF_definition}
\end{equation}

The delay cost of the algorithm is $\text{Cost}_{F}^{d}=\sum_{j}\int_{r_{j}}^{\infty}d_{j}^{F}(t)\diff{t}$.
The total cost of the fractional algorithm is thus
$\text{Cost}_{F}=\text{Cost}_{F}^{p}+\text{Cost}_{F}^{d}$.

\textbf{Description of the algorithm.} We now describe an online algorithm called $\ONF$ for the fractional
problem.

We define a total order $\preceq$ on requests, such that for any
two requests $q_{j_{1}},q_{j_{2}}$ if $r_{j_{1}}<r_{j_{2}}$ we have
$q_{j_{1}}\prec q_{j_{2}}$ (we break ties arbitrarily between requests
with equal arrival time).

At any time $t$, the algorithm does the following.

\noindent\fbox{\begin{minipage}[t]{1\columnwidth - 2\fboxsep - 2\fboxrule}%
\begin{enumerate}
\item For every request $q_{j}$, evaluate $d_{j}^{\ONF}(t)$ by its definition
in Equation \ref{eq:d_j_ONF_definition}.
\item For every set $S_{i}$ and request $q_{j}\in S_{i}$, define \[
D_{i}^{j}(t)=\sum_{j^{\prime}|q_{j^{\prime}}\in S_{i}\wedge q_{j^{\prime}}\preceq q_{j}}d_{j^{\prime}}^{\ONF}(t)\]
\item For every set $S_{i}$ and request $q_{j}\in S_{i}$, define
\[
x_{i}^{j}(t)=\frac{1}{k}\cdot\left(\frac{\ln(1+k)}{c(S_{i})}\cdot D_{i}^{j}(t)\right)\cdot e^{\frac{\ln(1+k)}{c(S_{i})}\int_{r_{j}}^{t}D_{i}^{j}(t^{\prime})\diff{t^{\prime}} }
\]
\item Buy every set $S_{i}$ according to $x_{i}(t)$, such that
\[
x_{i}(t)=\max_{j}x_{i}^{j}(t)
\]
\end{enumerate}
\end{minipage}}

This completes the description of the algorithm.

The intuition for the algorithm is that at any time $t$, the amount $\int_{r_j}^{t} D_i^j (t^{\prime}) \diff{t^{\prime}}$ is delay incurred by the algorithm until time $t$ that the optimum possibly avoided by buying $S_i$ at time $r_j$, and thus the algorithm wishes to minimize this amount. Thus, the request $q_j$ places some "demand" on the algorithm to buy $S_i$. Since this is true for any $q_j \in S_i$, the algorithm chooses the maximum of the demands as the buying function of $S_i$.

This demand $x_i^j (t)$ placed on the algorithm by $q_j$ to buy $S_i$ is related to $\int_{r_j}^{t} D_i^j (t^{\prime})\diff{t^{\prime}}$. If we wanted to make the total buying proportional to $\int_{r_j}^{t} D_i^j (t^{\prime}) \diff{t^{\prime}}$, it would sound reasonable to set $x_i^j (t)$ to be its derivative, namely $D_i^j (t)$. However, this would only be $\Omega(k)$-competitive, as demonstrated in Figure \ref{fig:LinearBuyingBad}. We thus want the total buying to be proportional to an expression exponential in  $\int_{r_j}^{t} D_i^j (t^{\prime}) \diff{t^{\prime}}$, which underlies the definition of $x_i^j(t)$ in our algorithm.

Denoting $X_{i}^{j}(t)=\int_{r_{j}}^{t}x_{i}^{j}(t^{\prime})\diff{t^{\prime}}$,
note that
\begin{equation}
X_{i}^{j}(t)=\frac{1}{k}\cdot\left[e^{\frac{\ln(1+k)}{c(S_{i})}\int_{r_{j}}^{t}D_{i}^{j}(t^{\prime})\diff{t^{\prime}}}-1\right]\label{eq:Integral}
\end{equation}

In the rest of this section, we prove the following theorem.
\begin{thm}
\label{thm:FractionalCompetitiveness}The algorithm for fractional
SCD described above is $O(\log k)$-competitive.
\end{thm}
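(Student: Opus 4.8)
The plan is a dual-fitting argument against a linear-programming relaxation of fractional SCD, split into two parts: (i) bound the delay cost $\text{Cost}_{\ONF}^{d}$ by $\OPT$ using a feasible dual solution read off from the algorithm's own delays, and (ii) bound the buying cost $\text{Cost}_{\ONF}^{p}$ by $O(\log k)$ times $\text{Cost}_{\ONF}^{d}$ using the exponential shape of the buying functions. Together these give $\text{Cost}_{\ONF}=\text{Cost}_{\ONF}^{p}+\text{Cost}_{\ONF}^{d}\le O(\log k)\cdot\text{Cost}_{\ONF}^{d}\le O(\log k)\cdot\OPT$.

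For the LP, I would take as primal variables a rate $x_{i}(t)\ge 0$ for every set and time and an ``unserved fraction'' $z_{j}(t)\ge 0$ for every request $q_{j}$ and time $t\ge r_{j}$, with objective $\sum_{i}c(S_{i})\int_{0}^{\infty}x_{i}(t)\diff{t}+\sum_{j}\int_{r_{j}}^{\infty}d_{j}(t)z_{j}(t)\diff{t}$ and, for every $j$ and every $t\ge r_{j}$, the covering constraint $z_{j}(t)+\sum_{i\mid q_{j}\in S_{i}}\int_{r_{j}}^{t}x_{i}(\sigma)\diff{\sigma}\ge 1$. Any integral SCD solution induces a feasible primal point of the same cost, so the LP optimum is at most $\OPT$. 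Its dual has a variable $y_{j}(t)\ge 0$ per covering constraint, objective $\sum_{j}\int_{r_{j}}^{\infty}y_{j}(t)\diff{t}$, and constraints $y_{j}(t)\le d_{j}(t)$ for all $j,t$ together with $\sum_{j\mid q_{j}\in S_{i},\,r_{j}\le\sigma}\int_{\sigma}^{\infty}y_{j}(t)\diff{t}\le c(S_{i})$ for every set $S_{i}$ and time $\sigma$.

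The delay bound then follows from the dual point $y_{j}:=d_{j}^{\ONF}$, whose objective value is precisely $\text{Cost}_{\ONF}^{d}$. The first dual constraint holds since $d_{j}^{\ONF}(t)\le d_{j}(t)$ by definition. For the second, fix $S_{i}$ and $\sigma$ and let $q_{J}$ be the $\preceq$-maximal request of $S_{i}$ with $r_{J}\le\sigma$; since $\preceq$ refines arrival order, the requests of $S_{i}$ that arrived by time $\sigma$ are exactly those $\preceq q_{J}$, so the left-hand side equals $\int_{\sigma}^{\infty}D_{i}^{J}(t)\diff{t}$. Now Equation~\ref{eq:Integral}, together with $x_{i}\ge x_{i}^{J}$, shows that as soon as $\int_{r_{J}}^{t_{0}}D_{i}^{J}=c(S_{i})$ the algorithm has bought at least one full unit of $S_{i}$ during $[r_{J},t_{0}]$, hence has served every request of $S_{i}$ that arrived by $r_{J}$ --- in particular every request counted in $D_{i}^{J}$ --- so $D_{i}^{J}$ is identically zero from $t_{0}$ on; therefore $\int_{\sigma}^{\infty}D_{i}^{J}\le\int_{r_{J}}^{\infty}D_{i}^{J}\le c(S_{i})$, as needed. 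Weak duality now yields $\text{Cost}_{\ONF}^{d}\le\OPT$.

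The remaining step --- and the part I expect to be the real difficulty --- is $\text{Cost}_{\ONF}^{p}\le O(\log k)\cdot\text{Cost}_{\ONF}^{d}$. Here I would use that, by Equation~\ref{eq:Integral} and the bound $\int_{r_{j}}^{\infty}D_{i}^{j}\le c(S_{i})$ just established, the exponent driving $x_{i}^{j}$ never exceeds $\ln(1+k)$, so for the request $q_{j}$ maximizing $x_{i}^{j}(t)$ one gets $c(S_{i})\,x_{i}(t)\le 2\ln(1+k)\cdot D_{i}^{j}(t)$; the task is then to charge the right-hand side, integrated over time, to $\text{Cost}_{\ONF}^{d}=\sum_{j}\int d_{j}^{\ONF}$. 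The obstacle is that a naive charge loses a factor of $k$, since a request belongs to up to $k$ sets so $\sum_{i}D_{i}^{j^{*}(i,t)}(t)$ can be as large as $k\sum_{j}d_{j}^{\ONF}(t)$; this is exactly why the algorithm takes the maximum rather than the sum of the $x_{i}^{j}$, and recovering only an $O(\log k)$ loss must come from the covering side: for any request $q_{j}$ still pending at time $t$ one has $\sum_{i\mid q_{j}\in S_{i}}X_{i}^{j}(t)\le\gamma_{j}(t)<1$, so $q_{j}$'s delay cannot be pushing all of its sets strongly simultaneously. Carefully combining this observation with the exponential bound is the crux; once it gives $\text{Cost}_{\ONF}^{p}\le O(\log k)\cdot\text{Cost}_{\ONF}^{d}$, the theorem follows as in the first paragraph.
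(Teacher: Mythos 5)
Your first half --- the LP relaxation, the dual point $y_{j}=d_{j}^{\ONF}$, and the verification of the two dual constraint families --- is correct and is essentially the paper's own argument (the paper phrases the bound $\int_{r_{J}}^{\infty}D_{i}^{J}\le c(S_{i})$ via the minimal time $t_{0}$ after which $D_{i}^{J}$ vanishes, you phrase it contrapositively via Equation~\ref{eq:Integral}; these are the same computation). Weak duality then gives $\text{Cost}_{\ONF}^{d}\le\text{Cost}_{\OPT}$, matching Lemma~\ref{lem:DelayChargedToOPT}.

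The second half, however, is where the theorem actually lives, and you have not proved it: you state the obstacle (a naive per-set charge $c(S_{i})x_{i}(t)\le 2\ln(1+k)\,D_{i}^{j_{i}}(t)$ loses a factor of $k$ when summed over the up-to-$k$ sets containing a request) and you correctly guess that the rescue must come from $\gamma_{j}(t)<1$ for pending requests, but ``carefully combining this observation with the exponential bound is the crux'' is precisely the step left open. The missing idea is a per-request redistribution of the buying cost rather than a per-set bound: at each time $t$, split $c(S_{i})x_{i}(t)=c(S_{i})x_{i}^{j_{i}}(t)$ among the requests $q_{j}\preceq q_{j_{i}}$ in $S_{i}$ proportionally to their momentary delays, i.e.\ charge $q_{j}$ the amount $z_{i}^{j}(t)=\frac{\ln(1+k)}{k}\,d_{j}^{\ONF}(t)\,e^{\frac{\ln(1+k)}{c(S_{i})}\int_{r_{j_{i}}}^{t}D_{i}^{j_{i}}(t^{\prime})\diff{t^{\prime}}}$, which exhausts the buying cost because $D_{i}^{j_{i}}(t)$ is exactly the sum of those delays. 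Then, crucially, one does \emph{not} bound the exponential by $1+k$; instead Equation~\ref{eq:Integral} rewrites it exactly as $\frac{1}{k}e^{\cdots}=\frac{1}{k}+X_{i}^{j_{i}}(t)\le\frac{1}{k}+\int_{r_{j}}^{t}x_{i}(t^{\prime})\diff{t^{\prime}}$ (using $x_{i}\ge x_{i}^{j_{i}}$ and $r_{j}\le r_{j_{i}}$), and summing over the sets containing $q_{j}$ gives at most $|T_{j}|/k+\gamma_{j}(t)\le 1+1=2$ whenever $d_{j}^{\ONF}(t)>0$. This yields $\sum_{i\mid q_{j}\in S_{i}}z_{i}^{j}(t)\le 2\ln(1+k)\,d_{j}^{\ONF}(t)$, hence $\text{Cost}_{\ONF}^{p}\le 2\ln(1+k)\cdot\text{Cost}_{\ONF}^{d}$ (Lemma~\ref{lem:BuyingChargedToDelay}). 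Your crude bound $c(S_{i})x_{i}(t)\le 2\ln(1+k)D_{i}^{j_{i}}(t)$ cannot be patched into this: the factor-$k$ loss you identified is real for that bound, and the covering inequality $\sum_{i}X_{i}^{j}(t)<1$ only helps after the buying cost has been re-expressed through the identity above and redistributed to individual requests. As it stands, the proposal establishes only $\text{Cost}_{\ONF}^{d}\le\OPT$ and leaves the $O(\log k)$ factor unproven.
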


We now analyze the algorithm for fractional SCD and prove Theorem \ref{thm:FractionalCompetitiveness}.

\begin{figure}
	\begin{centering}
		\caption{Linear Buying $\Omega(k)$ Example}
		\label{fig:LinearBuyingBad}
		\par\end{centering}
	\begin{raggedright}
		In this figure, there are $k-1$ elements, where each element is contained in $k$ sets of cost 1, one central set (which contains all elements) and $k-1$ peripheral sets (each contains exactly one element). Consider $k-1$ requests, one on each element, all arriving at time $0$. Their delay functions are identical, and go to infinity as time progresses.
		
		Consider an algorithm which buys sets linearly to the delay - that is, $x_i (t)= \max_j D_i^j (t)= \sum_{j|q_j\in S_i} d_j^{\ONF} (t)$. The momentary delay of every request contributes equally to the buying functions of the $k$ containing sets. Thus, the total fraction bought of peripheral sets is exactly $k-1$ times the total fraction bought of the central set. Consider the point in time in which all requests are half-covered (through symmetry, this happens for all requests at the same time, and must happen since the requests gather infinite delay). We have that the central set was bought for a fraction of exactly $\frac{1}{4}$ (which can again be seen through symmetry of the requests and their delay). Thus, the peripheral sets were bought for a fraction of $\frac{k-1}{4}$, for a total of $\frac{k}{4}$. Consider that the optimal solution costs $1$, as the optimum buys the central set at time $0$.
		~\\
		\par\end{raggedright}
	\centering{}\includegraphics{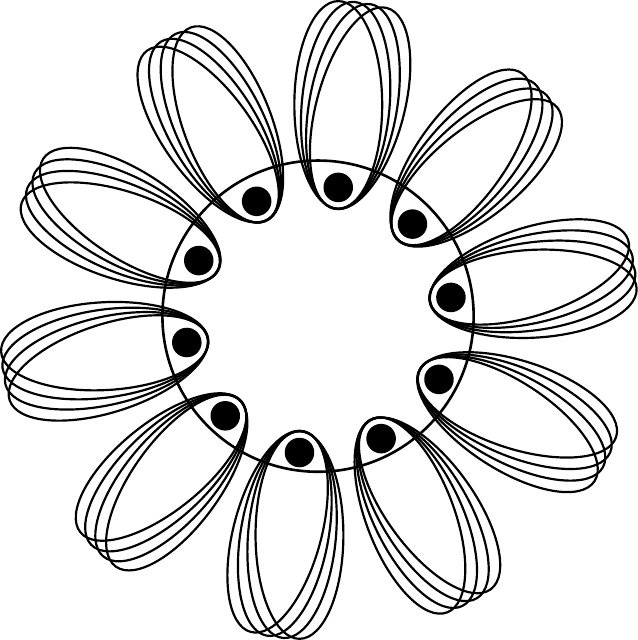}
\end{figure}

\subsection{\label{subsec:BoundingFractionalBuyingCost}Charging Buying Cost
	to Delay}

In this subsection we prove the following lemma.
\begin{lem}
	\label{lem:BuyingChargedToDelay}$\text{Cost}_{\ONF}^{p}\le2\ln(1+k)\cdot\text{Cost}_{\ONF}^{d}$
\end{lem}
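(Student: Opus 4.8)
The plan is to write the momentary buying rate of each set in closed form, reorganize the total buying cost as an integral over time of a sum over requests, and then bound the per-request contribution by a constant. For the \emph{closed form}: combining the definition of $x_{i}^{j}(t)$ with \eqref{eq:Integral} gives $e^{\frac{\ln(1+k)}{c(S_{i})}\int_{r_{j}}^{t}D_{i}^{j}(t^{\prime})\diff{t^{\prime}}}=k\,X_{i}^{j}(t)+1$, and substituting this back into the definition of $x_{i}^{j}(t)$ yields the identity
\[
c(S_{i})\,x_{i}^{j}(t)=\ln(1+k)\cdot D_{i}^{j}(t)\cdot\bigl(X_{i}^{j}(t)+\tfrac{1}{k}\bigr).
\]
For each $i$ and $t$ the maximum $x_{i}(t)=\max_{j}x_{i}^{j}(t)$ is attained at some request $q_{j^{*}(i,t)}\in S_{i}$ (with $x_{i}(t)=0$ if no request in $S_{i}$ has arrived by time $t$), so the same identity holds with $j$ replaced by $j^{*}(i,t)$.

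Next I would \emph{reorganize the cost}: expanding $D_{i}^{j^{*}(i,t)}(t)=\sum_{j\,:\,q_{j}\in S_{i},\,q_{j}\preceq q_{j^{*}(i,t)}}d_{j}^{\ONF}(t)$ and exchanging the order of summation,
\[
\text{Cost}_{\ONF}^{p}=\int_{0}^{\infty}\sum_{i}c(S_{i})\,x_{i}(t)\diff{t}=\ln(1+k)\int_{0}^{\infty}\sum_{j}d_{j}^{\ONF}(t)\Biggl(\sum_{\substack{i\,:\,q_{j}\in S_{i}\\ q_{j}\preceq q_{j^{*}(i,t)}}}\bigl(X_{i}^{j^{*}(i,t)}(t)+\tfrac{1}{k}\bigr)\Biggr)\diff{t}.
\]
It then suffices to show that the inner parenthesized sum is at most $2$ whenever $d_{j}^{\ONF}(t)>0$ (when $d_{j}^{\ONF}(t)=0$ the term vanishes); integrating then gives $\text{Cost}_{\ONF}^{p}\le2\ln(1+k)\sum_{j}\int_{0}^{\infty}d_{j}^{\ONF}(t)\diff{t}=2\ln(1+k)\,\text{Cost}_{\ONF}^{d}$, as claimed.

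For the \emph{key per-request bound}, fix $t$ and $q_{j}$ with $d_{j}^{\ONF}(t)>0$. The $\tfrac{1}{k}$ terms sum to at most $1$, since at most $k$ sets contain $q_{j}$. For the $X$ terms, the crucial observation is that $x_{i}^{j^{\prime}}(t^{\prime})\le x_{i}(t^{\prime})$ for all $t^{\prime}$ by definition of $x_{i}$ as a pointwise maximum, while $q_{j}\preceq q_{j^{*}(i,t)}$ forces $r_{j}\le r_{j^{*}(i,t)}$; hence $X_{i}^{j^{*}(i,t)}(t)=\int_{r_{j^{*}(i,t)}}^{t}x_{i}^{j^{*}(i,t)}(t^{\prime})\diff{t^{\prime}}\le\int_{r_{j}}^{t}x_{i}(t^{\prime})\diff{t^{\prime}}$. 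Summing over all sets containing $q_{j}$,
\[
\sum_{i\,:\,q_{j}\in S_{i}}X_{i}^{j^{*}(i,t)}(t)\le\sum_{i\,:\,q_{j}\in S_{i}}\int_{r_{j}}^{t}x_{i}(t^{\prime})\diff{t^{\prime}}=\gamma_{j}(t),
\]
which is strictly less than $1$ because $q_{j}$ is still pending at time $t$ (as $d_{j}^{\ONF}(t)>0$). Adding the two contributions gives the required bound, which is strictly less than $2$.

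The main obstacle is finding this per-request decomposition together with the comparison $X_{i}^{j^{*}(i,t)}(t)\le\int_{r_{j}}^{t}x_{i}(t^{\prime})\diff{t^{\prime}}$ — in particular, realizing one should compare the current maximizer's covering of $S_{i}$ against $\int_{r_{j}}^{t}x_{i}$ for an \emph{arbitrary} earlier request $q_{j}$ feeding into $D_{i}^{j^{*}(i,t)}$, rather than attempting to order the quantities $X_{i}^{j}(t)$ among themselves (which need not be monotone in $j$). This is precisely where taking the maximum rather than the sum of the $x_{i}^{j}$'s in the algorithm is used, and where the $\tfrac{1}{k}$ normalization supplies the additive slack needed to close the bound at the constant $2$.
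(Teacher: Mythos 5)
Your proposal is correct and follows essentially the same argument as the paper's proof: it charges the buying rate $c(S_{i})x_{i}(t)$ to the momentary delays of the requests feeding into $D_{i}^{j_i}(t)$ for the maximizing request $j_i=j^{*}(i,t)$, uses the identity from Equation \eqref{eq:Integral} to rewrite the exponential as $X_{i}^{j_i}(t)+\tfrac1k$, and bounds the per-request charge by $2$ via $X_{i}^{j_i}(t)\le\int_{r_j}^{t}x_i(t^{\prime})\diff{t^{\prime}}$ (using the pointwise maximum and $q_j\preceq q_{j_i}$), $|\{i:q_j\in S_i\}|\le k$, and $\gamma_j(t)<1$ when $d_j^{\ONF}(t)>0$. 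The only difference is presentational (an explicit exchange of summation order instead of the paper's charging variables $z_{i}^{j}(t)$), so no further comparison is needed.
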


\begin{proof}
	The proof is by charging the momentary buying cost at any time $t$
	to the $2\ln(1+k)$ times the momentary delay incurred by $\ONF$ at
	$t$. Let $q_{j}$ be some request released by time $t$. For every
	$i$ such that $q_{j}\in S_{i}$, we charge some amount $z_{i}^{j}(t)$
	to $d_{j}^{\ONF}(t)$. Denote by $j_{i}$ the request in $S_{i}$ such
	that
	\[
	x_{i}(t)=x_{i}^{j_{i}}(t)
	\]
	
	If $q_{j}\preceq q_{j_{i}}$, we choose
	\[
	z_{i}^{j}(t)=\frac{\ln(1+k)}{k}\cdot d_{j}^{\ONF}(t)\cdot e^{\frac{\ln(1+k)}{c(S_{i})}\int_{r_{j_{i}}}^{t}D_{i}^{j_{i}}(t^{\prime})\diff{t^{\prime}} }
	\]
	Otherwise, we choose $z_{i}^{j}(t)=0$. Note that for every set $S_{i}$
	we have $\sum_{j|q_{j}\in S_{i}}z_{i}^{j}(t)=c(S_{i})\cdot x_{i}(t)$,
	and thus the entire buying cost is charged.
	
	The total buying cost charged to a request $q_{j}$ at time $t$ is
	$\sum_{i|q_{j}\in S_{i}}z_{i}^{j}(t)$. We show that for any $j$
	we have
	\[
	\sum_{i|q_{j}\in S_{i}}z_{i}^{j}(t)\le2\ln(1+k)\cdot d_{j}^{\ONF}(t)
	\]
	
	Summing the previous equation over requests $q_{j}$ and integrating
	over time yields the lemma.
	
	If $d_{j}^{\ONF}(t)=0$ we have $z_{i}^{j}(t)=0$ for every $i$, as
	required. From now on, we assume that $d_{j}^{\ONF}(t)>0$.
	
	Denote by $T_{j}=\{i|q_{j}\in S_{i}\text{ and }z_{i}^{j}>0\}$. We
	have
	\begin{align*}
		\sum_{i|q_{j}\in S_{i}}z_{i}^{j}(t) & =\sum_{i\in T_{j}}z_{i}^{j}(t)\\
		& =\ln(1+k)\cdot d_{j}^{\ONF}(t)\cdot\sum_{i\in T_{j}}\frac{1}{k}\cdot e^{\frac{\ln(1+k)}{c(S_{i})}\int_{r_{j_{i}}}^{t}D_{i}^{j_{i}}(t^{\prime})\diff{t^{\prime}}}
	\end{align*}
	
	Now note that
	\begin{align*}
		\frac{1}{k}\cdot e^{\frac{\ln(1+k)}{c(S_{i})}\int_{r_{j_{i}}}^{t}D_{i}^{j_{i}}(t^{\prime})\diff{t^{\prime}}} & =\frac{1}{k}+X_{i}^{j_{i}}(t)\\
		& \le\frac{1}{k}+\int_{r_{j_{i}}}^{t}x_{i}(t^{\prime})\diff{t^{\prime}}\\
		& \le\frac{1}{k}+\int_{r_{j}}^{t}x_{i}(t^{\prime})\diff{t^{\prime}}
	\end{align*}
	where the equality is due to equation \ref{eq:Integral}, the first
	inequality is due to the definition of $X_{i}^{j_{i}}(t)$ and since
	$x_{i}(t)\ge x_{i}^{j_{i}}(t)$, and the last inequality is due to
	$q_{j}\preceq q_{j_{i}}$.
	
	Thus
	\[
	\sum_{i|q_{j}\in S_{i}}z_{i}^{j}(t)\le\ln(1+k)\cdot d_{j}^{\ONF}(t)\cdot\sum_{i\in T_{j}}\left(\frac{1}{k}+\int_{r_{j}}^{t}x_{i}(t^{\prime})\diff{t^{\prime}}\right)\le2\ln(1+k)\cdot d_{j}^{\ONF}(t)
	\]
	
	where the last inequality follows from $|T_{j}|\le k$, and from $\sum_{i|q_{j}\in S_{i}}\int_{r_{j}}^{t}x_{i}(t^{\prime})\diff{t^{\prime}}\le 1$
	(due to the assumption that $d_{j}^{\ONF}(t)>0$).
\end{proof}

\subsection{Charging Delay to Optimum}

In this subsection, we charge the delay of the algorithm to the optimum
via dual fitting.

\subsubsection{Linear Programming Formulation}

We formulate a linear programming instance for the fractional problem,
and observe its dual instance.

\textbf{Primal}

In the primal instance, the variables are:
\begin{itemize}
	\item $x_{i}(t)$ for a set $S_{i}$ and time $t$, which is the fraction
	by which the algorithm buys $S_{i}$ at time $t$.
	\item $p_{j}(t)$ for a request $q_{j}$ and time $t \ge r_j$, which is the fraction
	of $q_{j}$ not covered by bought sets at time $t$.
\end{itemize}
The LP instance is therefore:

Minimize:
\[
\sum_{i}\int_{0}^{\infty}c(S_{i})\cdot x_{i}(t)\diff{t}+\sum_{j}\int_{r_{j}}^{\infty}p_{j}(t)\cdot d_{j}(t)\diff{t}
\]
under the constraints:
\[
\forall j,t\ge r_j:\,p_{j}(t)+\sum_{i|q_{j}\in S_{i}}\int_{r_{j}}^{t}x_{i}(t^{\prime})\diff{t^{\prime}}\ge1
\]
\begin{align*}
	p_{j}(t)\ge0\,,\,x_{i}(t)\ge0
\end{align*}

\textbf{Dual}

Maximize:
\[
\sum_{j}\int_{r_j}^{\infty}y_{j}(t)\diff{t}
\]
under the constraints:
\begin{equation}
	\forall i,t:\,\sum_{j|q_{j}\in S_{i}\wedge r_{j}\le t}\int_{t}^{\infty}y_{j}(t^{\prime})\diff{t^{\prime}}\le c(S_{i})\text{\tag{C1}}\label{eq:C1}
\end{equation}
\begin{equation}
	\forall j,t\ge r_j:\,y_{j}(t)\le d_{j}(t)\tag{C2}\label{eq:C2}
\end{equation}
\[
y_{j}(t)\ge0
\]

\begin{rem}
	As we chose to consider time as continuous, the linear program described here has an infinite number of variables and constraints. This is merely a choice of presentation, as discretizing time would yield a standard, finite LP. Nevertheless, weak duality for this infinite LP (the only duality property used in this paper) holds (see e.g. \cite{ReilandWeakDuality}).
\end{rem}

\subsubsection{Charging Delay to Optimum via Dual Fitting }

We now charge the delay of the fractional algorithm to the cost of
the optimum.
\begin{lem}
	\label{lem:DelayChargedToOPT}$\text{Cost}_{\ONF}^{d}\le\text{Cost}_{OPT}$
\end{lem}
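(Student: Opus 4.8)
The plan is to exhibit a feasible solution to the dual LP whose objective equals $\text{Cost}_{\ONF}^{d}$; since the primal LP is a relaxation of fractional SCD, weak duality then gives $\text{Cost}_{\ONF}^{d} \le \text{Cost}_{\OPT}$. The natural candidate for the dual variable is to take $y_j(t) = d_j^{\ONF}(t)$, i.e. the momentary delay the algorithm \emph{actually} incurs on $q_j$ at time $t$. With this choice the dual objective $\sum_j \int_{r_j}^\infty y_j(t)\diff{t}$ is exactly $\text{Cost}_{\ONF}^{d}$ by definition, so the whole content of the lemma is the \emph{feasibility} of this assignment.

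Feasibility of (C2) is immediate: $y_j(t) = d_j^{\ONF}(t) = d_j(t)\cdot(1-\gamma_j(t)) \le d_j(t)$ when $\gamma_j(t)<1$, and $y_j(t)=0 \le d_j(t)$ otherwise (using that delay functions are nonnegative). So the real work is constraint (C1): for every set $S_i$ and every time $t$,
\[
\sum_{j \mid q_j \in S_i \wedge r_j \le t} \int_t^\infty d_{j'}^{\ONF}(t')\diff{t'} \le c(S_i).
\]
Here is where the structure of $\ONF$ — and in particular the choice of \emph{maximum} in step 4 — enters. Fix $S_i$ and $t$, and let $q_j$ be the $\preceq$-largest request in $S_i$ with $r_j \le t$ (if there is none, the left side is $0$ and we are done). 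Then for every $j'$ appearing in the sum we have $q_{j'} \preceq q_j$ and $r_{j'} \le r_j \le t$, so $D_i^j(s) \ge \sum_{j' : q_{j'}\in S_i, q_{j'}\preceq q_j} d_{j'}^{\ONF}(s) = \sum_{j'} d_{j'}^{\ONF}(s)$ for all $s \ge t$ — i.e. $D_i^j(s)$ dominates the integrand's sum from time $t$ onward. Moreover $x_i(s) \ge x_i^j(s)$ for all $s$, so $\int_{r_j}^s x_i(s')\diff{s'} \ge X_i^j(s)$, which by equation \eqref{eq:Integral} equals $\frac1k\bigl(e^{\frac{\ln(1+k)}{c(S_i)}\int_{r_j}^{s} D_i^j(s'')\diff{s''}} - 1\bigr)$. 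The key point is that this total coverage of $q_j$ by $S_i$ alone is capped: once $\gamma_j(s) \ge 1$ the request is served and contributes no further delay, and in fact for the left-hand side of (C1) to be positive we need requests to still be pending past time $t$.

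The main obstacle, and the heart of the proof, is turning the above domination into the bound $c(S_i)$. I expect the argument to run as follows: restrict attention to times $s \ge t$ at which $q_j$ is still pending in $\ONF$ (only those contribute to both sides, since $D_i^j$ counts only delays of pending requests and the sum in (C1) likewise). On that range, let $A(s) = \int_t^{s} D_i^j(s'')\diff{s''}$; then $\sum_{j'} \int_t^\infty d_{j'}^{\ONF} \le \int_t^\infty D_i^j(s)\diff{s} = \lim_{s} A(s)$. On the other hand $\int_{r_j}^s x_i(s')\diff{s'} \le 1$ as long as $q_j$ is pending (once it exceeds $1$, $q_j$ is covered), and combining with $\int_{r_j}^s x_i \ge \frac1k(e^{\frac{\ln(1+k)}{c(S_i)} B(s)} - 1)$ where $B(s) = \int_{r_j}^s D_i^j \ge A(s) - (\text{const depending on } [r_j,t])$... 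The cleaner route, which I would adopt, is: since $q_j$ being pending forces $\frac1k(e^{\frac{\ln(1+k)}{c(S_i)}\int_{r_j}^{s} D_i^j} - 1) \le \int_{r_j}^s x_i \le 1$, we get $\int_{r_j}^s D_i^j(s'')\diff{s''} \le \frac{c(S_i)}{\ln(1+k)}\ln(1+k) = c(S_i)$ for all such $s$; hence $\int_t^\infty D_i^j(s)\diff{s} \le \int_{r_j}^\infty D_i^j(s)\diff{s} \le c(S_i)$ (taking the supremum over pending times, and noting $D_i^j \equiv 0$ once $q_j$ is served since then $d_{j'}^{\ONF} = 0$ for all $j' \preceq q_j$ in $S_i$ — this last monotonicity claim needs checking, as later requests $q_{j'} \prec q_j$ could still be pending even after $q_j$ is covered; I would handle this by replacing "once $q_j$ served" with a careful accounting that the relevant $j$ is always chosen as the current $\preceq$-maximum pending request, possibly splitting the time axis at the service times of successive such maxima and summing a telescoping bound). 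Assembling: $\sum_{j' \mid q_{j'}\in S_i, r_{j'}\le t}\int_t^\infty d_{j'}^{\ONF}(t')\diff{t'} \le \int_t^\infty D_i^j(s)\diff{s} \le c(S_i)$, establishing (C1), and the lemma follows by weak duality.
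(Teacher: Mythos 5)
Your overall route is the same as the paper's: dual fitting with $y_j(t)=d_j^{\ONF}(t)$, reduction of the (C1) constraints to those at times $r_j$, and the identity in equation \ref{eq:Integral} together with $x_i(t)\ge x_i^j(t)$. But there is a genuine gap exactly at the step you flag yourself. Your bound $\int_{r_j}^{s}D_i^j(t')\diff{t'}\le c(S_i)$ is derived only for times $s$ at which $q_j$ is still pending (pendingness is what licenses $\int_{r_j}^{s}x_i(t')\diff{t'}\le 1$ via $\gamma_j(s)\le 1$). Once $q_j$ is served, $D_i^j$ need not vanish: the earlier requests $q_{j'}\prec q_j$ in $S_i$ sit on other elements, may be covered only by different sets, and can keep accruing delay, so $\int_{r_j}^{\infty}D_i^j(t)\diff{t}$ is not controlled by what you proved. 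The patch you sketch --- tracking the current $\preceq$-maximal pending request and ``summing a telescoping bound'' over intervals between service times of successive maxima --- is not an argument as stated: applying a per-maximum bound of $c(S_i)$ on each interval and summing gives a multiple of $c(S_i)$, not $c(S_i)$, and the sketch does not show how the terms would telescope.

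The correct fix is simpler and does not use pendingness of $q_j$ at all (this is what the paper does). Let $t_0$ be the minimal time such that $D_i^j(t)=0$ for all $t\ge t_0$. Then $\int_{r_j}^{t_0}x_i(t)\diff{t}\le 1$: if the coverage \emph{through $S_i$ alone}, measured from $r_j$, exceeded $1$ at some time before $t_0$, then every request $q_{j'}\in S_i$ with $q_{j'}\preceq q_j$ would satisfy $\gamma_{j'}(t)\ge\int_{r_{j'}}^{t}x_i(t')\diff{t'}\ge\int_{r_j}^{t}x_i(t')\diff{t'}>1$ (using $r_{j'}\le r_j$), so all of them would already be served by $S_i$'s own buying, forcing $D_i^j$ to vanish earlier and contradicting the minimality of $t_0$. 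Combining $1\ge\int_{r_j}^{t_0}x_i(t)\diff{t}\ge\int_{r_j}^{t_0}x_i^j(t)\diff{t}$ with equation \ref{eq:Integral} yields $\int_{r_j}^{\infty}D_i^j(t)\diff{t}=\int_{r_j}^{t_0}D_i^j(t)\diff{t}\le c(S_i)$, which is exactly the missing piece; with it, the rest of your write-up goes through.
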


\begin{proof}
	The proof is by finding a solution to the dual problem, such that
	the goal function value of the solution is equal to the delay of the
	algorithm.
	
	For every request $q_{j}$ and time $t$, we assign $y_{j}(t)=d_{j}^{\ONF}(t)$.
	This assignment satisfies that the goal function is the total delay
	incurred by the algorithm.
	
	Note that the \ref{eq:C2} constraints trivially hold, since $d_{j}^{\ONF}(t)\le d_{j}(t)$
	for any $j,t$. Now observe the \ref{eq:C1} constraints. For any
	time $t$ and a set $S_{i}$, the resulting \ref{eq:C1} constraint
	is implied by the \ref{eq:C1} constraint of time $r_{j}$ and the
	set $S_{i}$, with $q_{j}$ being the last request released by time
	$t$. We thus restrict ourselves to \ref{eq:C1} constraints of time
	$r_{j}$ for some $j$.
	
	For a request $q_{j}$ and a set $S_{i}$, we need to show:
	\[
	\sum_{j^{\prime}|q_{j^{\prime}}\in S_{i}\wedge q_{j^{\prime}}\preceq q_{j}}\int_{r_{j}}^{\infty}d_{j^{\prime}}^{\ONF}(t^{\prime})\diff{t^{\prime}}\le c(S_{i})
	\]
	Using the definition of $D_{i}^{j}(t)$, we need to show:
	\[
	\int_{r_{j}}^{\infty}D_{i}^{j}(t)\diff{t}\le c(S_{i})
	\]
	Define $t_{0}$ to be the minimal time (possibly $\infty$) such that
	for all $t\ge t_{0}$ we have $D_{i}^{j}(t)=0$. We must have that
	$\int_{r_{j}}^{t_{0}}x_{i}(t)\diff{t}\le1$; otherwise, all requests $q_{j^{\prime}}\in S_{i}$
	such that $q_{j^{\prime}}\preceq q_{j}$ will be completed before
	$t_{0}$, in contradiction to $t_{0}$'s minimality. Thus we have
	\begin{align*}
		1 & \ge\int_{r_{j}}^{t_{0}}x_{i}(t)\diff{t}\ge\int_{r_{j}}^{t_{0}}x_{i}^{j}(t)\diff{t}\\
		& =\frac{1}{k}\left[e^{\frac{\ln(1+k)}{c(S_{i})}\int_{r_{j}}^{t_{0}}D_{i}^{j}(t)\diff{t}}-1\right]
	\end{align*}
	where the second inequality is due to the definition of $x_{i}(t)$,
	and the equality is due to equation \ref{eq:Integral}. This yields
	\[
	(1+k)^{\frac{1}{c(S_{i})}\int_{r_{j}}^{t_{0}}D_{i}^{j}(t)\diff{t}}\le1+k
	\]
	and thus
	\[
	\int_{r_{j}}^{\infty}D_{i}^{j}(t)\diff{t}=\int_{r_{j}}^{t_{0}}D_{i}^{j}(t)\diff{t}\le c(S_{i})
	\]
	as required.
\end{proof}
We can now prove the main theorem.
\begin{proof}
	(of Theorem \ref{thm:FractionalCompetitiveness}) Using Lemmas \ref{lem:BuyingChargedToDelay}
	and \ref{lem:DelayChargedToOPT}, we have
	\begin{align*}
		\text{Cost}_{\ONF} & =\text{Cost}_{\ONF}^{p}+\text{Cost}_{\ONF}^{d}\\
		& \le(2\ln(1+k)+1)\cdot\text{Cost}_{\ONF}^{d}\\
		& \le(2\ln(1+k)+1)\cdot\text{Cost}_{OPT}
	\end{align*}
	
	as required.
\end{proof}
\begin{rem}
	For the more difficult delay model in which a partially served request
	$q_{j}$ incurs delay $d_{j}^{\ONF}(t)=d_{j}(t)$ instead of $d_{j}^{\ONF}(t)=d_{j}(t)\cdot(1-\gamma_{j}(t))$
	in $\ONF$, this algorithm is still $O(\log k)$ competitive against
	the fractional optimum in the easier delay model. The proof is identical.
\end{rem}

\section{\label{sec:RoundingAlgo}Randomized Algorithm for SCD by Rounding}

In this section, we describe a non-clairvoyant, polynomial-time randomized algorithm which is $O(\log k\cdot\log n)$-competitive
for integral SCD. Our randomized algorithm uses randomized rounding
of the fractional algorithm of Section \ref{sec:FractionalAlgorithm}.
We describe the rounding in two steps. First, we show a somewhat simpler
algorithm which is $O(\log k\cdot\log N)$-competitive. We then modify
this algorithm to obtain a $O(\log k\cdot\log n)$-competitive algorithm.

The rounding of the fractional algorithm of section \ref{sec:FractionalAlgorithm}
costs the randomized integral algorithm of this section a multiplicative
factor of $\log n$ over that fractional algorithm.

Denote by $x_{i}(t)$ the fractional buying function in the algorithm
$\ONF$ of Section \ref{sec:FractionalAlgorithm}. For a request $q_{j}$,
we denote by $S_{i_{j}}$ the least expensive set containing $q_{j}$;
that is, $i_{j}=\arg\min_{i|q_{j}\in S_{i}}c(S_{i})$.

For every request $q_{j}$, we denote the total covering of $q_{j}$
at time $t$ in $\ONF$ by $\gamma_{j}(t)$, where
\[
\gamma_{j}(t)=\sum_{i|q_{j}\in S_{i}}\int_{r_{j}}^{t}x_{i}(t^{\prime})\diff{t^{\prime}}
\]

We denote by $t_{j}$ the first time in which $\gamma_{j}(t)=\frac{1}{2}$.
\subsection*{\label{subsec:RequestRoundingAlgo}$O(\log k\cdot\log N)$-Competitive
Rounding}

We now describe a randomized integral algorithm, called $\ONR$, which is $O(\log k\cdot\log N)$
competitive with respect to the fractional optimum, with $N$ the
total number of requests. We assume a-priori knowledge of $N$ for
the algorithm.

The randomized integral algorithm runs the fractional algorithm of
Section \ref{sec:FractionalAlgorithm} in the background, and thus
has knowledge of the function $x_{i}(t)$ for every $i$. The algorithm
does the following.

\noindent\fbox{\begin{minipage}[t]{1\columnwidth - 2\fboxsep - 2\fboxrule}%
\begin{enumerate}
\item At time $0$:
\begin{enumerate}
\item For every set $S_{i}$, choose $\Lambda_{i}$ from the range $[0,\frac{1}{2\ln N}]$
uniformly and independently, and set $\tau_{i}=0$.
\end{enumerate}
\item At time $t$:
\begin{enumerate}
\item \label{enu:TypeA}For every $i$, if $\int_{\tau_{i}}^{t}x_{i}(t^{\prime})\diff{t^{\prime}}\ge\Lambda_{i}$
then:
\begin{enumerate}
\item Buy $S_{i}$.
\item Assign to $\Lambda_{i}$ a new value drawn independently and uniformly
from $[0,\frac{1}{2\ln N}]$.
\item Assign $\tau_{i}=t$.
\end{enumerate}
\item \label{enu:TypeB}If there exists a pending request $q_{j}$ such
that $t\ge t_{j}$, buy $S_{i_{j}}$.
\end{enumerate}
\end{enumerate}
\end{minipage}}

We refer to the buying of sets at Step \ref{enu:TypeA} as ``type
a'', and to the buying of sets at Step \ref{enu:TypeB} as ``type
b''.

The intuition for the randomized rounding scheme is that we would like the probability of buying a set in a certain interval of time to be proportional to the fraction of that set bought by the fractional algorithm in that interval, independently of the other sets. This is achieved by the "type a" buying. However, using "type a" alone is problematic. Consider, for example, a request on an element in $k$ sets, such that the fractional algorithm buys $\frac{1}{k}$ of each of the sets to cover the request. Since the probability of buying a set is independent of other sets, there exists a probability that the randomized algorithm would not buy any of the $k$ sets, leaving the request unserved. This bears possibly infinite delay cost for the rounding algorithm, which is not incurred by the underlying fractional algorithm.

The "type b" buying solves this problem, by serving a pending request \emph{deterministically} when it is covered in the underlying fractional algorithm, through buying the cheapest set containing that request. This special service for the request might be expensive, but its probability is low, yielding low expected cost. This is ensured by the $2\log N$ "speedup" given to the "type a" buying, through choosing the thresholds $\Lambda _i$ from $[0,\frac{1}{2\ln N}]$ (rather than $[0,1]$).

\begin{thm}
\label{thm:RequestRandomizedCompetitive}The randomized algorithm
for SCD described above is $O(\log k\cdot\log N)$-competitive.
\end{thm}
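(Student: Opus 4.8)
The plan is to bound $\E[\text{Cost}_{\ONR}]$ in three pieces: the type-a buying cost, the type-b buying cost, and the delay cost, and to charge each of these to $O(\log N)$ times the cost of the fractional algorithm $\ONF$, which by Theorem \ref{thm:FractionalCompetitiveness} is $O(\log k)\cdot\text{Cost}_{\OPT}$. First I would handle the \textbf{type-a buying cost}. Fix a set $S_i$. Over the timeline the algorithm resamples the threshold $\Lambda_i$ uniformly from $[0,\frac{1}{2\ln N}]$ each time $S_i$ is bought of this type, so the amount of fractional mass $\int x_i$ accumulated between consecutive type-a purchases of $S_i$ is an i.i.d.\ sequence of $\mathrm{Unif}[0,\frac{1}{2\ln N}]$ variables, with mean $\frac{1}{4\ln N}$. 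By Wald's equation (as the paper's techniques section flags), the expected number of type-a purchases of $S_i$ is at most $2\ln N\cdot\big(2\int_0^\infty x_i(t)\diff t\big)+1$ — the $+1$ for the final incomplete interval. Multiplying by $c(S_i)$ and summing over $i$ gives $\E[\text{type-a cost}]\le 4\ln N\cdot\text{Cost}_{\ONF}^p + \sum_i c(S_i)$; the trailing $\sum_i c(S_i)$ is absorbed because any set contributing to the problem must have been "touched" by a request and is comparable to $\OPT$ (or one argues only sets with $\int x_i>0$ can be bought type-a).

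Next, the \textbf{type-b cost and the delay}. The key observation is that a request $q_j$ is served by time $t_j$ in $\ONR$ no matter what: either some containing set was already bought (type a), or at time $t_j$ the type-b rule fires and buys $S_{i_j}$ at cost $c(S_{i_j})=\min_{i\mid q_j\in S_i}c(S_i)$. So the delay of $q_j$ in $\ONR$ is at most its delay up to time $t_j$, which is exactly the delay $q_j$ would incur in a fractional schedule that only half-covers it — comparable to $2\cdot\int_{r_j}^{\infty}d_j^{\ONF}(t)\diff t$ using that $d_j^{\ONF}(t)=d_j(t)(1-\gamma_j(t))\ge d_j(t)/2$ for $t<t_j$. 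Hence $\text{Cost}_{\ONR}^d\le 2\,\text{Cost}_{\ONF}^d$ deterministically. For the type-b cost, I would show that $q_j$ triggers a type-b purchase only if, at time $t_j$, \emph{none} of the sets $S_i\ni q_j$ has been bought type-a since $r_j$. For each such $i$, the probability that $S_i$ was not bought in $[r_j,t_j]$ given it accrued fractional mass $\int_{r_j}^{t_j}x_i$ is, by the memoryless-across-resamples structure of the uniform threshold, at most $\max\{0,1-2\ln N\cdot\int_{r_j}^{t_j}x_i(t')\diff t'\}\le e^{-2\ln N\cdot\int_{r_j}^{t_j}x_i}$. Since these events are independent across $i$ and $\sum_{i\mid q_j\in S_i}\int_{r_j}^{t_j}x_i(t')\diff t'=\gamma_j(t_j)=\frac12$, the probability that $q_j$ triggers type b is at most $e^{-2\ln N\cdot\frac12}=1/N$. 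Therefore $\E[\text{type-b cost}]\le\sum_j \frac1N\, c(S_{i_j})\le \max_j c(S_{i_j})$, which is at most $\text{Cost}_{\OPT}$ (the optimum must buy a containing set for $q_j$, or pay infinite delay — barring the prize-collecting case, where the penalty bounds it).

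Combining: $\E[\text{Cost}_{\ONR}]\le 4\ln N\cdot\text{Cost}_{\ONF}^p + 2\,\text{Cost}_{\ONF}^d + O(\text{Cost}_{\OPT}) = O(\log N)\cdot\text{Cost}_{\ONF}+O(\text{Cost}_{\OPT}) = O(\log k\cdot\log N)\cdot\text{Cost}_{\OPT}$, which is the claim. The \textbf{main obstacle} I anticipate is making the type-b probability bound fully rigorous: because a set $S_i$ can be bought an unbounded number of times and the threshold is resampled each time, one must carefully argue that conditioned on the fractional mass $\int_{r_j}^{t_j}x_i$, the non-purchase probability is genuinely bounded by $\max\{0,1-2\ln N\int_{r_j}^{t_j}x_i\}$ — this requires handling the "leftover" partial interval straddling $r_j$ (the mass $\int_{\tau_i}^{r_j}x_i$ already accumulated toward the current threshold), and invoking that conditioning on this residual only helps. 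A secondary subtlety is keeping the type-a/type-b events' independence clean when applying Wald's equation, since the stopping time (end of horizon) is determined by the whole instance; but since the threshold sequence for each $S_i$ is independent of everything else, Wald applies to each set's purchase-count separately.
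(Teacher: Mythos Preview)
Your decomposition into type-a cost, type-b cost, and delay, and the charging of each to $O(\log N)\cdot\text{Cost}_{\ONF}$, is exactly the paper's approach. Two places where your argument is loose compared to the paper deserve attention.

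\textbf{Type-a cost.} Your Wald computation leaves a trailing additive $\sum_i c(S_i)$, and your justification for absorbing it (``touched by a request'', or ``only sets with $\int x_i>0$'') does not actually bound this sum. The paper avoids the $+1$ entirely by a small trick: apply Wald to the sum $\sum_{l=1}^{s+1}\Lambda_i^l$ (including the \emph{unrevealed} next threshold), so that
\[
\E\Bigl[\sum_{l=1}^{s+1}\Lambda_i^l\Bigr]=\E[s+1]\cdot\frac{1}{4\ln N}\ge\frac{\E[s]}{4\ln N}+\frac{1}{4\ln N},
\]
while on the other hand $\sum_{l=1}^{s}\Lambda_i^l\le\int_0^\infty x_i$ and $\E[\Lambda_i^{s+1}]\le\frac{1}{4\ln N}$. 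The two $\frac{1}{4\ln N}$ terms cancel, giving $\E[s]\le 4\ln N\int_0^\infty x_i$ cleanly with no additive residue.

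\textbf{Type-b cost.} Your bound $\max_j c(S_{i_j})\le\text{Cost}_{\OPT}$ is not justified in the general (prize-collecting) delay model the paper works in: the optimum need not serve every request, so it need not pay $c(S_{i_j})$ for any particular $j$. The paper instead charges this to $\text{Cost}_{\ONF}$: if $q_{j^\ast}$ (the request maximizing $c(S_{i_j})$ among those half-covered) triggers type-b, then $\ONF$ has already bought at least $\tfrac12$ total fractional mass of sets containing $q_{j^\ast}$, each costing at least $c(S_{i_{j^\ast}})$, so $\text{Cost}_{\ONF}\ge\tfrac12 c(S_{i_{j^\ast}})$. This yields $\E[\text{Cost}_{\ONR}^b]\le 2\,\text{Cost}_{\ONF}$ without any appeal to $\OPT$.

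Your handling of the residual interval in the type-b probability bound (the mass already accumulated toward the current threshold at time $r_j$) is correct in spirit; the paper makes this explicit by conditioning on $\int_{t'}^{r_j}x_i\le\Lambda_i^l$ and observing the conditional law of $\Lambda_i^l$ is uniform on a subinterval of $[0,\frac{1}{2\ln N}]$, which only helps.
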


The proof of Theorem \ref{thm:RequestRandomizedCompetitive} is given
in Appendix \ref{subsecAppendix:RequestRoundingProof}.

\subsection*{\label{subsec:ElementRoundingAlgo}Improved $O(\log k\cdot\log n)$-Competitive
Rounding}

By modifying the $O(\log k\cdot\log N)$-competitive randomized rounding, we prove the following theorem.

\begin{thm}
	\label{thm:ElementRandomizedCompetitive}There exists a non-clairvoyant, randomized $O(\log k\cdot\log n)$-competitive algorithm for SCD.
\end{thm}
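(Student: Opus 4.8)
The plan is to obtain the $O(\log k \cdot \log n)$ bound by adding a \emph{bottom level} of rounding underneath the algorithm $\ONR$ of Theorem~\ref{thm:RequestRandomizedCompetitive}: aggregate, for each element separately, the requests that arrive ``close together in time'' into a single super-request, feed only these super-requests to $\ONF$ and $\ONR$, and then, whenever a super-request is served, serve all the original requests that it represents. Concretely, for each element $e$ I would partition the timeline into phases for $e$, where a new phase starts at the first moment the total \emph{delay accumulated by the currently pending original requests on $e$} reaches $1$ (or more precisely, the first time the cost of serving all of them by buying $S_{i_j}$ would be dominated by the delay already paid). The super-request for a phase is born at the start of the phase with a momentary delay function equal to the aggregate of the momentary delays of the original requests active in that phase; it is handed to $\ONF$/$\ONR$ as a single request. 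This is exactly the classical ``aggregate requests until their delay equals the cheapest covering cost'' trick, adapted to be non-clairvoyant (the phase boundary is detected online, since we only need to watch accumulated delay).

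The analysis then has three ingredients. First, the delay lost by aggregation is bounded: within a phase of element $e$, before the super-request is born we pay at most the ``opening cost'' (order of the cheapest set containing $e$, which is $\geq 1$) in original delay that the super-request does not see; charging this to the eventual service of the super-request, which costs at least $1$, loses only a constant factor. After the super-request is born it carries the full aggregate delay, so $\ONF$/$\ONR$ see an instance whose optimum is within a constant factor of the true optimum, and whose delay cost upper-bounds (up to a constant) the true delay cost. Second, the number of \emph{distinct super-requests per element} is at most the number of phases, and I would argue the competitive guarantee of $\ONR$, when applied to the aggregated instance, can be stated with $\log$ of the number of \emph{distinct elements touched} ($\le n$) in place of $\log$ of the number of requests ($N$): this is where choosing the thresholds $\Lambda_i$ from $[0, \tfrac{1}{2\ln n}]$ instead of $[0,\tfrac{1}{2\ln N}]$ comes in. The point is that the ``type b'' special-service events, summed over a single element, telescope over its phases so that only an $O(\log n)$ factor (not $O(\log N)$) is needed to absorb their expected cost; the super-requests on one element behave, for the purpose of the union-bound / Wald's-equation argument in the proof of Theorem~\ref{thm:RequestRandomizedCompetitive}, like a single ``aggregated'' request whose delay is spread over time. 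Third, combining: buying cost of the aggregated $\ONR$ is $O(\log k \cdot \log n)$ times its delay by the analogues of Lemma~\ref{lem:BuyingChargedToDelay} and the rounding overhead, its delay is $O(1)$ times $\OPT$ by Lemma~\ref{lem:DelayChargedToOPT} applied to the aggregated instance, and the extra delay/service incurred by the bottom level is $O(1) \cdot \OPT$; summing gives the claimed $O(\log k \cdot \log n)$.

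I would carry out the steps in this order: (i) define the per-element phase partition and the super-requests precisely, and verify it is non-clairvoyant and polynomial-time; (ii) show that replacing the real instance by the aggregated one changes $\OPT$ by only a constant factor, and that the real delay plus real service cost is bounded by a constant times (aggregated delay) $+$ (aggregated service cost) $+$ (sum of phase-opening costs), with the last term charged to future service; (iii) re-run the $\ONR$ analysis on the aggregated instance with threshold range scaled by $\ln n$, showing the resulting bound is $O(\log k \cdot \log n)\cdot\OPT$; (iv) assemble the pieces. The main obstacle I anticipate is step (iii): specifically, making rigorous the claim that the expected cost of the deterministic ``type b'' service events, over all phases of a fixed element, is controlled by an $O(\log n)$ rather than $O(\log N)$ speedup. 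The subtlety is that different phases of the same element produce different super-requests at different times, so they are not literally ``one request''; one has to argue that across a single element the bad events are essentially disjoint in time (each phase contributes $O(1)$ expected special-service cost once the $\ln n$ speedup is in place), so that the union bound over $n$ elements — rather than $N$ requests — suffices. Getting the stopping-time/Wald argument of Theorem~\ref{thm:RequestRandomizedCompetitive} to respect this per-element grouping, and checking that the aggregation does not interfere with the independence of the thresholds $\Lambda_i$, is the technical heart of the proof.
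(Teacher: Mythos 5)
Your high-level plan (a per-element ``bottom level'' that aggregates requests into phases, plus thresholds drawn from $[0,\tfrac{1}{2\ln n}]$) matches the paper's intent, but the concrete mechanism you propose diverges at exactly the point you yourself flag as the main obstacle, and that point is a genuine gap. In the paper, the fractional algorithm $\ONF$ runs on the \emph{original, unaggregated} instance; the aggregation lives only inside the rounding, and the phase boundaries are \emph{threshold times} defined by the fractional coverage of the element: $t^e_l$ is the first time at which $\int_0^{t^e_l}\sum_{i\mid e\in S_i}x_i(t)\diff{t}=\tfrac{l}{4}$, and the group $R^e_l$ consists of the requests on $e$ arriving in $[t^e_l,t^e_{l+1})$, with special (``type b'') service of the cheapest set containing $e$ triggered only if $R^e_l$ is still unserved at $t^e_{l+3}$. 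This choice does two jobs at once: (a) between $t^e_{l+1}$ and $t^e_{l+3}$ exactly half a unit of fractional coverage of $e$ accumulates, so with the $\ln n$ speedup each group triggers special service with probability at most $\tfrac1n$; and (b) the number of groups $s_e$ is tied to the fractional cost via $\text{Cost}_{\ONF}\ge\tfrac14\,s_e\,c(S_e)$ (Proposition \ref{prop:FractionalHigh}), so $\sum_e \tfrac1n s_e c(S_e)\le s_{e^*}c(S_{e^*})\le 4\,\text{Cost}_{\ONF}$ and the expected special-service cost telescopes against $\ONF$'s buying cost. Your delay-based phases (open a phase when pending delay on $e$ reaches $1$ or $c(S_{i_j})$) provide neither property automatically: the number of such phases per element is not bounded by the fractional buying attributable to $e$, so ``each phase contributes $O(1)$ expected special-service cost'' does not sum to $O(\text{Cost}_{\ONF})$ without precisely the charging argument you leave open; ``the bad events are disjoint in time'' is not the relevant statement.

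A second problem is created by feeding the aggregated instance to $\ONF$/$\ONR$ rather than only to the rounding rule. A super-request born at the phase start can be served (by a type-a purchase) before some of its constituent original requests have even arrived; in the true instance buying a set serves only pending requests, so ``serve all the original requests it represents'' is not implementable for those later arrivals -- they must be re-served or keep accruing delay, and this cost is not accounted for. Relatedly, the claim that the aggregated instance's optimum is within a constant factor of the true optimum needs proof and is delicate for the same reason (requests are effectively moved earlier, and their delay is re-attributed). The paper avoids both issues entirely by never changing the instance seen by $\ONF$: the analysis only needs Proposition \ref{prop:RatherUntouched} ($\gamma_j(t^e_{l+1})\le\tfrac14$ for $q_j\in R^e_l$) to conclude that any request pending in the rounding is at most $\tfrac34$-covered, giving $\text{Cost}^d_{\ONR}\le 4\,\text{Cost}_{\ONF}$, with no comparison between two different instances. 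To repair your write-up you would essentially have to replace delay-based phases by coverage-based threshold times and drop the instance modification, i.e., adopt the paper's construction.
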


The modified rounding algorithm and its analysis appear in Appendix \ref{subsecAppendix:ElementRoundingProof}.

\section{\label{sec:LowerBounds}Lower Bounds for Clairvoyant SCD}

\global\long\def\S{H}%

\global\long\def\T{\mathcal{T}}%

In this section, we show $\Omega(\sqrt{\log k})$ and $\Omega(\sqrt{\log n})$
lower bounds on competitiveness for any randomized, clairvoyant algorithm
for SCD or fractional SCD. While the lower bounds use instances in
which different sets can have different costs, these instances can
be modified to obtain instances with identical set costs. This implies
that the lower bounds also apply to the unweighted setting. This modification
is shown in Subsection \ref{subsec:LowerBoundUnweighted}.

This section shows the following theorem.
\begin{thm}
\label{thm:LowerBoundTheorem}Any randomized algorithm for SCD or
fractional SCD is both $\Omega(\sqrt{\log k})$-competitive and $\Omega(\sqrt{\log n})$-competitive.
\end{thm}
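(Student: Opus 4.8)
The plan is to build a recursive family of instances $\T_\ell$ such that any randomized algorithm has competitive ratio $\Omega(\ell)$ on $\T_\ell$, while $\T_\ell$ uses roughly $2^{\Theta(\ell^2)}$ sets and $2^{\Theta(\ell^2)}$ elements; taking $\ell = \Theta(\sqrt{\log m})$ and $\ell = \Theta(\sqrt{\log n})$ then yields the two claimed bounds. The amplification step is the heart of the construction. Suppose $\T_{\ell-1}$ is a hard instance built over a set system $\S_{\ell-1}$. To form $\T_\ell$, I would take a fresh ground set and duplicate every set $S$ of the recursive gadget into a ``cheap'' copy $S^-$ of cost $c$ and an ``expensive'' copy $S^+$ of cost $c(1+\epsilon)$ for a small $\epsilon$; both copies cover the same part of the recursive sub-instance, but $S^+$ additionally covers a dedicated block of new elements (the ``bonus'' elements) on which extra requests will be released. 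An embedded copy of $\T_{\ell-1}$ is then played out over the sub-instance, with its delay functions scaled so that its optimal cost is comparable to $c$. The adversary watches how the algorithm covers the recursive part: if (in expectation, via Yao's principle) the algorithm has bought expensive sets covering the bonus elements, the adversary releases high-delay requests on the bonus elements that are coverable \emph{only} by a second purchase of the expensive sets, forcing the algorithm to pay $\approx 2c$ on that layer while $\OPT$ pays $\approx c(1+\epsilon)$ (buying expensive sets once, up front); if instead the algorithm mostly used cheap sets, then it has served the recursive instance and the bonus requests with disjoint purchases, whereas $\OPT$ could have served both at once with the expensive sets, again losing a constant factor. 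Either way the per-layer loss is a constant $>1$ beyond the loss already forced by $\T_{\ell-1}$, so the ratio grows additively in $\ell$, i.e. multiplicatively by $(1+\Omega(1))$ is not quite what we want — I would instead make the per-layer gain \emph{additive} in the ratio, $\rho_\ell \ge \rho_{\ell-1} + \Omega(1)$, by arranging the costs so the recursive part and the bonus part of layer $\ell$ contribute comparable amounts to $\OPT$.

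Concretely, the steps in order: (1) fix the base case $\T_0$ (a single repeatedly-requested edge / TCP-acknowledgment gadget giving ratio $1+\Omega(1)$, or even just ratio $1$); (2) describe the duplication-and-bonus construction turning $\T_{\ell-1}$ into $\T_\ell$, carefully specifying costs ($\epsilon$ chosen $\Theta(1/\ell)$ so that doubling the expensive layer $\ell$ times stays within a constant factor of never doubling) and the scaling of delay functions so that the sub-instance's adversary strategy still works when nested; (3) prove the recursion $\mathrm{cost}_{\ALG}(\T_\ell) \ge \rho_\ell \cdot \OPT(\T_\ell)$ with $\rho_\ell = \rho_{\ell-1}+\Omega(1)$ by a case analysis on the (expected) fraction of expensive coverage the algorithm commits to the recursive part before the bonus requests are released, invoking the inductive guarantee on whichever sub-instance the adversary continues to play; (4) count $|\S_\ell|$ and the number of elements — each layer roughly doubles the number of sets and multiplies the number of elements by a factor depending on how many bonus elements are needed, and I would tune these so $\log m, \log n = \Theta(\ell^2)$ (the quadratic appears because to push the ratio up by one more unit at layer $\ell$ we must pay an $\epsilon^{-1}=\Theta(\ell)$ blow-up in instance size, and $\sum_{i\le\ell} i = \Theta(\ell^2)$); (5) apply Yao's minimax principle to pass from the deterministic adversary-vs-algorithm argument to a lower bound against randomized algorithms, and note the construction is fully clairvoyant (all delay functions are announced at release) so the bound holds in the clairvoyant model; (6) observe $k \le m$ so the $\Omega(\sqrt{\log m})$ bound implies $\Omega(\sqrt{\log k})$, and defer the reduction to uniform set costs to Subsection \ref{subsec:LowerBoundUnweighted}.

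The main obstacle I anticipate is making the recursion genuinely clean when instances are nested: the sub-instance $\T_{\ell-1}$ is adaptive (its adversary reacts to the algorithm), so in $\T_\ell$ the outer adversary must decide whether to pursue the ``double the expensive sets'' branch or the ``serve bonus and recursion separately'' branch \emph{before} knowing the full behaviour of the algorithm on the recursive part — this requires the branching decision to depend only on a summary statistic (e.g. the expected total amount spent on expensive coverage of the recursive gadget by a suitable cutoff time) and then showing that in \emph{both} branches the inductive hypothesis can still be invoked on an appropriately-chosen continuation. Getting the delay functions and the threshold for this dichotomy to line up so that each branch provably loses a constant factor, uniformly over all (randomized) algorithms, is the delicate part; the accounting of $n$, $m$, $k$ and the choice $\epsilon = \Theta(1/\ell)$ are then comparatively routine bookkeeping.
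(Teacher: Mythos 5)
Your construction skeleton is essentially the paper's: duplicate every set of the recursive instance into a cheap copy and an expensive copy costing $(1+\epsilon)$ times more, release bonus requests serviceable only by the expensive copies, have the adversary branch adaptively on how much the algorithm has spent on expensive sets, and recurse. The genuine gap is step (3): the claimed recursion $\rho_\ell \ge \rho_{\ell-1}+\Omega(1)$ is not established, and for this construction it is in fact unattainable. In the branch where the algorithm \emph{did} commit heavily to expensive sets, the adversary's punishment is a fresh recursive instance serviceable only by expensive sets; since those requests arrive later, the algorithm must buy the expensive sets again anyway, so its waste relative to an algorithm that had bought cheap copies is only the premium, i.e. $O(\epsilon)$ times the layer's optimal cost -- the ratio gain in this branch is $O(\epsilon)$. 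In the other branch, the algorithm loses roughly the cost of the bonus requests (about half the layer cost), but the optimum itself pays the $\epsilon$ premium on every expensive set, and this premium is charged against the algorithm's recursive cost $\approx 2\rho_{\ell-1}\cdot C$; the net gain is about $\tfrac14-\Theta(\epsilon\,\rho_{\ell-1})$. Hence the per-layer gain is $\Theta\bigl(\min(\epsilon,\,1-\epsilon\rho_{\ell-1})\bigr)$, maximized at $\epsilon\approx 1/\rho_{\ell-1}$ and equal to $\Theta(1/\rho_{\ell-1})$ -- which is exactly why the paper takes $\alpha_i=\tfrac{1}{2c_{i-1}}$, gets $c_i=c_{i-1}+\tfrac{1}{12c_{i-1}}=\Theta(\sqrt i)$, and needs only constant universe growth per layer ($m=2^i$, $n=3^i$). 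Your proposed alternative trade-off (constant gain per layer paid for by an $\epsilon^{-1}=\Theta(\ell)$ size blow-up) is not backed by any mechanism, and if it worked it would prove a bound near $\log m/\log\log m$, far beyond what is known; the bookkeeping is also internally inconsistent, since a $\Theta(\ell)$ multiplicative blow-up per layer gives $\log m=\Theta(\ell\log\ell)$, not $\Theta(\ell^2)$ -- to have ratio $\ell$ correspond to $\sqrt{\log m}$ you would need a $2^{\Theta(\ell)}$ blow-up per layer.

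A secondary but real issue is the treatment of randomization. Yao's principle is awkward here because the adversary is adaptive (it branches on the algorithm's behaviour), so there is no fixed input distribution to invoke it on. The paper sidesteps this by observing that any randomized algorithm, integral or fractional, can be replaced by a deterministic \emph{fractional} algorithm of no larger cost (take the expected momentary buying function), and then runs the adaptive adversary against that deterministic fractional algorithm while comparing to an integral optimum. Your ``summary statistic'' remark (branching on the expected amount spent on expensive sets) is pointing at precisely this device, but as written it is grafted onto Yao rather than replacing it, and the inductive hypothesis must then be stated for deterministic fractional algorithms -- including the accounting convention that, on the first recursive copy, buying either the cheap or the expensive duplicate counts as buying the underlying set. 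Without these two repairs (the correct $\epsilon\approx 1/\rho_{\ell-1}$ regime with gain $\Theta(1/\rho_{\ell-1})$, and the reduction to deterministic fractional algorithms), the proof does not go through.
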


In proving Theorem \ref{thm:LowerBoundTheorem}, we show a lower bound
on competitiveness of a deterministic fractional algorithm against
an integral optimum. Showing this is enough to prove the theorem,
since any randomized online algorithm (fractional or integral) can
be converted to a deterministic fractional online algorithm with identical (or lesser)
cost. This follows from setting the momentary buying function of a set at a given time to be the expectation of that value in the randomized algorithm. Since the
optimum is integral, the bound also holds for integral SCD, as the
theorem states. Therefore, we only consider deterministic fractional
online algorithms henceforth.

We show our lower bounds by constructing a set of SCD instances, $\left\{ I_{i}\right\} _{i=0}^{\infty}$.
For each $i\ge0$, the SCD instance $I_{i}$ contains $2^{i}$ sets
and $3^{i}$ elements. We show that any algorithm must be $\Omega(\sqrt{i})$-competitive
for $I_{i}$, which is both $\Omega(\sqrt{\log m})$ and $\Omega(\sqrt{\log n})$.
Noting that $k\le m$, we also have $\Omega(\sqrt{\log k})$ as required.

The instance $I_{i}$ exists within the time interval $[0,3^{i})$.
That is, no request of $I_{i}$ is released before time $0$, and
at time $3^{i}$ the optimum has served all requests in $I_{i}$,
and the algorithm has incurred a high enough cost.

We define the sequence $(c_{i})_{i=0}^{\infty}$, which is used in the construction of $I_i$. The sequence is defined recursively, such that $c_{0}=1$ and for any $i\ge1$, we have that
\[
c_{i}=c_{i-1}+\frac{1}{12c_{i-1}}
\]

We now describe the recursive construction of the instance $I_{i}$.
We first describe the universe of $I_{i}$, which consists of its
sets and elements. We then describe the requests of $I_{i}$.

\subsubsection*{Universe of $I_{i}$:}

For the base instance
$I_{0}$, the universe consists of a single element $e$ and a single
set $S=\{e\}$. We have that $c(S)=1$.

For $i\ge 1$, the recursive construction of $I_i$ using $I_{i-1}$ is as follows. Denote by $E_{i-1}$ the elements in the universe of
$I_{i-1}$, and by $\S_{i-1}$ the family of sets in the universe
of $I_{i-1}$. For the construction of $I_{i}$, consider three disjoint
copies of $E_{i-1}$ and $\S_{i-1}$. For $l\in\{1,2,3\}$, we denote
by $E_{i-1}^{l}$ and $\S_{i-1}^{l}$ the $l$'th copy of $E_{i-1}$
and $\S_{i-1}$, respectively. We denote by $S^{l}$ the copy of the
set $S\in\S_{i-1}$ in $\S_{i-1}^{l}$. Similarly, we denote by $e^{l}$
the copy of an element $e\in E_{i-1}$ in $E_{i-1}^{l}$.

The universe of $I_{i}$ consists of:
\begin{itemize}
\item The elements $E_{i}=E_{i-1}^{1}\cup E_{i-1}^{2}\cup E_{i-1}^{3}$.
\item The family of sets $\S_{i}=\T_{1}\cup\T_{2}$, where $\T_{1}$ and $\T_{2}$ are defined below.
\end{itemize}

We define:
\begin{itemize}
\item The family of sets $\T_{1}=\{S^{1}\cup S^{2}|S\in\S_{i-1}\}$. A set
$T\in\T_{1}$ formed from $S\in\S_{i-1}$ has cost $c(T)=c(S)$.
\item The family of sets $\T_{2}=\{S^{1}\cup S^{3}|S\in\S_{i-1}\}$. A set
$T\in\T_{2}$ formed from $S\in\S_{i-1}$ has cost $c(T)=(1+\alpha_{i})\cdot c(S)$,
with $\alpha_{i}=\frac{1}{2c_{i-1}}$.
\end{itemize}

\begin{figure}
\begin{centering}
\caption{The Universes of $I_{0}$, $I_{1}$ and $I_{2}$}
\par\end{centering}
\begin{raggedright}
This figure shows the universes of $I_{0}$, $I_{1}$ and $I_{2}$.
In the figure, each element is a point and the sets are the bodies
containing them, where each set has a distinct color. The costs of
the sets are also shown in the figure. The figure shows how three copies of the set of elements $E_{i-1}$ (of the  instance $I_{i-1}$) appear in $I_i$ -- the copy $E_{i-1}^1$ appears at the top of $I_i$'s visualization, the copy $E_{i-1}^2$ appears at the bottom-left, and the copy $E_{i-1}^3$ appears at the bottom-right.\\
~\\
\par\end{raggedright}
\centering{}\includegraphics[scale=0.9]{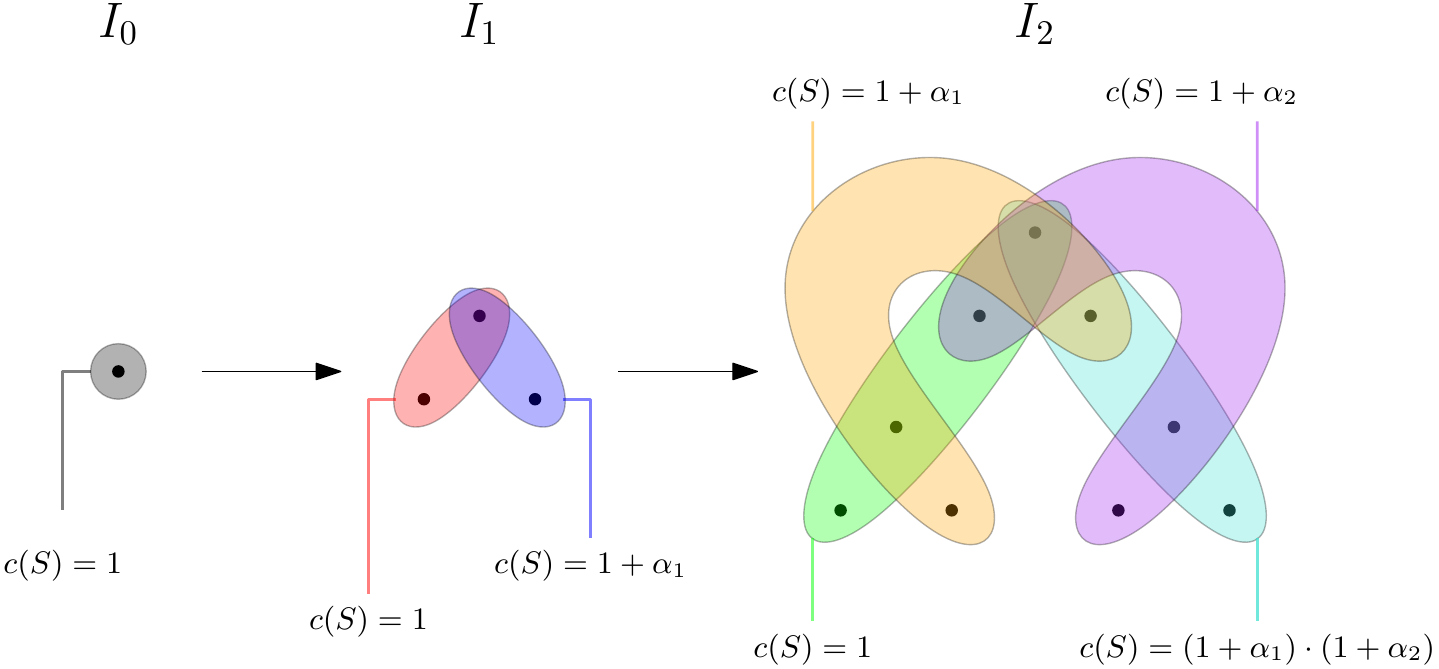}
\end{figure}
\subsubsection*{Requests of $I_{i}$:}

We first describe a type of request used in our construction. Let
$S$ be a set such that there exists an element $e\in S$ such that
$e$ is in no other set besides $S$ (we call $e$ unique to $S$).
For times $a,b$ such that $a<b$, we define a request $q_{a}^{b}(S)$
that can be released at any time $r\le a$ on an element unique
to $S$, and satisfies:
\begin{enumerate}
\item $\int_{r}^{a}d_{j}(t)\diff{t}=0$
\item $\int_{r}^{b}d_{j}(t)\diff{t}\ge c(S)$
\end{enumerate}
\begin{rem}
For the degenerate case of set cover with deadlines, when observing
a request with deadline at time $b$, it can be said to accumulate
$0$ delay until any time before $b$, and infinite delay until time
$b$. Therefore, deadline requests can function as $q_{a}^{b}(S)$
requests. Since all requests used in our construction are $q_{a}^{b}(S)$
requests for some $a,b,S$, our lower bound applies for set cover
with deadlines as well.
\end{rem}

To use those $q_{a}^{b}(S)$ requests, we require the following proposition,
which states that a $q_{a}^{b}(S)$ request can be released on every
$S$.
\begin{prop}
For every set $T\in\S_{i}$, there exists an element $e\in E_{i}$
unique to $T$.
\end{prop}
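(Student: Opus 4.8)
The plan is to prove the proposition by induction on $i$, following the recursive definition of the universe of $I_{i}$. The base case $i=0$ is immediate: the universe of $I_{0}$ consists of the single element $e$ and the single set $S=\{e\}$, so $e$ is unique to $S$.

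For the inductive step, assume the claim holds for $I_{i-1}$: every $S\in\S_{i-1}$ has an element of $E_{i-1}$ unique to it (within $I_{i-1}$). Fix $T\in\S_{i}=\T_{1}\cup\T_{2}$. The key structural observation is that the copies $E_{i-1}^{2}$ and $E_{i-1}^{3}$ are each touched by only one of the two families: no set of $\T_{2}$ (which has the form $S^{1}\cup S^{3}$) meets $E_{i-1}^{2}$, and no set of $\T_{1}$ (which has the form $S^{1}\cup S^{2}$) meets $E_{i-1}^{3}$. So if $T\in\T_{1}$, write $T=S^{1}\cup S^{2}$ for some $S\in\S_{i-1}$; by the inductive hypothesis pick $e\in E_{i-1}$ unique to $S$ in $I_{i-1}$, and consider the copy $e^{2}\in E_{i-1}^{2}$. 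Any set of $\S_{i}$ containing $e^{2}$ must lie in $\T_{1}$ and hence has the form $S'^{1}\cup S'^{2}$ with $e\in S'$; uniqueness of $e$ to $S$ in $I_{i-1}$ forces $S'=S$, so $e^{2}$ is unique to $T$. The case $T\in\T_{2}$ is symmetric, using the copy $e^{3}\in E_{i-1}^{3}$ instead, together with the fact that sets of $\S_{i}$ meeting $E_{i-1}^{3}$ all lie in $\T_{2}$.

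This completes the induction. I do not expect any real obstacle here; the only point requiring care is bookkeeping of which family of sets contains which of the three copies of each element, and in particular that copies $2$ and $3$ are ``private'' to $\T_{1}$ and $\T_{2}$ respectively, which is exactly what allows uniqueness in $I_{i-1}$ to lift to uniqueness in $I_{i}$.
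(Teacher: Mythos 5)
Your proof is correct and follows essentially the same route as the paper: induction on $i$, with the inductive step choosing the copy $e^{2}$ (resp.\ $e^{3}$) of the element $e$ unique to $S$ in $I_{i-1}$ when $T=S^{1}\cup S^{2}\in\T_{1}$ (resp.\ $T=S^{1}\cup S^{3}\in\T_{2}$). The paper's proof is just a terser version of the same argument; your extra bookkeeping about copies $2$ and $3$ being private to $\T_{1}$ and $\T_{2}$ is exactly the implicit justification there.
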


\begin{proof}
By induction on $i$. For the base case, this holds since there is
only a single set with a single element. Assuming the proposition
holds for $I_{i-1}$, we show that it holds for $I_{i}$ by observing
that there exists $S\in\S_{i-1}$ such that $T=S^{1}\cup S^{l}$ for
$l\in\{2,3\}$. Via induction, there exists an element $e\in E_{i-1}$
such that $e\in S$ and $e\notin S^{\prime}$ for every $S^{\prime}\in\S_{i-1}$
such that $S^{\prime}\neq S$. Choosing the element $e^{l}$ yields
the proposition.
\end{proof}

\textbf{\emph{Base case of $I_{0}$}}\emph{ -- }at time $0$, the
request $q_{0}^{1}(S)$ is released on the single element $e$.

\textbf{\emph{Recursive construction of $I_{i}$ using $I_{i-1}$
}}-- we define $C(I_{i})$ to be $\sum_{S\in\S_{i}}c(S)$. We now
define the instance $I_{i}$:

\noindent\fbox{\begin{minipage}[t]{1\columnwidth - 2\fboxsep - 2\fboxrule}%
\begin{enumerate}
\item At time $0$:
\begin{enumerate} 
\item \label{enu:Singletons}Release $q_{2\cdot3^{i-1}}^{3^{i}}(T)$ for
every $T\in\T_{2}$.
\item Release $I_{i-1}$ on the elements $E_{i-1}^{1}$ (see Remark (\nameref{rem:UnionInstance})).
\end{enumerate}
\item At time $3^{i-1}$:
\begin{enumerate} 
\item \label{enu:Case1}If the algorithm has bought sets of $\T_{2}$ at
a total cost of at least $\frac{1}{2}\cdot(1+\alpha_{i})\cdot C(I_{i-1})$,
release $(1+\alpha_{i})I_{i-1}$ on the elements $E_{i-1}^{3}$ (see
Remark (\nameref{rem:FancyInstance})).
\item \label{enu:Case2}Otherwise, release $I_{i-1}$ on the elements of
$E_{i-1}^{2}$ (see Remark (\nameref{rem:CheapInstance})).
\end{enumerate}
\end{enumerate}
\end{minipage}}

The construction of $I_{i}$ includes releasing copies of $I_{i-1}$
on the elements $E_{i-1}^{l}$, for $l\in\{1,2,3\}$. The following
remarks make this well-defined.
\begin{rem*}[a]
\label{rem:UnionInstance}\textbf{The $I_{i-1}$ on $E_{i-1}^{1}$}:
every set $S\in\S_{i-1}$ forms two sets in $\S_{i}$, which are $T_{1}=S^{1}\cup S^{2}\in\T_{1}$
and $T_{2}=S^{1}\cup S^{3}\in\T_{2}$. The $I_{i-1}$ construction
on $E_{i-1}^{1}$ treats buying either of these sets as buying the
set $S$. That is, it treats the sum of the momentary buying of $T_{1}$
and of $T_{2}$ as the momentary buying of $S$.
\end{rem*}
\begin{rem*}[b]
\label{rem:CheapInstance}\textbf{The $I_{i-1}$ on $E_{i-1}^{2}$}:
in this instance, for every set $S\in\S_{i-1}$, the $I_{i-1}$ construction
treats buying $T_{1}=S^{1}\cup S^{2}\in\T_{1}$ as buying $S$.
\end{rem*}
\begin{rem*}[c]
\label{rem:FancyInstance}\textbf{The scaled $(1+\alpha_{i})I_{i-1}$
on $E_{i-1}^{3}$}: similarly to Remark \ref{rem:CheapInstance},
in this instance, for every set $S\in\S_{i-1}$, the $I_{i-1}$ construction
treats buying $T_{2}=S^{1}\cup S^{3}\in\T_{2}$ as buying $S$. In
addition, since the sets of $\T_{2}$ are $(1+\alpha_{i})$-times
more expensive than the original sets of $\S_{i-1}$, the delays of
the jobs in $I_{i-1}$ are also scaled by $1+\alpha_{i}$ in order
to maintain the $I_{i-1}$ instance. We denote this scaled instance
by $(1+\alpha_{i})I_{i-1}$.
\end{rem*}

\subsection{\label{subsec:LowerBoundAnalysis} Analysis of Lower Bounds}

We show that any online fractional algorithm is at least $c_{i}$
competitive on $I_{i}$ with respect to the integral optimum.
\begin{lem}
	\label{lem:OptimumGood}The optimal integral algorithm can serve $I_{i}$
	by time $3^{i}$ with no delay cost by buying every set in $\S_{i}$
	exactly once, for a total cost of $C(I_{i})$.
\end{lem}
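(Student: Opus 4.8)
The plan is to prove Lemma~\ref{lem:OptimumGood} by induction on $i$, exhibiting an explicit integral solution: buy every set in $\S_i$ exactly once, each at a carefully chosen time, and verify that this serves every request released in $I_i$ before its delay becomes positive (so the delay cost is $0$), with total buying cost $\sum_{S\in\S_i}c(S)=C(I_i)$.

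For the base case $I_0$, the single request is $q_0^1(S)$ on the unique element $e$; buying $S=\{e\}$ at time $0$ serves it immediately, at cost $c(S)=1=C(I_0)$, with no delay since $\int_0^0 d_j=0$. For the inductive step, recall the structure of $I_i$: the sets are $\T_1=\{S^1\cup S^2 : S\in\S_{i-1}\}$ and $\T_2=\{S^1\cup S^3 : S\in\S_{i-1}\}$. The strategy is: by the induction hypothesis, there is a time-$0$-to-$3^{i-1}$ integral schedule serving the copy of $I_{i-1}$ on $E_{i-1}^1$ with zero delay by buying each $S\in\S_{i-1}$ once; realize this by buying the corresponding set in $\T_2$ (i.e. $S^1\cup S^3$) at the time the schedule buys $S$ — this covers $E_{i-1}^1$ correctly per Remark~(\nameref{rem:UnionInstance}), and simultaneously covers the $E_{i-1}^3$ side of each $\T_2$ set. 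Since this buys sets of $\T_2$ of total cost $(1+\alpha_i)C(I_{i-1})\ge \tfrac12(1+\alpha_i)C(I_{i-1})$, the adversary is in Case~\ref{enu:Case1} and releases $(1+\alpha_i)I_{i-1}$ on $E_{i-1}^3$ at time $3^{i-1}$. We then buy each set of $\T_2$ a second time on the interval $[3^{i-1},2\cdot3^{i-1})$, following the (scaled) induction-hypothesis schedule for $(1+\alpha_i)I_{i-1}$ via Remark~(\nameref{rem:FancyInstance}); this serves that copy with zero delay. The remaining requests are the singleton requests $q_{2\cdot3^{i-1}}^{3^i}(T)$ for every $T\in\T_2$, which accumulate zero delay until time $2\cdot3^{i-1}$; buy each $T\in\T_2$ a third time in the window $[2\cdot3^{i-1},3^i)$ to serve these with no delay. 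Finally, buy every set of $\T_1$ once at time $3^{i-1}$ to handle the (only potentially released) copy of $I_{i-1}$ on $E_{i-1}^2$ in Case~\ref{enu:Case2} — but wait: this would double-count. I would instead organize the case analysis so that the $\T_1$ sets are bought exactly once (on the $E_{i-1}^2$ schedule when Case~\ref{enu:Case2} occurs, and at an arbitrary time with no pending requests otherwise), and the $\T_2$ sets are bought exactly once in the statement's accounting — which forces a cleaner argument.

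Here is the cleaner accounting that the statement actually demands (each set bought \emph{exactly once}): the claim is only that the optimum \emph{can} do this, and the adversary controls which of Case~\ref{enu:Case1} or Case~\ref{enu:Case2} fires. So I pick the optimum's strategy to deliberately land in Case~\ref{enu:Case2} (buy \emph{nothing} from $\T_2$ before time $3^{i-1}$, so certainly less than $\tfrac12(1+\alpha_i)C(I_{i-1})$). Then: (i) serve the $I_{i-1}$ copy on $E_{i-1}^1$ and, per Remark~(\nameref{rem:CheapInstance}), the $I_{i-1}$ copy on $E_{i-1}^2$ \emph{simultaneously} using the $\T_1$ sets — the inductive schedule for $I_{i-1}$ on $E_{i-1}^1$ runs in $[0,3^{i-1})$, and the schedule for $E_{i-1}^2$ (released at $3^{i-1}$) runs in $[3^{i-1},2\cdot3^{i-1})$; but a \emph{single} purchase of each $S^1\cup S^2\in\T_1$ must serve both, so we instead observe that within $I_{i-1}$ the optimum from the induction hypothesis buys each set once, and we can buy each $\T_1$ set once at a time that works for the $E_{i-1}^1$ instance — then handle $E_{i-1}^2$ with the $\T_1$ sets' same purchase only if the timing aligns. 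This does not work directly, which tells me the honest structure is: $\T_1$ sets serve the $E_{i-1}^1$ copy, and the $E_{i-1}^2$ copy is served by $\T_2$ sets under Remark conventions — hence I will prove the lemma by choosing the optimum to buy each $\T_1$ set once (for $E_{i-1}^1$, running in $[0,3^{i-1})$) and each $\T_2$ set once (for the singleton requests $q_{2\cdot3^{i-1}}^{3^i}$, bought at time $2\cdot3^{i-1}$), and arranging the adversary into Case~\ref{enu:Case2}, where the $E_{i-1}^2$ copy is served by the $\T_1$ sets whose single purchase I will \emph{reschedule} to the common time $3^{i-1}$ — valid because a $q_a^b(S)$-type request (which is all of $I_{i-1}$'s requests, recursively) tolerates being served any time in its slack window.

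The main obstacle is exactly this scheduling bookkeeping: showing that a \emph{single} purchase of each set suffices to cover requests coming from two nested copies of $I_{i-1}$ whose time windows are disjoint. The resolution will be to prove a slightly stronger inductive statement — namely, that $I_{i-1}$ can be served with zero delay by buying each set once, \emph{all purchases made at time $0$} (this is possible because every request in the recursive construction is a $q_a^b(S)$ request, which has zero accumulated delay on all of $[r,a]$, and one can check by induction that the construction only ever needs purchases at the start of a phase). With "all purchases at time $0$" as the inductive invariant, the step becomes transparent: buy each $\T_1$ set at time $0$ (serving $E_{i-1}^1$, and later $E_{i-1}^2$ in Case~\ref{enu:Case2} since $I_{i-1}$'s first-phase requests have slack to $3^{i-1}$ and beyond... ) — I will need to verify the slack windows nest correctly, i.e. that $I_{i-1}$'s requests released "at time $3^{i-1}$ relative to $I_i$" still have all their deadlines at or before $3^i$ and slacks starting at or after $3^{i-1}$, which follows from $I_{i-1}\subseteq[0,3^{i-1})$ by construction. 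Granting that, total cost is $\sum_{T\in\T_1}c(T)+\sum_{T\in\T_2}c(T)=\sum_{S\in\S_{i-1}}c(S)+\sum_{S\in\S_{i-1}}(1+\alpha_i)c(S)=C(I_i)$, and delay cost is $0$, completing the induction.
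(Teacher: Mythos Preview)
Your proposal has a fundamental misconception about the role of the optimum in this adversarial construction. The instance $I_i$ is adaptive to the \emph{online algorithm} being lower-bounded: the branch at time $3^{i-1}$ (Case~\ref{enu:Case1} vs.\ Case~\ref{enu:Case2}) is determined by how much of $\T_2$ the \emph{online algorithm} has bought, not by the optimum. The optimum is the offline solution for whatever instance the adversary ends up producing; it does not ``pick a strategy to deliberately land in Case~\ref{enu:Case2}.'' So you cannot choose the optimum's purchases to steer the adversary. You must instead show that, \emph{whichever} branch the adversary takes (depending on the online algorithm), there is an offline solution buying each set exactly once.

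Once you drop this misconception, the scheduling obstacle you struggle with disappears. The paper's argument is: in Case~\ref{enu:Case1}, the optimum uses the $\T_1$ sets (each once) to serve the $I_{i-1}$ copy on $E_{i-1}^1$ during $[0,3^{i-1})$, then uses the $\T_2$ sets (each once) to serve the $(1+\alpha_i)I_{i-1}$ copy on $E_{i-1}^3$ during $[3^{i-1},2\cdot 3^{i-1})$; in Case~\ref{enu:Case2}, it swaps the roles, using $\T_2$ first on $E_{i-1}^1$ and $\T_1$ second on $E_{i-1}^2$. In both cases every set of $\T_2$ is bought exactly once by time $2\cdot 3^{i-1}$, which also serves the singleton requests $q_{2\cdot 3^{i-1}}^{3^i}(T)$ before they accrue delay. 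There is never a need to serve two disjoint-in-time copies of $I_{i-1}$ with a single purchase, so your proposed strengthening (``all purchases at time $0$'') is unnecessary --- and in fact would fail, since buying a set at time $0$ does not serve requests released at time $3^{i-1}$.
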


\begin{proof}
	Via induction on $i$. For the base case of $i=0$, the optimal algorithm
	buys the single set $S$ at time $0$ and pays $c(S)=C(I_{0})$. Now,
	for $i\ge1$, suppose the optimum can serve the instance $I_{i-1}$
	according to the lemma. We observe the optimum in $I_{i}$ according
	to the cases in the release of $I_{i}$:
	
	\textbf{Case \ref{enu:Case1}:}
	
	In this case, the optimum could have served $I_{i-1}$ on $E_{i-1}^{1}$
	by time $3^{i-1}$ by buying each set of $\T_{1}$ exactly once, with
	no delay cost. It could then serve $(1+\alpha_{i})I_{i-1}$ on $E_{i-1}^{3}$
	by time $2\cdot3^{i-1}$ by buying each set of $\T_{2}$ exactly once,
	with no delay cost. Since the optimum has bought all of $\T_{2}$,
	the requests released on step \ref{enu:Singletons} have also been
	served before incurring delay. The lemma thus holds for this case.
	
	\textbf{Case \ref{enu:Case2}:}
	
	In this case, the optimum could have served $I_{i-1}$ on $E_{i-1}^{1}$
	by time $3^{i-1}$ by buying each set of $\T_{2}$ exactly once, with
	no delay cost. It could then serve $I_{i-1}$ on $E_{i-1}^{2}$ by
	time $2\cdot3^{i-1}$ by buying each set of $\T_{1}$ exactly once,
	with no delay cost. Since the optimum has bought all of $\T_{2}$,
	the requests released on step \ref{enu:Singletons} have again been
	served before incurring delay. The lemma thus holds for this case
	as well.
\end{proof}
We now analyze the cost of the algorithm.
\begin{lem}
	\label{lem:AlgorithmBad}Any online algorithm has a cost of at least
	$c_{i}\cdot C(I_{i})$ on $I_{i}$ by time $3^{i}$.
\end{lem}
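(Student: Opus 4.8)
The plan is to prove the lemma by induction on $i$, following the recursive structure of the construction. Recall first that $C(I_i)=\sum_{T\in\T_1}c(T)+\sum_{T\in\T_2}c(T)=C(I_{i-1})+(1+\alpha_i)C(I_{i-1})=(2+\alpha_i)C(I_{i-1})$, and that — since the fractional covering of a request counts only buying performed after its release — the copy of $I_{i-1}$ on $E_{i-1}^1$ can only be served by buying within $[0,3^{i-1})$, the copy on $E_{i-1}^2$ (resp.\ $E_{i-1}^3$) only by buying within $[3^{i-1},3^i)$, and the singleton requests $q_{2\cdot3^{i-1}}^{3^i}(T)$ ($T\in\T_2$) accumulate all of their delay within $[2\cdot3^{i-1},3^i)$. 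The base case $i=0$ is immediate: with the single set $S$ and $c(S)=1=C(I_0)$, if the algorithm buys a total fraction $f$ of $S$ by time $1$ it pays $f\,c(S)$ in buying cost and, since the coverage of $q_0^1(S)$ never exceeds $f$ before time $1$, incurs at least $(1-f)\int_0^1 d(t)\diff{t}\ge(1-f)c(S)$ in delay, for a total of at least $c(S)=c_0C(I_0)$.

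For the inductive step, fix $i\ge1$ and a deterministic fractional online algorithm, and split its cost on $I_i$ as $a_1+a_2+b_1+b_2+D^{(1)}+D^{(2)}+D^{(s)}$, where $a_1,a_2$ (resp.\ $b_1,b_2$) are its total buying costs on sets of $\T_1$ (resp.\ $\T_2$) before and after time $3^{i-1}$, and $D^{(1)},D^{(2)},D^{(s)}$ are the delays it incurs on the requests of the first $I_{i-1}$-copy, of the second copy, and of the singletons. The construction enters Case \ref{enu:Case1} exactly when $b_1\ge\frac12(1+\alpha_i)C(I_{i-1})$, and Case \ref{enu:Case2} otherwise.

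I would then establish three building-block inequalities. \emph{(E1)} Apply the induction hypothesis to the $I_{i-1}$-copy on $E_{i-1}^1$: for each $S\in\S_{i-1}$ this copy treats buying $S^1\cup S^2\in\T_1$ (price $c(S)$) or $S^1\cup S^3\in\T_2$ (price $(1+\alpha_i)c(S)$) as buying $S$, so a $\T_2$-buying cost of $b_1$ accounts for at most $b_1/(1+\alpha_i)$ of the cost as measured inside the $I_{i-1}$ instance, and only buying before $3^{i-1}$ helps this copy; hence $a_1+\frac{b_1}{1+\alpha_i}+D^{(1)}\ge c_{i-1}C(I_{i-1})$. \emph{(Sing)} Each $q_{2\cdot3^{i-1}}^{3^i}(T)$ sits on an element contained only in $T$, so it is covered solely by buying $T$; if $g_T$ is the total fraction of $T$ bought, the algorithm pays $g_T\,c(T)$ on $T$ and incurs at least $(1-g_T)\int_{2\cdot3^{i-1}}^{3^i}d(t)\diff{t}\ge(1-g_T)c(T)$ delay on this request, summing to at least $c(T)$; summing over $T\in\T_2$ gives $b_1+b_2+D^{(s)}\ge\sum_{T\in\T_2}c(T)=(1+\alpha_i)C(I_{i-1})$. \emph{(E2/E3)} The second copy is released at time $3^{i-1}$ and is served only by buying, after time $3^{i-1}$, sets of $\T_1$ in Case \ref{enu:Case2} (price $c(S)$ per underlying set) or sets of $\T_2$ in Case \ref{enu:Case1} (price $(1+\alpha_i)c(S)$, exactly the set cost in the uniformly scaled instance $(1+\alpha_i)I_{i-1}$); so the induction hypothesis — scaled by $1+\alpha_i$ in Case \ref{enu:Case1} — gives $a_2+D^{(2)}\ge c_{i-1}C(I_{i-1})$ in Case \ref{enu:Case2}, and $b_2+D^{(2)}\ge c_{i-1}(1+\alpha_i)C(I_{i-1})$ in Case \ref{enu:Case1}.

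Finally I would combine. In Case \ref{enu:Case2}, adding (E1), (E2), (Sing) and substituting $b_1/(1+\alpha_i)<\frac12 C(I_{i-1})$ (from the case hypothesis) gives total cost at least $(2c_{i-1}+\frac12+\alpha_i)C(I_{i-1})$. In Case \ref{enu:Case1}, adding (E1) and the Case-\ref{enu:Case1} form of (E2/E3), retaining the surplus $b_1-\frac{b_1}{1+\alpha_i}=b_1\cdot\frac{\alpha_i}{1+\alpha_i}$, and using $b_1\ge\frac12(1+\alpha_i)C(I_{i-1})$ gives total cost at least $\bigl(c_{i-1}(2+\alpha_i)+\frac{\alpha_i}{2}\bigr)C(I_{i-1})$. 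Substituting $\alpha_i=\frac1{2c_{i-1}}$, $C(I_i)=(2+\alpha_i)C(I_{i-1})$ and $c_i=c_{i-1}+\frac1{12c_{i-1}}$, a short computation reduces the target inequality (total $\ge c_iC(I_i)$) in both cases to $c_{i-1}\ge\frac12$, which holds since $(c_i)$ increases from $c_0=1$; Case \ref{enu:Case1} is the binding case and is precisely why the recursion carries the coefficient $\frac1{12}$. The part I expect to be most delicate is the cost accounting in the inductive step: partitioning the algorithm's buying cleanly among the three recursive ``slots'' (legitimate precisely because a set bought before a request's arrival does not cover it), relating a request's residual fractional coverage to the delay it still incurs, and tracking that buying a $\T_2$-set counts, inside the $I_{i-1}$-copies on $E_{i-1}^1$ and $E_{i-1}^3$, only as buying the underlying (cheaper) set of $\S_{i-1}$, so that the induction hypothesis — stated for the unscaled or uniformly scaled $I_{i-1}$ — genuinely applies.
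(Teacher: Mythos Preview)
Your proof is correct and follows essentially the same approach as the paper's: an induction on $i$, with the same case split on whether $b_1=\Gamma$ exceeds $\tfrac12(1+\alpha_i)C(I_{i-1})$, the same three building blocks (the two recursive applications of the hypothesis and the singleton argument), and the same final arithmetic. Your explicit decomposition $a_1+a_2+b_1+b_2+D^{(1)}+D^{(2)}+D^{(s)}$ makes the non-double-counting more transparent than the paper's narrative accounting, and your observation that both cases reduce to $c_{i-1}\ge\tfrac12$ (with Case~\ref{enu:Case1} binding) is a nice way to close; the only imprecision is the phrase ``can only be served by buying within $[0,3^{i-1})$'' for the copy on $E_{i-1}^1$ --- buying later can still serve those requests, but what you actually use (and what is true) is that only buying before $3^{i-1}$ counts toward the cost-by-time-$3^{i-1}$ to which the induction hypothesis applies.
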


\begin{proof}
	By induction on $i$.
	
	For $i=0$, observe the algorithm at time $1$. Denoting by $\Gamma_{S}$
	the total buying of the single set $S$ by the algorithm by time $1$,
	the algorithm has a cost of at least
	\[
	c(S)\cdot\Gamma_{S}+(1-\Gamma_{S})\cdot\int_{0}^{1}d_{q_{0}^{1}(S)}(t)\diff{t}\ge c(S)=C(I_{0})
	\]
	where the inequality is due to the definition of $q_{0}^{1}(S)$.
	This finishes the base case of the induction.
	
	For the case that $i\ge1$, assume that the lemma holds for $i-1$.
	We show that it holds for $i$.
	
	Fix any algorithm for $I_{i}$. We denote by $\Gamma$ the total buying
	cost of the algorithm in the time interval $[0,3^{i-1})$ for sets
	of $\T_{2}$. We again split into cases according to the chosen branch
	in the construction of $I_{i}$.
	
	\textbf{Case \ref{enu:Case1}:}
	
	In this case we have $\Gamma\ge\frac{1}{2}\cdot(1+\alpha_{i})\cdot C(I_{i-1})$.
	From the definition of the first $I_{i-1}$ released, the adversary
	is oblivious to whether a copy of $S\in\S_{i-1}$ came from $\T_{1}$
	or $\T_{2}$. Using the induction hypothesis, any online algorithm
	for this instance incurs a cost of at least $c_{i-1}\cdot C(I_{i-1})$
	by time $3^{i-1}$, including the algorithm in which buying sets from
	$\T_{2}$ are replaced with buying the equivalent sets from $\T_{1}$.
	Such a modified online algorithm would cost $\frac{\alpha_{i}}{1+\alpha_{i}}\Gamma$
	less than the current algorithm, which is at least $\frac{\alpha_{i}}{2}\cdot C(I_{i-1})$.
	Therefore, the algorithm pays at least $(c_{i-1}+\frac{\alpha_{i}}{2})\cdot C(I_{i-1})$
	in the interval $[0,3^{i-1})$.
	
	As for the second instance $(1+\alpha_{i})I_{i-1}$, the algorithm
	must pay at least $(1+\alpha_{i})\cdot c_{i-1}\cdot C(I_{i-1})$ by
	time $2\cdot3^{i-1}$ via induction.
	
	Overall, the algorithm pays by time $3^{i}$ at least
	\begin{align*}
	\left(\left(c_{i-1}+\frac{\alpha_{i}}{2}\right)\cdot C(I_{i-1})\right)+ & \left((1+\alpha_{i})\cdot c_{i-1}\cdot C(I_{i-1})\right)\\
	& =\left((2+\alpha_{i})c_{i-1}+\frac{\alpha_{i}}{2}\right)\cdot C(I_{i-1})\\
	& =c_{i-1}\cdot C(I_{i})+\frac{\alpha_{i}}{2}\cdot C(I_{i-1})\\
	& \ge\left(c_{i-1}+\frac{\alpha_{i}}{6}\right)\cdot C(I_{i})\\
	& =\left(c_{i-1}+\frac{1}{12c_{i-1}}\right)\cdot C(I_{i})
	\end{align*}
	where the inequality is due to $C(I_{i})=(2+\alpha_{i})C(I_{i-1})\le3C(I_{i-1})$.
	
	\textbf{Case \ref{enu:Case2}:}
	
	In this case we have $\Gamma<\frac{1}{2}\cdot(1+\alpha_{i})\cdot C(I_{i-1})$.
	For the first $I_{i-1}$ instance, the algorithm pays at least $c_{i-1}\cdot C(I_{i-1})+\Gamma\cdot\frac{\alpha_{i}}{1+\alpha_{i}}$
	by time $3^{i-1}$, similar to the previous case.
	
	For the second $I_{i-1}$ instance, released on $E_{i-1}^{2}$, the
	algorithm must pay via induction at least $c_{i-1}\cdot C(I_{i-1})$
	by time $2\cdot3^{i-1}$. Since sets of $\T_{2}$ do not satisfy requests
	in this instance, this cost is either in buying sets of $\T_{1}$
	or in delay of requests from that $I_{i-1}$ instance.
	
	In addition to the two $I_{i-1}$ instances, due to the $q_{2\cdot3^{i-1}}^{3^{i}}(S)$
	requests released in step \ref{enu:Singletons}, the algorithm has
	a cost of at least $\left(\sum_{T\in\T_{2}}c(T)\right)-\Gamma=(1+\alpha_{i})C(I_{i-1})-\Gamma$
	during the interval $[1,3)$ in either buying sets of $\T_{2}$ in
	order to finish these requests, or in delay by those requests (using
	a similar argument to that in the base case). Overall, the algorithm
	has a cost of at least
	\begin{align*}
	\left(c_{i-1}\cdot C(I_{i-1})+\Gamma\cdot\frac{\alpha_{i}}{1+\alpha_{i}}\right)+ & \left(c_{i-1}\cdot C(I_{i-1})\right)+\left((1+\alpha_{i})C(I_{i-1})-\Gamma\right)\\
	& =(2c_{i-1}+1+\alpha_{i})\cdot C(I_{i-1})-\frac{1}{1+\alpha_{i}}\Gamma\\
	& \ge(2c_{i-1}+1+\alpha_{i})\cdot C(I_{i-1})-\frac{1}{2}C(I_{i-1})\\
	& =\left(2c_{i-1}+\frac{1}{2}+\alpha_{i}\right)\cdot C(I_{i-1})\\
	& =\left((2+\alpha_{i})c_{i-1}+\frac{1}{2}+(1-c_{i-1})\alpha_{i}\right)\cdot C(I_{i-1})\\
	& =c_{i-1}\cdot C(I_{i})+\left(\frac{1}{2}+\frac{1}{2c_{i-1}}-\frac{1}{2}\right)\cdot C(I_{i-1})\\
	& \ge\left(c_{i-1}+\frac{1}{6c_{i-1}}\right)\cdot C(I_{i})\ge c_{i}\cdot C(I_{i})
	\end{align*}
	where the fourth equality and the second inequality are due to $C(I_{i})=(2+\alpha_{i})C(I_{i-1})\le3C(I_{i-1})$,
	and the fourth equality uses the definition of $\alpha_{i}$.
\end{proof}
\begin{proof}
	(of Theorem \ref{thm:LowerBoundTheorem}) Lemmas \ref{lem:OptimumGood}
	and \ref{lem:AlgorithmBad} immediately imply that any deterministic
	fractional algorithm is at least $c_{i}$-competitive on $I_{i}$
	with respect to the integral optimum. Solving the recurrence in the
	definition of $c_{i}$, we have that $c_{i}=\Omega(\sqrt{i})$. To
	observe this, note that for every $i$, the first index $i^{\prime}\ge i$
	such that $c_{i^{\prime}}\ge c_{i}+1$ is at most $O(c_{i})$ larger
	than $i$. Since $k\le m=2^{i}$ and $n=3^{i}$, this provides lower
	bounds of $\Omega(\sqrt{\log k})$ and $\Omega(\sqrt{\log n})$ for
	deterministic algorithms for fractional SCD. As stated before, this
	implies the same lower bound for randomized algorithms for both SCD
	and fractional SCD.
\end{proof}

\subsection{\label{subsec:LowerBoundUnweighted} Reduction to Unweighted}

The lower bound above uses weighted instances, in which sets may have
different costs. In this subsection, we describe how to convert a
weighted instance to an unweighted instance, in which all set costs
are equal. This conversion maintains both the $\Omega(\sqrt{\log k})$
and $\Omega(\sqrt{\log n})$ lower bounds on competitiveness. The
conversion consists of creating multiple copies of each element, and
converting each original set to multiple sets of cost $1$. The cost
of the original set affects the cardinality of the new sets, such
that a set with higher cost turns into smaller sets of cost $1$.

We suppose that the costs of all sets are integer powers of $2$.
This can easily be achieved by rounding the costs to powers of $2$
(losing a factor of $2$ in the lower bound), and then scaling the
instance (both delays and buying costs) by the inverse of the lowest
cost.

Denote by $C=2^{M}$ the largest cost in the instance. The universe
of the unweighted instance is the following:
\begin{itemize}
	\item For each element $e$ in the original instance, we have $C$ elements
	in the unweighted instance, denoted by $e_{0},...,e_{C-1}$.
	\item For each set $S$, we have $c(S)$ sets in the unweighted instance,
	labeled $S_{0},...,S_{c(S)-1}$.
	\item We have that $e_{i}\in S_{j}$ if and only if both $e\in S$ and $i\equiv j\,\mod\,c(S)$.
\end{itemize}
Whenever a request $q_{j}$ arrives in the original instance on an
element $e$ with delay function $d_{j}(t)$, $C$ requests $q_{j,0},...,q_{j,C-1}$
arrive in the unweighted instance on the elements $e_{0},...,e_{C-1}$
respectively. For each $0\le l\le C-1$, the request $q_{j,l}$ has
the delay function $d_{j,l}(t)=\frac{d_{j}(t)}{C}$.

For the instance $I_{i}$ described above, we consider its unweighted
conversion, denoted by $I_{i}^{\prime}$. Any fractional online algorithm
for $I_{i}^{\prime}$ can be converted to a fractional online algorithm
for $I_{i}$ with a cost which is at most that of the original algorithm.
This is done by setting the buying function of a set $S$ in $I_{i}$
to the average of the buying functions of $S_{0},...,S_{c(S)-1}$.

In addition, the integral optimum described in the analysis of $I_{i}$
can be modified to an integral optimum for $I_{i}^{\prime}$ with
identical cost. This is by converting each buying of the set $S$
in $I_{i}$ to buying the sets $S_{0},...,S_{c(S)-1}$ in $I_{i}^{\prime}$.

The aforementioned facts imply that any fractional algorithm is $\Omega(\sqrt{i})$
competitive on $I_{i}^{\prime}$. Note that the parameter $k$ is
the same for $I_{i}$ and $I_{i}^{\prime}$, implying $\Omega(\sqrt{\log k})$-competitiveness
on $I_{i}^{\prime}$. In addition, denoting by $n^{\prime}$ the number
of elements in $I_{i}^{\prime}$, we have that $n^{\prime}=C\cdot n$.
Observing the construction of $I_{i}$, we have that $n=3^{i}$ and
$C\le2^{i}$ (Using the fact that $(1+\alpha_{j})\le2$ for any $j$).
Therefore, $\log n^{\prime}\le\log(6^{i})$, yielding that $i=\Omega(\log n^{\prime})$,
and a $\Omega(\sqrt{\log n^{\prime}})$ lower bound on competitiveness
for $I_{i}^{\prime}$.

\section{\label{sec:VertexCover}Vertex Cover with Delay}

In this section, we show a $3$-competitive deterministic algorithm
for VCD. Recall that VCD is a special case of SCD with $k=2$, where
$k$ is the maximum number of sets to which an element can belong.
In fact, we show a $(k+1)$-competitive deterministic algorithm for
SCD, which is therefore $3$-competitive for VCD. Recall that since
the TCP acknowledgment problem is a special case of VCD with a single
edge, the lower bound of $2$-competitiveness for any deterministic
algorithm on the TCP acknowledgment problem (shown in \cite{DBLP:conf/stoc/DoolyGS98})
applies to VCD as well.

The $(k+1)$-competitive algorithm for SCD, $\ON$, is as follows.
\begin{enumerate}
	\item For every set $S$, maintain a counter $z(S)$ of the total delay
	incurred by $\ON$ over requests on elements in $S$ (all $z(S)$ are
	initialized to $0$).
	\item If for any $S$, we have that $z(S)=c(S)$:
	\begin{enumerate}
		\item Buy $S$.
		\item Zero the counter $z(S)$.
	\end{enumerate}
\end{enumerate}
We denote by $z(S,t)$ the value of $z(S)$ at time $t$. We prove
the following theorem.
\begin{thm}
	\label{thm:VCD_Competitive}The algorithm $\ON$ for SCD has a competitive
	ratio of $k+1$. In particular, $\ON$ is $3$-competitive for VCD.
\end{thm}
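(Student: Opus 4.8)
The plan is to prove the two bounds separately: the buying cost of $\ON$ is at most $k$ times the optimum, and the delay cost of $\ON$ is at most the optimum. Summing these gives the $(k+1)$ bound.

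First I would bound the delay. Whenever $\ON$ incurs momentary delay on a request $q_j \in S_i$, that delay is added to the counter $z(S_i)$ for \emph{every} set $S_i$ containing $q_j$; as soon as such a counter reaches $c(S_i)$, the set $S_i$ is bought, which serves $q_j$. Hence $q_j$ remains pending only as long as every counter $z(S_i)$ with $q_j \in S_i$ stays strictly below $c(S_i)$. In particular, consider the cheapest set $S_{i_j}$ containing $q_j$ (actually any containing set works): the total delay that $\ON$ can accumulate on requests of $S_{i_j}$ between two consecutive purchases of $S_{i_j}$ is exactly $c(S_{i_j})$, and each such purchase is paid for by the optimum too if the optimum leaves those requests pending long enough — more carefully, I would charge $\ON$'s delay to $\OPT$'s cost on a per-set, per-"epoch" basis: an epoch of $S_i$ is the time between two consecutive zeroings of $z(S_i)$. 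Within an epoch, either $\OPT$ also buys $S_i$ (or some other set covering those requests), in which case $\OPT$ pays at least $1$ for that epoch while $\ON$'s delay contribution to that epoch is at most $c(S_i) \ge 1$ — but this needs to be turned into a clean global accounting. The cleaner route is: $\text{Cost}_{\ON}^d = \sum_j \int_{r_j}^{\tau_j} d_j(t)\diff t$, and I would argue that this equals $\sum_i (\text{delay charged to } z(S_i) \text{ during epochs in which } \ON \text{ does not buy } S_i)$ plus the delay in the final incomplete epochs; matching each completed epoch of $z(S_i)$ with a purchase of $S_i$ (cost $c(S_i)$) and each such purchase region to a region where $\OPT$ must pay, using that $\OPT$ must cover every request either by buying a containing set or by paying its full delay. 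I expect this delay-charging argument to be the main obstacle, since one must carefully handle requests pending at the end of the instance and the fact that a single unit of momentary delay is counted in up to $k$ different counters.

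For the buying cost, the key observation is the counter invariant: every time $\ON$ buys $S_i$, it has just accumulated exactly $c(S_i)$ units of delay on requests of $S_i$ (this is immediate from the algorithm, since $z(S_i)$ goes from below $c(S_i)$ to exactly $c(S_i)$ continuously because delay is added continuously). Therefore $\text{Cost}_{\ON}^p = \sum_i n_i \cdot c(S_i) = \sum_i (\text{total delay accumulated on } S_i \text{ over completed epochs}) \le \sum_i \int (\text{delay of requests in } S_i)$. Since each request belongs to at most $k$ sets, the right-hand side is at most $k$ times the total delay $\ON$ incurs on \emph{served} requests, which by the delay bound is at most $k \cdot \text{Cost}_{\OPT}$. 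Thus $\text{Cost}_{\ON}^p \le k\cdot\text{Cost}_{\ON}^d \le k\cdot\text{Cost}_{\OPT}$.

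Combining, $\text{Cost}_{\ON} = \text{Cost}_{\ON}^p + \text{Cost}_{\ON}^d \le k\cdot\text{Cost}_{\ON}^d + \text{Cost}_{\ON}^d$; it remains to show $\text{Cost}_{\ON}^d \le \text{Cost}_{\OPT}$, which is the delay bound above, giving $\text{Cost}_{\ON} \le (k+1)\cdot\text{Cost}_{\OPT}$. For VCD, $k=2$ yields the $3$-competitiveness. The delay bound $\text{Cost}_{\ON}^d \le \text{Cost}_{\OPT}$ should follow from a direct argument: whenever $\OPT$ has a request pending that $\ON$ also has pending, both accumulate the same momentary delay; and whenever $\ON$ holds a request that $\OPT$ has already served by buying some set $S$, then $\OPT$ paid $c(S)$, and I would charge $\ON$'s subsequent delay on requests of $S$ (bounded by $c(S)$ before $\ON$ buys $S$) against that payment — this last charging, making sure it is not over-used, is the technical heart and where I would spend the most care.
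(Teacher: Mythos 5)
Your overall decomposition is exactly the paper's: bound the buying cost by $k$ times $\ON$'s own delay cost via the counter argument (each purchase of $S$ corresponds to $c(S)$ units of accumulated delay on requests of $S$, and each unit of delay feeds at most $k$ counters), and then bound $\text{Cost}_{\ON}^{d}\le\text{Cost}_{\OPT}$. The buying-cost half of your argument is complete and matches the paper's first lemma. Where you differ is in how the delay bound is closed: the paper does it by dual fitting, setting $y_{j}(t)=d_{j}^{\ON}(t)$ in the dual LP of Section \ref{sec:FractionalAlgorithm} and verifying the (C1) constraints via the key Lemma \ref{lem:VCD_C1} (for any set $S$ and any subset $T$ of requests on elements of $S$ that are all unserved at time $t$, the total \emph{future} delay $\sum_{j\in T}\int_{t}^{\infty}d_{j}^{\ON}(t')\diff{t'}$ is at most $c(S)$, because otherwise the counter $z(S)$ would reach $c(S)$ and serve them). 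You instead propose a direct primal charging against the integral optimum; that route is also viable and, when made precise, it is essentially the combinatorial shadow of the same dual-fitting argument.

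As written, though, the delay bound is a genuine gap: your first, epoch-based sketch is abandoned midway ("this needs to be turned into a clean global accounting"), and your second sketch explicitly leaves open the part you call its technical heart, namely that the charging is not over-used. The missing ingredient is precisely the \emph{joint} (subset) form of the counter bound stated above, which is what your parenthetical "bounded by $c(S)$ before $\ON$ buys $S$" gestures at but never isolates or proves. With it, the charging closes cleanly: for each request $q_{j}$, the delay $\ON$ incurs on $q_{j}$ while $\OPT$ also has $q_{j}$ pending is charged to $\OPT$'s delay on that same request at the same moments (equal momentary rates, so this charge is one-to-one); the delay $\ON$ incurs on $q_{j}$ after $\OPT$ serves it is charged to the single $\OPT$ purchase that served $q_{j}$. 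A purchase of $S$ by $\OPT$ at time $t$ serves a set $T$ of requests on elements of $S$, all of which (those still pending in $\ON$) are jointly bounded in future $\ON$-delay by $c(S)$ via the counter lemma — note that the bound must be applied jointly to the whole subset $T$, not request by request, or the purchase would be overcharged. Since each request is served by $\OPT$ at most once, each $\OPT$ purchase and each unit of $\OPT$ delay is charged at most once, giving $\text{Cost}_{\ON}^{d}\le\text{Cost}_{\OPT}$ and hence the $(k+1)$ bound, with $k=2$ for VCD. So the plan is sound and close to the paper's, but the proof is incomplete until the subset counter lemma is stated and proved and the no-overcharging accounting is carried out.
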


\begin{lem}
	The cost of the algorithm is at most $k+1$ times its delay cost.
\end{lem}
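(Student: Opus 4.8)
The plan is to charge the buying cost of $\ON$ to $k$ times its delay cost; since $\text{Cost}_{\ON}=\text{Cost}_{\ON}^{p}+\text{Cost}_{\ON}^{d}$, the bound $\text{Cost}_{\ON}^{p}\le k\cdot\text{Cost}_{\ON}^{d}$ immediately yields the lemma. The whole argument is a bookkeeping of the counters $z(S,t)$.

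First I would fix a single set $S$ and let $n_S$ be the number of times $\ON$ buys $S$. I would track $z(S,t)$ over $[0,\infty)$: between consecutive buys of $S$ the counter is nondecreasing, its rate of increase at time $t$ being the total momentary delay that $\ON$ incurs at time $t$ on pending requests lying in $S$; each buy of $S$ occurs exactly when $z(S)$ reaches $c(S)$, and resets it to $0$. Consequently the total mass flowing into $z(S)$ over all time equals $n_S\cdot c(S)$ (the mass removed by the resets) plus the nonnegative residual value of $z(S)$ at the end, so it is at least $n_S\cdot c(S)$. On the other hand this same inflow equals $\sum_{j\mid q_{j}\in S}\int_{r_{j}}^{\tau_{j}}d_{j}(t)\diff{t}$, where $\tau_{j}$ is the time $q_{j}$ is served by $\ON$ (and $\tau_j=\infty$ if $q_j$ is never served), because a request contributes its momentary delay to $z(S)$ precisely while it is pending. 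Hence $n_S\cdot c(S)\le\sum_{j\mid q_{j}\in S}\int_{r_{j}}^{\tau_{j}}d_{j}(t)\diff{t}$ for every $S$.

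Summing this over all sets $S$, the left-hand side is exactly $\text{Cost}_{\ON}^{p}=\sum_{S}n_S\cdot c(S)$, while on the right each request $q_{j}$ is counted once per set containing it, i.e.\ at most $k$ times by the definition of $k$. Therefore $\text{Cost}_{\ON}^{p}\le k\sum_{j}\int_{r_{j}}^{\tau_{j}}d_{j}(t)\diff{t}=k\cdot\text{Cost}_{\ON}^{d}$, and adding $\text{Cost}_{\ON}^{d}$ to both sides gives $\text{Cost}_{\ON}\le(k+1)\cdot\text{Cost}_{\ON}^{d}$, proving the lemma. The only point requiring a little care is the per-set accounting: one must check that buying $S$ really stops its pending requests from contributing further to $z(S)$ (so there is no double counting) and that each of the $n_S$ resets removes exactly $c(S)$; the degenerate case of a never-served request is harmless, since then both sides of the per-set inequality are infinite. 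I do not anticipate a genuine obstacle — this is the straightforward counter-based algorithm described in the text, and the factor $k$ enters solely because an element may lie in up to $k$ sets.
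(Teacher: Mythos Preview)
Your proof is correct and follows exactly the paper's approach: charge each purchase of $S$ to the $c(S)$ of delay that filled its counter, then use that each unit of delay contributes to at most $k$ counters to get $\text{Cost}_{\ON}^{p}\le k\cdot\text{Cost}_{\ON}^{d}$. The paper's proof is just a three-sentence version of the same bookkeeping; your only imprecision is the aside about the never-served case (a request can be unserved yet have finite total delay), but this is harmless since the per-set inequality holds regardless.
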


\begin{proof}
	It is sufficient to bound the buying cost in terms of the delay cost.
	For each purchase of a set $S$, $z(S)$ must increase from $0$ to
	$c(S)$. A delay for a request contributes to the increase of at most
	$k$ counters. Thus, the buying cost is at most $k$ times the delay
	cost.
\end{proof}
We are left to bound the delay cost of the algorithm by the adversary's
cost.
\begin{lem}
	\label{lem:VCD_C1}For any set $S$, let $T$ be a subset of the requests
	on elements of $S$ such that all requests of $T$ are unserved at
	time $t$. Then we have $\sum_{j|q_{j}\in T}\int_{t}^{\infty}d_{j}^{\ON}(t^{\prime})\diff{t^{\prime}}\leq c(S)$.
\end{lem}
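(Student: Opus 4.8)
The plan is to analyze the single counter $z(S)$ from time $t$ onward. The point is that every unit of delay that $\ON$ pays on a request lying on an element of $S$ is recorded in $z(S)$, and $\ON$ never lets $z(S)$ exceed $c(S)$ without buying $S$ — and buying $S$ flushes all currently pending requests of $T$. So the future delay of the requests in $T$ is governed entirely by how much $z(S)$ can still grow before the next purchase of $S$.

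Concretely, I would let $t_1\in[t,\infty]$ be the first time at or after $t$ at which $\ON$ buys $S$ (with $t_1=\infty$ if this never happens). On the interval $[t,t_1)$ the counter $z(S,\cdot)$ is never reset, hence is nondecreasing, and by the definition of $\ON$ it stays strictly below $c(S)$ there; moreover its total increase over $[t,t_1)$ equals the total delay incurred by $\ON$ during $[t,t_1)$ on \emph{all} requests on elements of $S$. Since $z(S,t)\ge 0$, that total increase is at most $c(S)-z(S,t)\le c(S)$. The requests of $T$ form a subset of the requests on elements of $S$, so in particular
\[
\sum_{j\mid q_j\in T}\int_t^{t_1}d_j^{\ON}(t')\diff{t'}\le c(S)-z(S,t)\le c(S).
\]
Next I would argue that no request of $T$ accrues delay after $t_1$: if $t_1<\infty$, buying $S$ at $t_1$ serves every request of $T$ still pending then (each request of $T$ lies on an element of $S$ and was released at or before $t$, so at time $t_1$ it is either already served or pending), so $d_j^{\ON}(t')=0$ for $q_j\in T$ and $t'>t_1$; if $t_1=\infty$ there is nothing beyond $t_1$. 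Either way $\int_t^\infty d_j^{\ON}(t')\diff{t'}=\int_t^{t_1}d_j^{\ON}(t')\diff{t'}$ for every $q_j\in T$, and plugging this into the display above finishes the lemma.

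I do not expect a genuine obstacle here; the only thing to be careful about is the bookkeeping at the boundary time $t_1$ — making sure the counter increase over $[t,t_1)$ is exactly the delay paid there (no reset occurs inside the half-open interval, only at $t_1$), that "requests of $T$ pending at $t_1$" captures all of $T$ that still contributes, and that the degenerate case $t_1=t$ is vacuous (then $\int_t^\infty d_j^{\ON}=0\le c(S)$). All of these are immediate from the description of $\ON$ and should just be stated explicitly rather than belabored.
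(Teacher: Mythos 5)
Your proof is correct and uses essentially the same counter-based argument as the paper: the delay of $T$ after time $t$ is charged to the growth of $z(S)$, which cannot exceed $c(S)-z(S,t)\le c(S)$ before the next purchase of $S$. The only cosmetic difference is the cutoff time — you stop at the first purchase of $S$ after $t$ (and note all of $T$ is then served), whereas the paper stops at the first time $\hat t$ at which all of $T$ is served (and notes $S$ is never bought in $[t,\hat t)$) — but both hinge on the same observation that $z(S)$ stays below $c(S)$ between purchases.
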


\begin{proof}
	Denote by $\hat{t}$ the first time in which all requests in $T$
	are served. We have that
	\[
	\sum_{j|q_{j}\in T}\int_{t}^{\infty}d_{j}^{\ON}(t^{\prime})\diff{t^{\prime}}=\sum_{j|q_{j}\in T}\int_{t}^{\hat{t}}d_{j}^{\ON}(t^{\prime})\diff{t^{\prime}}
	\]
	At time $t$, we have $z(S,t)\ge0$. Observe that the algorithm never
	bought $S$ in the time interval $[t,\hat{t})$. Thus, at any time
	$t^{\prime\prime}\in[t,\hat{t})$ we have that
	\[
	z(S,t^{\prime\prime})=z(S)+\sum_{j|q_{j}\in T}\int_{t}^{t^{\prime\prime}}d_{j}^{\ON}(t^{\prime})\diff{t^{\prime}}
	\]
	Observe that $z(S,t^{\prime\prime})<c(S)$, otherwise the algorithm
	would have bought $S$ at $t^{\prime}$, serving all requests in $T$,
	in contradiction to the definition of $\hat{t}$. Therefore $\sum_{j|q_{j}\in T}\int_{t}^{t^{\prime\prime}}d_{j}^{\ON}(t^{\prime})\diff{t^{\prime}}<c(S)$.
	The claim follows as $t^{\prime\prime}$ approaches $\hat{t}$.
\end{proof}
\begin{lem}
	\label{lem:VCD_DelayCharge}The delay cost of the algorithm is at
	most the adversary's cost.
\end{lem}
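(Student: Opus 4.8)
The plan is to charge the delay cost of $\ON$ to the adversary on a per-request basis, splitting the delay that $\ON$ accumulates on each request at the moment the adversary serves it. Fix an optimal solution. For each request $q_{j}$, let $\sigma_{j}$ be the time at which the adversary serves $q_{j}$ (with $\sigma_{j}=\infty$ if the adversary never serves it), and let $\tau_{j}$ be the time at which $\ON$ serves $q_{j}$ (again $\infty$ if never). Then $\text{Cost}_{\ON}^{d}=\sum_{j}\int_{r_{j}}^{\tau_{j}}d_{j}(t)\diff t$, $\text{Cost}_{\ADV}^{d}=\sum_{j}\int_{r_{j}}^{\sigma_{j}}d_{j}(t)\diff t$, and $\int_{r_{j}}^{\tau_{j}}d_{j}(t)\diff t=A_{j}+B_{j}$, where $A_{j}=\int_{r_{j}}^{\min(\tau_{j},\sigma_{j})}d_{j}(t)\diff t$ and $B_{j}=\int_{\min(\tau_{j},\sigma_{j})}^{\tau_{j}}d_{j}(t)\diff t$; note $B_{j}=0$ unless $\sigma_{j}<\tau_{j}$, in which case $B_{j}=\int_{\sigma_{j}}^{\tau_{j}}d_{j}(t)\diff t=\int_{\sigma_{j}}^{\infty}d_{j}^{\ON}(t)\diff t$, since $q_{j}$ is pending in $\ON$ exactly on $[r_{j},\tau_{j})$.

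First I would handle the $A_{j}$ terms, which is immediate: since $\min(\tau_{j},\sigma_{j})\le\sigma_{j}$, we have $A_{j}\le\int_{r_{j}}^{\sigma_{j}}d_{j}(t)\diff t$, hence $\sum_{j}A_{j}\le\text{Cost}_{\ADV}^{d}$.

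The heart of the argument is bounding $\sum_{j}B_{j}$ by the adversary's buying cost $\text{Cost}_{\ADV}^{p}$. Only requests with $\sigma_{j}<\tau_{j}$ contribute, and for each such request $\sigma_{j}<\infty$, so the adversary serves $q_{j}$ by buying some set $S$ with $q_{j}\in S$ at time $\sigma_{j}$; assign $q_{j}$ to one such purchase. Now fix a single purchase of a set $S$ made by the adversary at some time $\sigma$, and let $T$ be the set of requests assigned to it. Every $q_{j}\in T$ satisfies $q_{j}\in S$ and $\tau_{j}>\sigma$, so all requests of $T$ are unserved by $\ON$ at time $\sigma$; Lemma \ref{lem:VCD_C1} therefore applies and gives $\sum_{j\mid q_{j}\in T}\int_{\sigma}^{\infty}d_{j}^{\ON}(t)\diff t\le c(S)$. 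Since $\int_{\sigma}^{\infty}d_{j}^{\ON}(t)\diff t=B_{j}$ for each $q_{j}\in T$, the total $B$-contribution charged to this purchase is at most $c(S)$. Summing over all purchases made by the adversary (each purchase of a set $S$ costing it $c(S)$) yields $\sum_{j}B_{j}\le\text{Cost}_{\ADV}^{p}$. Combining, $\text{Cost}_{\ON}^{d}=\sum_{j}A_{j}+\sum_{j}B_{j}\le\text{Cost}_{\ADV}^{d}+\text{Cost}_{\ADV}^{p}=\text{Cost}_{\ADV}$.

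The one point I would be most careful about is the bookkeeping of the split: it must be arranged so that each purchase made by the adversary is charged at most once, and so that Lemma \ref{lem:VCD_C1} is invoked only with requests that are genuinely unserved by $\ON$ at the relevant purchase time. The corner cases $\tau_{j}=\infty$ (i.e. $\ON$ never serves $q_{j}$) and $\sigma_{j}=\infty$ (i.e. the adversary never serves $q_{j}$) are then handled automatically by the above case analysis, using that Lemma \ref{lem:VCD_C1} already permits $\hat t=\infty$.
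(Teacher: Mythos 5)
Your proof is correct, but it takes a genuinely different route from the paper. The paper proves this lemma by dual fitting: it sets $y_{j}(t)=d_{j}^{\ON}(t)$ as a feasible solution to the dual LP of Section \ref{sec:FractionalAlgorithm}, verifies the \ref{eq:C2} constraints trivially and the \ref{eq:C1} constraints via Lemma \ref{lem:VCD_C1} (after discarding requests already served by $\ON$ at time $t$), and concludes by weak duality that the delay cost is at most the cost of the \emph{fractional} optimum, hence of the integral one. You instead give a direct combinatorial charging against an arbitrary integral adversary: you split each request's delay in $\ON$ at the time $\sigma_{j}$ the adversary serves it, charge the prefix to the adversary's own delay on that request, and charge the tail $\int_{\sigma_{j}}^{\infty}d_{j}^{\ON}(t)\diff{t}$ to the specific purchase the adversary makes at $\sigma_{j}$, using Lemma \ref{lem:VCD_C1} at $t=\sigma_{j}$ (valid since every request assigned to that purchase is released and still pending in $\ON$ at $\sigma_{j}$) to cap the total charge per purchase at $c(S)$, with each purchase charged once. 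Both arguments hinge on Lemma \ref{lem:VCD_C1}; yours is more elementary and self-contained (no LP, no weak duality for the continuous-time program, and it handles the corner cases $\tau_{j}=\infty$, $\sigma_{j}=\infty$ explicitly), while the paper's dual-fitting route buys the slightly stronger comparison against the fractional optimum and reuses the machinery already set up for Lemma \ref{lem:DelayChargedToOPT}. Either suffices for Theorem \ref{thm:VCD_Competitive}.
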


\begin{proof}
	We construct a solution to the dual LP from section \ref{sec:FractionalAlgorithm},
	with a goal function which is the delay cost of the algorithm. This
	charges the delay cost of the algorithm to the fractional optimum,
	and thus to the integer optimum as well.
	
	Specifically, we set $y_{j}(t)=d_{j}^{\ON}(t)$ for every $j,t$. Obviously,
	the \ref{eq:C2} constraints hold. In order to show that the \ref{eq:C1}
	constraint for a set $S_{i}$ and a time $t$ holds, observe that
	any request $q_{j}\in S_{i}$ served in $\ON$ before time $t$ has
	$d_{j}^{\ON}(t^{\prime})=0$ for all $t^{\prime}\ge t$. Using Lemma
	\ref{lem:VCD_C1} for the requests unserved at $t$ concludes the
	proof.
\end{proof}
\begin{proof}
	(of theorem \ref{thm:VCD_Competitive}) The proof of the theorem results
	directly from lemmas \ref{lem:VCD_C1} and \ref{lem:VCD_DelayCharge}.
\end{proof}
Note that this algorithm's competitive ratio is indeed as bad as $k+1$.
Consider, for example, a single request in $k$ sets with unit costs,
which the optimum solves with cost $1$ and the algorithm has cost
$k+1$.

\bibliographystyle{plain}
\bibliography{SetCoverWithDelay_short}

\newpage

\appendix

\section*{Appendix}

\section{Randomized Algorithm for SCD by Rounding -- Proofs}

\subsection{\label{subsecAppendix:RequestRoundingProof}Proof of Theorem \ref{thm:RequestRandomizedCompetitive}}

We split the buying cost of $\ONR$, $\text{Cost}_{\ONR}^{p}$, into
two parts: the total ``type a'' buying cost, denoted $\text{Cost}_{\ONR}^{a}$,
and the total ``type b'' buying cost, denoted $\text{Cost}_{\ONR}^{b}$.
\begin{lem}
\label{lem:TypeAGood}$\E[\text{Cost}_{\ONR}^{a}]\le4\ln N\cdot\text{Cost}_{\ONF}$.
\end{lem}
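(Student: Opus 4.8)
The plan is to bound the expected "type a" buying cost of each set $S_i$ separately, and show that it is at most $4\ln N$ times the fractional buying cost $c(S_i)\cdot\int_0^\infty x_i(t)\diff t$ that $\ONF$ incurs on $S_i$; summing over $i$ then gives the lemma. Fix a set $S_i$. The "type a" purchases of $S_i$ occur at the successive times when the running fractional volume $\int_{\tau_i}^t x_i(t')\diff t'$ since the last reset crosses a freshly-drawn threshold $\Lambda_i$ (uniform on $[0,\frac{1}{2\ln N}]$, independent across draws). So if $A$ denotes the total fractional mass $\int_0^\infty x_i(t)\diff t$ bought by $\ONF$ for $S_i$, then the number $K_i$ of "type a" purchases of $S_i$ is precisely the number of i.i.d.\ $\mathrm{Uniform}[0,\frac{1}{2\ln N}]$ thresholds one can subtract off before the accumulated mass $A$ is exhausted (with the last threshold possibly not fully met). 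Equivalently, $K_i$ is a stopping time for the sequence of threshold draws $\Lambda_i^{(1)},\Lambda_i^{(2)},\dots$, namely the smallest $K$ with $\sum_{\ell=1}^{K}\Lambda_i^{(\ell)} > A$, minus one (or $K_i$ is the largest $K$ with $\sum_{\ell=1}^K \Lambda_i^{(\ell)} \le A$). The "type a" buying cost attributable to $S_i$ is then $c(S_i)\cdot K_i$.

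The key step is to bound $\E[K_i]$. Here I would invoke Wald's equation, as the paper's "Techniques" section signals: $\E[\sum_{\ell=1}^{K_i+1}\Lambda_i^{(\ell)}] = \E[K_i+1]\cdot\E[\Lambda_i] = \E[K_i+1]\cdot\frac{1}{4\ln N}$, since $K_i+1$ is a bona fide stopping time (the first time the partial sums exceed $A$) with finite expectation, and each $\Lambda_i^{(\ell)}$ has mean $\frac{1}{4\ln N}$. On the other hand, by the definition of $K_i$ we have $\sum_{\ell=1}^{K_i}\Lambda_i^{(\ell)} \le A$, so $\sum_{\ell=1}^{K_i+1}\Lambda_i^{(\ell)} \le A + \Lambda_i^{(K_i+1)} \le A + \frac{1}{2\ln N}$; taking expectations gives $\E[K_i+1]\cdot\frac{1}{4\ln N} \le A + \frac{1}{2\ln N}$, i.e.\ $\E[K_i] \le 4\ln N\cdot A + 2 - 1 = 4\ln N\cdot A + 1$. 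Multiplying by $c(S_i)$ and summing over all sets $i$, the expected "type a" cost is at most $\sum_i c(S_i)\cdot(4\ln N\cdot \int_0^\infty x_i(t)\diff t + 1)$. The leading term is exactly $4\ln N\cdot \text{Cost}_{\ONF}^p \le 4\ln N\cdot\text{Cost}_{\ONF}$. The additive $\sum_i c(S_i)$ term must be absorbed; I expect this is handled by the standing assumption $c(S_i)\ge 1$ together with the fact that $\text{Cost}_{\ONF}\ge$ (something scaling with the number of "active" sets), or — more cleanly — by only counting sets $S_i$ for which $\ONF$ actually buys a positive amount, for each of which $\int_0^\infty x_i(t)\diff t$ is bounded below in a way that makes $A \ge \frac{1}{4\ln N}$ whenever $K_i \ge 1$, so that $\E[K_i] \le 8\ln N \cdot A$ outright with no additive slack.

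The main obstacle I anticipate is precisely this bookkeeping of the additive term: one must argue that no purchase of $S_i$ happens unless $\ONF$ has bought enough of $S_i$ to "pay" for it in the amortized sense, so that the $+1$ per set either cannot occur or is dominated by the fractional cost. A secondary technical point is justifying the applicability of Wald's equation — one needs $\E[K_i] < \infty$, which follows because each $\Lambda_i^{(\ell)}$ is bounded below in distribution (it is not a.s.\ bounded below, being uniform down to $0$, but $\E[\Lambda_i] > 0$ suffices for the stopping time to be a.s.\ finite and integrable given $A < \infty$). One must also handle $A = \infty$ (a set bought an unbounded fractional amount): in that degenerate case both sides are infinite and the inequality is trivial, but it should be noted. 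Everything else is routine.
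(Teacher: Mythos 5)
Your overall route is the same as the paper's: analyze each set $S_i$ separately, view the successive thresholds as an i.i.d.\ sequence uniform on $[0,\tfrac{1}{2\ln N}]$, let the number of ``type a'' purchases plus one be the first index at which the partial sums of thresholds exceed $A_i=\int_0^\infty x_i(t)\diff{t}$, and apply Wald's equation. The genuine gap is in how you handle the final, never-surpassed threshold. You bound $\Lambda_i^{(K_i+1)}$ by its maximum $\tfrac{1}{2\ln N}$, which only yields $\E[K_i]\le 4\ln N\cdot A_i+1$ and leaves an additive $\sum_i c(S_i)$ after summing over sets. Your proposed ways to absorb this term do not work: it is false that $A_i\ge\tfrac{1}{4\ln N}$ whenever $K_i\ge 1$, because the threshold is uniform down to $0$, so an arbitrarily small fractional mass triggers a purchase with positive probability; and $c(S_i)\ge 1$ does not help either, since $\ONF$ may buy a tiny fraction of each of many sets, making $\sum_{i:A_i>0}c(S_i)$ exceed $4\ln N\cdot\text{Cost}_{\ONF}$ by an unbounded factor. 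As written, your argument establishes only $\E[\text{Cost}_{\ONR}^{a}]\le 4\ln N\cdot\text{Cost}_{\ONF}+\sum_{i:A_i>0}c(S_i)$, which is strictly weaker than the lemma.

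The fix is local and is what the paper does: charge the final threshold at $\tfrac{1}{4\ln N}$ rather than $\tfrac{1}{2\ln N}$, i.e.\ bound $\E\bigl[\sum_{l=1}^{s+1}\Lambda_i^l\bigr]=\E\bigl[\sum_{l=1}^{s}\Lambda_i^l\bigr]+\E[\Lambda_i^{s+1}]\le A_i+\tfrac{1}{4\ln N}$. This $\tfrac{1}{4\ln N}$ then cancels exactly against the extra $\tfrac{1}{4\ln N}$ coming from $\E[s+1]=\E[s]+1$ on the Wald side, giving the per-set bound $\E[s]\le 4\ln N\cdot A_i$ with no additive slack; multiplying by $c(S_i)$ and summing gives the lemma. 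Note also that the $+1$ in your bound is pure slack, not a real cost: the expected number of threshold crossings up to level $A_i$ for uniform thresholds is at most $A_i/\E[\Lambda_i]$ (uniform inter-renewal times have decreasing mean residual life), so the clean per-set inequality does hold -- your derivation just stops short of it.
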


\begin{proof}
To show the lemma, fix any set $S_{i}$. We observe the values chosen
for $\Lambda_{i}$ in the algorithm as a sequence $(\Lambda_{i}^{l})_{l=1}^{\infty}$
of independent random variables, taken uniformly from $[0,\frac{1}{2\ln N}]$.
Whenever the algorithm buys $S_{i}$ via ``type a'', it reveals
the next element of the sequence. Denoting by $s$ the number of times
$S_{i}$ is ``type a'' bought, we have that for every $l$ the indicator
variable $1_{s+1\ge l}$ and $\Lambda_{i}^{l}$ are independent (the
value of $\Lambda_{i}^{l}$ does not affect whether the algorithm
reveals it). Since the elements of the sequence are equidistributed,
we can use the general version of Wald's equation to obtain:
\begin{equation}
\E\left[\sum_{l=1}^{s+1}\Lambda_{i}^{l}\right]=\E[s+1]\cdot\E[\Lambda_{i}^{1}]\ge\frac{\E[s]}{4\ln N}+\frac{1}{4\ln N}\label{eq:Expecto}
\end{equation}

Denoting by $t^{\prime}$ the last time that $S_{i}$ was ``type
a'' bought, we also know that

\[
\sum_{l=1}^{s}\Lambda_{i}^{l}=\int_{0}^{t^{\prime}}x_{i}(t)\diff{t}\le\int_{0}^{\infty}x_{i}(t)\diff{t}
\]

since all revealed thresholds but $\Lambda_{i}^{s+1}$ have been surpassed
by $x_{i}(t)$. Therefore
\begin{align*}
\E\left[\sum_{l=1}^{s+1}\Lambda_{i}^{l}\right] & =\E\left[\sum_{l=1}^{s}\Lambda_{i}^{l}\right]+\E[\Lambda_{i}^{s+1}]\\
 & \le\int_{0}^{\infty}x_{i}(t)\diff{t}+\frac{1}{4\ln N}
\end{align*}
Combining this with equation \ref{eq:Expecto} yields
\[
\E[s]\le4\ln N\cdot\int_{0}^{\infty}x_{i}(t)\diff{t}
\]
and thus
\[
\E[c(S_{i})\cdot s]\le4\ln N\cdot c(S_{i})\cdot\int_{0}^{\infty}x_{i}(t)\diff{t}
\]

Note that the total ``type a'' buying cost of $S_{i}$ is $c(S_{i})\cdot s$,
while the buying cost of $S_{i}$ in $\ONF$ is $c(S_{i})\cdot\int_{0}^{\infty}x_{i}(t)\diff{t}$.
Summing the previous inequality over all $S_{i}$ therefore yields
the lemma.
\end{proof}
\begin{lem}
\label{lem:DelayGood}$\text{Cost}_{\ONR}^{d}\le2\cdot\text{Cost}_{\ONF}$.
\end{lem}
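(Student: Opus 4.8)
The plan is to prove this deterministically (no expectation is actually needed), by exploiting the ``type b'' rule. The key structural fact is that every request $q_{j}$ is served by $\ONR$ no later than time $t_{j}$: indeed, $q_{j}\in S_{i_{j}}$ by definition of $i_{j}$, and Step \ref{enu:TypeB} buys $S_{i_{j}}$ at the first time $t\ge t_{j}$ at which $q_{j}$ is still pending. Therefore the service time $\tau_{j}^{\ONR}$ of $q_{j}$ in $\ONR$ satisfies $\tau_{j}^{\ONR}\le t_{j}$, and so the delay $\ONR$ pays for $q_{j}$ is at most $\int_{r_{j}}^{t_{j}}d_{j}(t)\diff{t}$ (if $t_{j}=\infty$ this is just the total possible delay of $q_{j}$, and the bound below still goes through).

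First I would compare this quantity to the fractional delay of $q_{j}$. By the definition of $t_{j}$ as the first time $\gamma_{j}(t)=\frac{1}{2}$, we have $\gamma_{j}(t)<\frac{1}{2}$ for every $t<t_{j}$, hence $1-\gamma_{j}(t)>\frac{1}{2}$ there, and so by Equation \ref{eq:d_j_ONF_definition},
\[
d_{j}^{\ONF}(t)=d_{j}(t)\cdot(1-\gamma_{j}(t))\ge\tfrac{1}{2}\,d_{j}(t)\qquad\text{for all }t\in[r_{j},t_{j}).
\]
Integrating, $\int_{r_{j}}^{t_{j}}d_{j}(t)\diff{t}\le 2\int_{r_{j}}^{t_{j}}d_{j}^{\ONF}(t)\diff{t}\le 2\int_{r_{j}}^{\infty}d_{j}^{\ONF}(t)\diff{t}$.

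Finally I would sum over all requests $q_{j}$:
\[
\text{Cost}_{\ONR}^{d}\le\sum_{j}\int_{r_{j}}^{t_{j}}d_{j}(t)\diff{t}\le 2\sum_{j}\int_{r_{j}}^{\infty}d_{j}^{\ONF}(t)\diff{t}=2\,\text{Cost}_{\ONF}^{d}\le 2\,\text{Cost}_{\ONF},
\]
which is the claim. There is essentially no obstacle here; the only point worth stating carefully is the degenerate case $t_{j}=\infty$ (the request is never half-covered fractionally), where $\ONR$ may never serve $q_{j}$ but then its delay is $\int_{r_{j}}^{\infty}d_{j}(t)\diff{t}$, which the same inequality $d_{j}^{\ONF}(t)\ge\frac{1}{2}d_{j}(t)$ (now valid for all $t\ge r_{j}$) still bounds by $2\int_{r_{j}}^{\infty}d_{j}^{\ONF}(t)\diff{t}$.
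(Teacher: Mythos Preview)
Your proof is correct and follows essentially the same approach as the paper: both use the ``type b'' rule to ensure that while $q_{j}$ is pending in $\ONR$ one has $\gamma_{j}(t)\le\frac{1}{2}$, hence $d_{j}^{\ONF}(t)\ge\frac{1}{2}d_{j}(t)$, and then integrate and sum. Your version is slightly more explicit (in particular in handling the degenerate case $t_{j}=\infty$), but the argument is the same.
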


\begin{proof}
Due to the ``type b'' buying, if a request $q_{j}$ is pending in
$\ONR$ at time $t$, we have that $\gamma_{j}(t)\le\frac{1}{2}$.
Thus $d_{j}^{\ONF}(t)\ge\frac{1}{2}\cdot d_{j}(t)$, and therefore
the $\ONF$ always incurs at least half the delay cost of $\ONR$. This
yields the lemma.
\end{proof}
It remains to bound the total ``type b'' buying. For any request
$q_{j}$ and time $t\ge r_{j}$, we define the event $A_{j}^{t}$,
which is the event that $q_{j}$ has not been served in $\ONR$ by
time $t$.
\begin{lem}
\label{lem:RequestProbabilityLow}For any request $q_{j}$, with $A_{j}^{t}$
as defined above, we have
\[
Pr(A_{j}^{t_{j}})\le\frac{1}{N}
\]
\end{lem}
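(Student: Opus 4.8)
The plan is to bound, for a fixed request $q_j$, the probability that $q_j$ is still pending at time $t_j$ (the first time its fractional coverage $\gamma_j$ reaches $\tfrac12$). First I would observe that at time $t_j$, the fractional algorithm has covered $q_j$ by exactly $\tfrac12$, i.e.\ $\sum_{i\mid q_j\in S_i}\int_{r_j}^{t_j}x_i(t')\diff{t'}=\tfrac12$. For $q_j$ to be pending at $t_j$, no ``type~b'' buy has served it in $[r_j,t_j)$ (which is automatic if it is pending), and, crucially, \emph{none} of the sets $S_i\ni q_j$ has been bought via ``type~a'' during $[r_j,t_j)$ --- because the very first such buy would serve $q_j$. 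So $A_j^{t_j}$ is contained in the event that for every $i$ with $q_j\in S_i$, the set $S_i$ was not ``type~a'' bought in the interval $[r_j,t_j)$.

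The next step is to bound the probability that a fixed set $S_i$ is not ``type~a'' bought during $[r_j,t_j)$, conditioned on the history up to $r_j$, and to argue these events are independent across the sets $S_i\ni q_j$ (the thresholds $\Lambda_i$ are drawn independently per set). The key inequality: if $S_i$ accumulates fractional mass $\delta_i:=\int_{r_j}^{t_j}x_i(t')\diff{t'}$ during the interval, then the probability it is \emph{not} bought in that interval is at most $1-2\ln N\cdot\delta_i$ when $2\ln N\cdot \delta_i\le 1$ (and $0$ otherwise) --- this follows from the memorylessness of the running threshold: whatever the ``residual'' threshold $\Lambda_i-\int_{\tau_i}^{r_j}x_i$ is at time $r_j$, it is stochastically dominated by a uniform variable on $[0,\tfrac{1}{2\ln N}]$, so the chance the mass $\delta_i$ fails to exceed it is at most $1-\delta_i/(\tfrac{1}{2\ln N})=1-2\ln N\cdot\delta_i$. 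Using $1-u\le e^{-u}$ and independence, the probability that \emph{all} sets $S_i\ni q_j$ avoid a ``type~a'' buy is at most $\prod_i e^{-2\ln N\cdot\delta_i}=e^{-2\ln N\sum_i\delta_i}=e^{-2\ln N\cdot\gamma_j(t_j)}=e^{-2\ln N\cdot(1/2)}=e^{-\ln N}=\tfrac1N$.

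The main obstacle --- and the one place I'd be careful --- is the conditioning/independence argument: one must check that the ``residual threshold'' of $S_i$ at time $r_j$ is genuinely (stochastically) at most uniform on $[0,\tfrac1{2\ln N}]$ regardless of the past, which is where the memorylessness of the repeated-threshold scheme is used, and that across distinct sets these residuals are independent (true because the $\Lambda_i$-sequences are drawn independently per set). One also needs the edge case where some $\delta_i$ is large enough that $S_i$ is bought with certainty, in which case $\Pr(A_j^{t_j})=0$ and the bound holds trivially; so assuming $2\ln N\cdot\delta_i<1$ for all relevant $i$ loses nothing. Everything else is the routine computation $\prod_i e^{-2\ln N\,\delta_i}=e^{-\ln N}$ above.
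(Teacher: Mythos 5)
Your proposal is correct and follows essentially the same route as the paper's proof: bound the per-set probability of no ``type a'' purchase in $[r_j,t_j)$ via the residual threshold being dominated by a uniform on $[0,\tfrac{1}{2\ln N}]$, use independence across sets, and combine with $\gamma_j(t_j)=\tfrac12$. The only (immaterial) difference is the last step, where you apply $1-u\le e^{-u}$ factor by factor, while the paper first averages the factors before exponentiating.
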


\begin{proof}
For $S_{i}$ a set and $I=[t_{1},t_{2})$ a time interval, denote
by $A_{i}^{I}$ the event that $i$ has not been bought by ``type
a'' buying in $I$. Denote by $\Lambda_{i}^{l}$ the current threshold
for $S_{i}$ at time $t_{1}$, and denote by $t^{\prime}\le t_{1}$
the time the threshold was set. We have that
\[
Pr(A_{i}^{I})=Pr\left(\int_{t^{\prime}}^{t_{2}}x_{i}(t)\diff{t}\le\Lambda_{i}^{l}\,\middle|\,\int_{t^{\prime}}^{t_{1}}x_{i}(t)\diff{t}\le\Lambda_{i}^{l}\right)
\]
Fix $t^{\prime}$. Given that $\int_{t^{\prime}}^{t_{1}}x_{i}(t)\diff{t}\le\Lambda_{i}^{l}$,
we have that $\Lambda_{i}^{l}\sim U\left(\int_{t^{\prime}}^{t_{1}}x_{i}(t)\diff{t},\frac{1}{2\ln N}\right)$.
Defining $\Lambda=\Lambda_{i}^{l}-\int_{t^{\prime}}^{t_{2}}x_{i}(t)\diff{t}$,
we have $\Lambda\sim U\left(0,\frac{1}{2\ln N}-\int_{t^{\prime}}^{t_{1}}x_{i}(t)\diff{t}\right)$
and thus
\begin{align*}
Pr(A_{i}^{I}) & =Pr\left(\int_{t_{1}}^{t_{2}}x_{i}(t)\diff{t}\le U\left(0,\frac{1}{2\ln N}-\int_{t^{\prime}}^{t_{1}}x_{i}(t)\diff{t}\right)\right)\\
 & \le Pr\left(\int_{t_{1}}^{t_{2}}x_{i}(t)\diff{t}\le U\left(0,\frac{1}{2\ln N}\right)\right)\\
 & =\begin{cases}
1-2\ln N\cdot\int_{t_{1}}^{t_{2}}x_{i}(t)\diff{t} & \text{if }\int_{t_{1}}^{t_{2}}x_{i}(t)\diff{t}\le\frac{1}{2\ln N}\\
0 & \text{otherwise}
\end{cases}
\end{align*}

Note that for two distinct sets $S_{i_{1}},S_{i_{2}}$, the events
$A_{i_{1}}^{I_{1}}$ and $A_{i_{2}}^{I_{2}}$ are independent for
any two time intervals $I_{1},I_{2}$. We have that
\[
Pr(A_{j}^{t})\le Pr\left(\bigwedge_{i|q_{j}\in S_{i}}A_{i}^{[r_{j},t)}\right)=\prod_{i|q_{j}\in S_{i}}Pr(A_{i}^{[r_{j},t)})
\]

where the equality follows from the independence of the events. We
now analyze $A_{j}^{t_{j}}$. If there exists $i$ such that $q_{j}\in S_{i}$
and $\int_{r_{j}}^{t_{j}}x_{i}(t)\diff{t}>\frac{1}{2\ln N}$, then $Pr(A_{i}^{[r_{j},t_{j})})=0$
and thus $Pr(A_{j}^{t_{j}})=0$ and the proof is complete. Otherwise,
for all such $i$, we have that $Pr(A_{i}^{[r_{j},t_{j})})\le1-2\ln N\int_{r_{j}}^{t_{j}}x_{i}(t)\diff{t}$.
Denote $k_{j}=|\{i|q_{j}\in S_{i}\}|$. This implies
\begin{align*}
Pr(A_{j}^{t_{j}}) & \le\prod_{i|q_{j}\in S_{i}}\left(1-2\ln N\int_{r_{j}}^{t_{j}}x_{i}(t)\diff{t}\right)\\
 & \le\left(1-2\ln N\cdot\frac{\sum_{i|q_{j}\in S}\int_{r_{j}}^{t_{j}}x_{i}(t)\diff{t}}{k_{j}}\right)^{k_{j}}\\
 & \le\left(1-2\ln N\cdot\frac{1}{2k_{j}}\right)^{k_{j}}\\
 & =\left(1-\frac{\ln N}{k_{j}}\right)^{\frac{k_{j}}{\ln N}\cdot\ln N}\\
 & \le e^{-\ln N}=\frac{1}{N}
\end{align*}

where the second inequality follows from taking the arithmetic mean
of the factors and raising it to the power of their number. The third
inequality follows from the definition of $t_{j}$.
\end{proof}
\begin{cor}
\label{cor:TypeBGood}$\E[\text{Cost}_{\ONR}^{b}]\le 2\text{Cost}_{\ONF}$
\end{cor}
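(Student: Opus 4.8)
The plan is to decompose the ``type b'' cost by request. A ``type b'' purchase made on behalf of a request $q_j$ buys the cheapest set $S_{i_j}$ containing $q_j$, at cost $c(S_{i_j})$, and it can only happen when $q_j$ is still pending at time $t_j$ --- that is, only on the event $A_j^{t_j}$. Consequently $\text{Cost}_{\ONR}^{b}\le\sum_j\mathbf{1}_{A_j^{t_j}}\cdot c(S_{i_j})$; taking expectations and invoking Lemma~\ref{lem:RequestProbabilityLow} yields $\E[\text{Cost}_{\ONR}^{b}]\le\frac{1}{N}\sum_j c(S_{i_j})$.

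It then remains to charge $\sum_j c(S_{i_j})$ to $\ONF$. For each $j$, the definition of $t_j$ gives $\gamma_j(t_j)=\frac{1}{2}$, i.e.\ $\sum_{i\,:\,q_j\in S_i}\int_{r_j}^{t_j}x_i(t')\diff{t'}=\frac{1}{2}$, so the portion of $\ONF$'s buying cost spent on sets containing $q_j$ during $[r_j,t_j)$ is $\sum_{i\,:\,q_j\in S_i}c(S_i)\int_{r_j}^{t_j}x_i(t')\diff{t'}\ge c(S_{i_j})\cdot\frac{1}{2}$, using $c(S_i)\ge c(S_{i_j})$ for every $i$ with $q_j\in S_i$. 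Hence $c(S_{i_j})\le 2\sum_{i\,:\,q_j\in S_i}c(S_i)\int_{r_j}^{t_j}x_i(t')\diff{t'}$.

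Summing over $j$ and exchanging the order of summation, for each fixed set $S_i$ we bound $\int_{r_j}^{t_j}x_i(t')\diff{t'}\le\int_0^\infty x_i(t')\diff{t'}$ and use that at most $N$ requests lie on elements of $S_i$, obtaining $\sum_j c(S_{i_j})\le 2\sum_i c(S_i)\cdot N\int_0^\infty x_i(t')\diff{t'}=2N\,\text{Cost}_{\ONF}^{p}\le 2N\,\text{Cost}_{\ONF}$. Combined with the first paragraph this gives $\E[\text{Cost}_{\ONR}^{b}]\le 2\,\text{Cost}_{\ONF}$, as claimed. The one point to state carefully --- though it is not a genuine obstacle --- is the degenerate case $t_j=\infty$, i.e.\ $q_j$ is never half-covered by $\ONF$: then no ``type b'' purchase is ever triggered by $q_j$, so the corresponding term simply vanishes (and in that case $\ONF$ anyway pays at least $\frac{1}{2}\int_{r_j}^{\infty}d_j(t)\diff{t}$ in delay, which makes the inequality trivial if infinite). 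I expect the only part needing a bit of care to be the bookkeeping in the double sum, namely matching each unit of $\ONF$'s purchases to the at-most-$N$ requests that charge it.
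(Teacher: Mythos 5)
Your proof is correct and follows essentially the same route as the paper: both decompose the type-b cost per request, apply Lemma~\ref{lem:RequestProbabilityLow} to obtain the $\frac{1}{N}\sum_j c(S_{i_j})$ bound (restricted to requests that $\ONF$ half-covers, exactly your degenerate-case remark), and then use the fact that half-coverage forces $\ONF$ to spend at least $\frac{1}{2}c(S_{i_j})$ on sets containing the request. The only difference is the final bookkeeping: the paper bounds the sum by $N$ times the single largest $c(S_{i_{j^{\ast}}})$ and charges that one request to $\ONF$, whereas you charge every request separately and let the factor $N$ (at most $N$ requests per set) cancel against the $\frac{1}{N}$ probability --- both give the same constant $2$.
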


\begin{proof}
	First observe that if $\ONF$ covers less than half of a request $q_j$, then $q_j$ does not trigger any ``type b'' buying. Let $Q'$ be the subset of requests which are at least half-covered by $\ONF$ during the run of the algorithm. 
	We define $j^{\ast}=\arg\max_{j|q_j \in Q'}c(S_{i_{j}})$. We have that
	\begin{align*}
	\E[\text{Cost}_{\ONR}^{b}] & =\sum_{j|q_j \in Q'}c(S_{i_{j}})\cdot Pr(A_{j}^{t_{j}})\\
	 & \le\frac{1}{N}\sum_{j|q_j \in Q'}c(S_{i_{j}})\\
	 & \le\frac{1}{N}\sum_{j|q_j \in Q'}c(S_{i_{j^{\ast}}})=c(S_{i_{j^{\ast}}})
	\end{align*}
	where the first equality is due to linearity of expectation, the first
	inequality is due to Lemma \ref{lem:RequestProbabilityLow}. Now note
	that since $\ONF$ covered $q_{j^{\ast}}$ by a fraction of at least half, it paid a total cost of at least $\frac{c(S_{i_{j^{\ast}}})}{2}$. This concludes the proof.
\end{proof}
We now prove the main theorem.
\begin{proof}
(of Theorem \ref{thm:RequestRandomizedCompetitive}) Combining Lemmas
\ref{lem:TypeAGood} and \ref{lem:DelayGood} with Corollary \ref{cor:TypeBGood}
yields:
\begin{align*}
\E[\text{Cost}_{\ONR}] & =\E[\text{Cost}_{\ONR}^{p}+\text{Cost}_{\ONR}^{d}]\\
 & =\E[\text{Cost}_{\ONR}^{a}]+\E[\text{Cost}_{\ONR}^{b}]+\E[\text{Cost}_{\ONR}^{d}]\\
 & \le(4\ln N+4)\cdot\text{Cost}_{\ONF}=O(\log N)\cdot\text{Cost}_{\ONF}
\end{align*}

Since $\ONF$ is $O(\log k)$ competitive with respect to the fractional
optimum, we get that $\ONR$ is $O(\log N\cdot\log k)$ competitive
with respect to the fractional optimum, and in particular the integral
optimum.
\end{proof}

\subsection{\label{subsecAppendix:ElementRoundingProof}Improved Rounding and Proof of Theorem \ref{thm:ElementRandomizedCompetitive}}

In this subsection, we show how to modify the $O(\log k\cdot\log N)$-competitive
randomized-rounding algorithm shown in Section \ref{sec:RoundingAlgo}
to yield a $O(\log k\cdot\log n)$-competitive randomized algorithm.
The intuition behind the modifications is removing the dependency
on the number of requests by aggregating requests on the same element.
Specifically, we discretize time into intervals, such that requests
on the same element that arrive in the same interval are aggregated.
Instead of having a threshold time for ``type b'' buying for every
request, we have a threshold time for every interval.
\begin{defn}
	For every element $e$, we define \emph{threshold times}, spaced by
	$\ONF$ buying fractions of sets containing $e$ which sum to a constant. Formally,
	for every element $e$, we define the threshold time $t_{l}^{e}$
	for $l\in\mathbb{N}$ to be the first time for which $\int_{0}^{t_{l}^{e}}\left(\sum_{i|e\in S_{i}}x_{i}(t)\right)\diff{t}=\frac{l}{4}$.
\end{defn}

Denote by $s_{e}$ the index of the last threshold time for $e$.
Using the definition of $t_{s_{e}}^{e}$, we have
\begin{equation}
\int_{0}^{\infty}\left(\sum_{i|e\in S_{i}}x_{i}(t)\right)\diff{t}\ge\frac{s_{e}}{4}\label{eq:BuyingAtLeastLength}
\end{equation}

For simplicity, we denote $t_{0}^{e}=0$. Define $R_{l}^{e}$ for
$0\le l\le s_{e}-1$ to be the set of requests released on $e$ in
the interval $[t_{l}^{e},t_{l+1}^{e})$. Note that no request is released
outside of some $R_{l}^{e}$ -- if a request is released on element
$e$ after $t_{s_{e}}^{e}$, it would require set buying by $\ONF$
which would create three new threshold times, in contradiction to
$s_{e}$'s definition. For the same reason, $R_{s_{e}-1},R_{s_{e}-2}$
are empty.

If at time $t$ all the requests of $R_{l}^{e}$ have been served,
we say that $R_{l}^{e}$ has been served. Otherwise, $R_{l}^{e}$
is unserved at time $t$.

We modify the $O(\log k\cdot\log N)$-competitive algorithm of Section \ref{sec:RoundingAlgo} as follows:
\begin{enumerate}
	\item The \textquotedbl type a\textquotedbl{} thresholds $\Lambda_{i}$
	are now drawn from $U\left(0,\frac{1}{2\ln n}\right)$ (using $n$
	instead of $N$).
	\item \label{enu:Step2-1}''Type b'' buying is changed to the following
	rule -- for every element $e$ and every $l$, if $R_{l}^{e}$ remains
	unserved until $t_{l+3}^e$, we buy $S_{e}$.
\end{enumerate}
Note that $t_{l+3}^e$ in (\ref{enu:Step2-1}) is well defined since
$R_{s_{e}-1},R_{s_{e}-2}$ are empty.

We now prove Theorem \ref{thm:ElementRandomizedCompetitive}.

As in Appendix \ref{subsecAppendix:RequestRoundingProof}, we define
$\text{Cost}_{\ONR}^{a}$ and $\text{Cost}_{\ONR}^{b}$ to be the ``type
a'' buying cost and the ``type b'' buying cost of the algorithm,
respectively.
\begin{lem}
\label{lem:ElementTypeALow}$\E[\text{Cost}_{\ONR}^{a}]\le4\ln n\cdot\text{Cost}_{\ONF}$
\end{lem}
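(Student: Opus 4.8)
The plan is to mirror the proof of Lemma \ref{lem:TypeAGood} from Appendix \ref{subsecAppendix:RequestRoundingProof}, since the only change relevant to ``type a'' buying is that the thresholds $\Lambda_i$ are now drawn uniformly from $[0,\frac{1}{2\ln n}]$ instead of $[0,\frac{1}{2\ln N}]$. First I would fix an arbitrary set $S_i$ and view the successive values assigned to $\Lambda_i$ as a sequence $(\Lambda_i^l)_{l=1}^\infty$ of i.i.d.\ random variables, each uniform on $[0,\frac{1}{2\ln n}]$, revealed one at a time whenever $S_i$ is ``type a'' bought. Letting $s$ be the (random) number of ``type a'' purchases of $S_i$, I would note that the stopping-time structure is identical: whether $\Lambda_i^l$ gets revealed depends only on $\Lambda_i^1,\dots,\Lambda_i^{l-1}$ (through the fractional buying $x_i$, which is independent of the random thresholds), so the indicator $1_{s+1 \ge l}$ is independent of $\Lambda_i^l$, and Wald's equation applies.

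Concretely, by Wald's equation,
\[
\E\left[\sum_{l=1}^{s+1}\Lambda_i^l\right] = \E[s+1]\cdot \E[\Lambda_i^1] = \frac{\E[s]+1}{4\ln n}.
\]
On the other hand, if $t'$ is the last time $S_i$ was ``type a'' bought, then the first $s$ revealed thresholds were all surpassed by the fractional buying accumulated since the previous purchase, so $\sum_{l=1}^{s}\Lambda_i^l = \int_0^{t'} x_i(t)\diff{t} \le \int_0^\infty x_i(t)\diff{t}$, and adding $\E[\Lambda_i^{s+1}] \le \frac{1}{4\ln n}$ gives $\E\left[\sum_{l=1}^{s+1}\Lambda_i^l\right] \le \int_0^\infty x_i(t)\diff{t} + \frac{1}{4\ln n}$. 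Combining the two bounds yields $\E[s] \le 4\ln n \cdot \int_0^\infty x_i(t)\diff{t}$, and multiplying by $c(S_i)$ gives $\E[c(S_i)\cdot s] \le 4\ln n \cdot c(S_i)\int_0^\infty x_i(t)\diff{t}$. Since $c(S_i)\cdot s$ is exactly the ``type a'' buying cost of $S_i$ and $c(S_i)\int_0^\infty x_i(t)\diff{t}$ is the fractional buying cost of $S_i$ in $\ONF$, summing over all $i$ and recalling $\text{Cost}_{\ONF}^p \le \text{Cost}_{\ONF}$ gives the claim.

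This lemma is essentially routine given the earlier work — there is no real obstacle, since the ``type a'' mechanism is unchanged apart from the $N \mapsto n$ substitution, and the bottom-level modification (aggregating requests, moving the ``type b'' thresholds to per-interval rather than per-request) only affects ``type b'' buying and the delay analysis, not this argument. The one point requiring a line of care is re-justifying the independence needed for Wald's equation in the modified algorithm: I would note that $x_i(\cdot)$ is produced by $\ONF$ running in the background, oblivious to the random thresholds, so the times at which thresholds are revealed are measurable functions of $x_i$ and the previously revealed thresholds only, preserving the hypothesis of Wald's equation.
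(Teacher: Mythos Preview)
Your proposal is correct and matches the paper's approach exactly: the paper's proof of this lemma is simply ``The proof is identical to that of Lemma \ref{lem:TypeAGood},'' and you have faithfully reproduced that argument with the sole relevant change $N\mapsto n$ in the threshold distribution. Your extra remark that the bottom-level modifications affect only ``type b'' buying and delay (not ``type a'') is precisely the reason the earlier proof carries over verbatim.
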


\begin{proof}
The proof is identical to that of Lemma \ref{lem:TypeAGood}.
\end{proof}
For every $R_{l}^{e}$, we also define $\Gamma_{l}^{e}(t)$ for $t\ge t_{l+1}^{e}$,
which is the fraction of $e$ covered by $\ONF$ from time $t_{l+1}^{e}$:
\[
\Gamma_{l}^{e}(t)=\sum_{i|e\in S_{i}}\int_{t_{l+1}^{e}}^{t}x_{i}(t^{\prime})\diff{t^{\prime}}
\]

\begin{prop}
\label{prop:RatherUntouched}For $q_{j}\in R_{l}^{e}$, we have that
$\gamma_{j}(t_{l+1})\le\frac{1}{4}$.
\end{prop}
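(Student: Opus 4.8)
The goal is to show that a request $q_j$ belonging to $R_l^e$ is covered by at most $\frac14$ at the threshold time $t_{l+1}^e$ (the right end of its interval). The plan is to exploit the fact that $q_j$ arrives \emph{inside} the interval $[t_l^e, t_{l+1}^e)$, so its arrival time $r_j$ satisfies $r_j \ge t_l^e$, and hence the covering $\gamma_j(t_{l+1}^e)$ — which only counts fractional buying \emph{after} $r_j$ — is bounded by the total fractional buying of sets containing $e$ over the interval $[t_l^e, t_{l+1}^e)$. First I would write out
\[
\gamma_j(t_{l+1}^e) = \sum_{i\mid q_j\in S_i}\int_{r_j}^{t_{l+1}^e} x_i(t')\diff{t'} \le \sum_{i\mid e\in S_i}\int_{t_l^e}^{t_{l+1}^e} x_i(t')\diff{t'},
\]
using $r_j \ge t_l^e$ and the fact that every set containing $q_j$ contains $e$ (and all summands are nonnegative, so enlarging the index set only increases the sum).

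Next I would invoke the definition of the threshold times: $t_l^e$ and $t_{l+1}^e$ are consecutive times at which $\int_0^{t}\bigl(\sum_{i\mid e\in S_i} x_i(t')\bigr)\diff{t'}$ passes through $\frac{l}{4}$ and $\frac{l+1}{4}$ respectively. Subtracting, the total fractional buying of sets containing $e$ over the interval $[t_l^e, t_{l+1}^e)$ is exactly $\frac{l+1}{4} - \frac{l}{4} = \frac14$. Combining this with the displayed inequality gives $\gamma_j(t_{l+1}^e) \le \frac14$, which is the claim. (I would also note the mild edge case: if $t_{l+1}^e$ is the last threshold $t_{s_e}^e$, the interval might accumulate less than $\frac14$, which only helps; and $t_0^e = 0$ is handled by the same bound.)

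I do not anticipate a genuine obstacle here — this is essentially a bookkeeping argument. The only point requiring a little care is making sure the covering $\gamma_j$ is measured from $r_j$ rather than from $0$, so that the relevant window is $[r_j, t_{l+1}^e) \subseteq [t_l^e, t_{l+1}^e)$ and the interval-length-$\frac14$ bound applies; the monotonicity of the integrals and the nonnegativity of the $x_i$'s make the containment of index sets and time intervals both safe. So the whole proof is two inequalities and one application of the definition of threshold times.
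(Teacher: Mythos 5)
Your argument is correct and is essentially the paper's proof, just stated directly rather than by contradiction: both rest on $r_j \ge t_l^e$ (since $q_j \in R_l^e$), the fact that every set containing $q_j$ contains $e$, and the definition of threshold times forcing the fractional buying of sets containing $e$ over $[t_l^e, t_{l+1}^e)$ to be (at most) $\tfrac14$. The edge-case remarks are harmless but unnecessary, since $R_l^e$ is only defined for $l \le s_e - 1$, so $t_{l+1}^e$ always exists as a genuine threshold time.
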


\begin{proof}
Otherwise, the fractional algorithm has bought a total fraction of
more than $\frac{1}{4}$ of sets containing $e$ in $[t_{l}^{e},t_{l+1}^{e})$,
a contradiction to the definition of threshold times.
\end{proof}
\begin{lem}
If a request $q_{j}$ is pending in the randomized algorithm at time
$t$, then
\[
\gamma_{j}(t)\le\frac{3}{4}
\]
\end{lem}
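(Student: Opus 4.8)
The plan is to bound $\gamma_j(t)$ for a request $q_j\in R_l^e$ that is still pending in the randomized algorithm at time $t$. The total covering $\gamma_j(t)$ naturally splits into the covering accumulated before the threshold time $t_{l+1}^e$ and the covering accumulated after it; that is, $\gamma_j(t)=\gamma_j(t_{l+1}^e)+\Gamma_l^e(t)$, using that $r_j\ge t_l^e$ so all covering of $q_j$ by sets containing $e$ since $r_j$ is a subset of the covering since $t_l^e$, and in particular the increment after $t_{l+1}^e$ equals $\Gamma_l^e(t)$. By Proposition \ref{prop:RatherUntouched}, the first term is at most $\frac14$. So it suffices to show that whenever $q_j$ is still pending at time $t$, we must have $\Gamma_l^e(t)\le\frac12$, since then $\gamma_j(t)\le\frac14+\frac12=\frac34$.

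First I would argue the contrapositive for the second term: suppose $\Gamma_l^e(t)>\frac12$. By the definition of the threshold times $t_m^e$, each is spaced from the next by exactly $\frac14$ units of $\ONF$-buying of sets containing $e$; since $\Gamma_l^e$ measures exactly that buying starting from $t_{l+1}^e$, accumulating more than $\frac12$ means we have passed (strictly, at least reached) the threshold times $t_{l+2}^e$ and $t_{l+3}^e$ by time $t$. Hence $t\ge t_{l+3}^e$. But the modified ``type b'' rule (Step \ref{enu:Step2-1} of the improved rounding) guarantees that if $R_l^e$ is still unserved at $t_{l+3}^e$, the algorithm buys $S_e$ at that time, which serves every pending request on $e$ belonging to $R_l^e$ — in particular $q_j$. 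This contradicts $q_j$ being pending at $t\ge t_{l+3}^e$. Therefore $\Gamma_l^e(t)\le\frac12$, completing the bound.

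The one technicality worth spelling out is the edge case where $q_j$ lies in some $R_l^e$ with $l$ close to $s_e$, so that $t_{l+3}^e$ might not exist. But as noted in the text, $R_{s_e-1}^e$ and $R_{s_e-2}^e$ are empty, so any nonempty $R_l^e$ has $l\le s_e-3$, and thus $t_{l+3}^e\le t_{s_e}^e$ is well-defined. Also, every request on $e$ lies in some $R_l^e$, so there is no request outside the scheme to worry about. I expect the main obstacle — really the only subtle point — is making the identification ``$\Gamma_l^e(t)>\frac12 \Rightarrow t\ge t_{l+3}^e$'' rigorous: one must be careful that $\Gamma_l^e$ is nondecreasing and continuous (it is an integral of a nonnegative buying rate), so it attains the value $\frac14$ at $t_{l+2}^e$ and $\frac12$ at $t_{l+3}^e$ exactly, and exceeding $\frac12$ forces $t$ to be at or beyond $t_{l+3}^e$. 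Everything else is a direct combination of the definition of threshold times, Proposition \ref{prop:RatherUntouched}, and the ``type b'' buying rule.
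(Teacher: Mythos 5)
Your proof is correct and follows essentially the same route as the paper: decompose $\gamma_j(t)=\gamma_j(t_{l+1}^e)+\Gamma_l^e(t)$, bound the first term by $\tfrac14$ via Proposition \ref{prop:RatherUntouched}, and bound the second by $\tfrac12$ using the threshold spacing together with the ``type b'' rule at $t_{l+3}^e$. The only difference is that you argue the second bound contrapositively (pending at $t\ge t_{l+3}^e$ is impossible) while the paper argues directly that pendency forces $t\le t_{l+3}^e$, which is immaterial.
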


\begin{proof}
Choose $R_{l}^{e}$ such that $q_{j}\in R_{l}^{e}$. If $t\le t_{l+1}^{e}$,
the lemma results from Proposition \ref{prop:RatherUntouched} and
we're done.

Otherwise, $t>t_{l+1}^{e}$ and therefore $\gamma_{j}(t)=\gamma_{j}(t_{l+1}^{e})+\Gamma_{l}^{e}(t)$.
Since $q_{j}$ is pending at $t$, we have that $R_{l}^{e}$ is unserved
at $t$. This implies that $t\le t_{l+3}^{e}$. From the definition
of threshold times, $\sum_{i|e\in S_{i}}\int_{t_{l+1}^{e}}^{t_{l+3}^{e}}x_{i}(t^{\prime})\diff{t^{\prime}}\le\frac{1}{2}$
and thus $\Gamma_{l}^{e}(t)\le\frac{1}{2}$. Therefore
\[
\gamma_{j}(t)=\gamma_{j}(t_{l+1}^{e})+\Gamma_{l}^{e}(t)\le\frac{1}{4}+\frac{1}{2}=\frac{3}{4}
\]
where the inequality uses Proposition \ref{prop:RatherUntouched}.
\end{proof}
\begin{cor}
\label{cor:ElementDelayLow}$\text{Cost}_{\ONR}^{d}\le4\cdot\text{Cost}_{\ONF}$.
\end{cor}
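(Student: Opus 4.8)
The plan is to mirror the argument of Lemma \ref{lem:DelayGood} from Appendix \ref{subsecAppendix:RequestRoundingProof}, but using the bound $\gamma_j(t) \le \frac{3}{4}$ just established for pending requests of the modified (bottom-level) rounding, rather than the bound $\gamma_j(t) \le \frac{1}{2}$ available for the top-level rounding. First I would recall the definition of the per-request delay $d_j^{\ONF}(t)$ of the fractional algorithm from Equation \eqref{eq:d_j_ONF_definition}: whenever $\gamma_j(t) < 1$, the request $q_j$ incurs delay $d_j^{\ONF}(t) = d_j(t)\cdot(1 - \gamma_j(t))$ in $\ONF$. Meanwhile, in the randomized algorithm $\ONR$, a request that is pending at time $t$ incurs the full momentary delay $d_j(t)$.

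The key step is the pointwise comparison: consider any request $q_j$ and any time $t$ at which $q_j$ is still pending in $\ONR$. By the preceding lemma we have $\gamma_j(t) \le \frac{3}{4}$, hence $1 - \gamma_j(t) \ge \frac{1}{4}$, and therefore
\[
d_j^{\ONF}(t) = d_j(t)\cdot(1 - \gamma_j(t)) \ge \tfrac{1}{4}\, d_j(t).
\]
In particular $\gamma_j(t) < 1$, so the first case of Equation \eqref{eq:d_j_ONF_definition} does apply, and $\ONF$ is genuinely incurring delay on $q_j$ at every moment $\ONR$ is. Thus the momentary delay incurred by $\ONR$ on $q_j$ at time $t$ is at most $4$ times the momentary delay incurred by $\ONF$ on $q_j$ at that same time. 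Integrating over $t$ from $r_j$ to the time $\ONR$ serves $q_j$ (over which interval $q_j$ is pending in $\ONR$, and the bound applies at each point), and then summing over all requests $q_j$, we obtain $\text{Cost}_{\ONR}^d \le 4\cdot\text{Cost}_{\ONF}^d \le 4\cdot\text{Cost}_{\ONF}$, as claimed.

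I do not anticipate a genuine obstacle here; the corollary is an immediate consequence of the lemma it follows, exactly as Lemma \ref{lem:DelayGood} followed from the $\frac{1}{2}$-covering bound in the top-level analysis. The only point requiring a modicum of care is to note that the inequality $d_j^{\ONF}(t) \ge \frac{1}{4} d_j(t)$ holds at \emph{every} time at which $\ONR$ charges delay for $q_j$ — i.e., the whole interval during which $q_j$ is pending in $\ONR$ — since the lemma's hypothesis ("pending in the randomized algorithm at time $t$") is precisely this, and outside that interval $\ONR$ incurs no delay for $q_j$ at all. The constant $4$ is exactly $1/(1 - 3/4)$, i.e. the reciprocal of the worst-case uncovered fraction guaranteed for pending requests.
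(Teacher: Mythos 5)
Your argument is correct and is exactly the reasoning the paper intends: its proof of this corollary is just "Immediate from the previous lemma," with the implicit step being the pointwise bound $d_j^{\ONF}(t)\ge\frac14 d_j(t)$ for requests pending in $\ONR$, precisely as in Lemma \ref{lem:DelayGood} but with $\frac34$ in place of $\frac12$. No discrepancies to report.
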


\begin{proof}
Immediate from the previous lemma.
\end{proof}
It remains to bound the expected \textquotedbl type b\textquotedbl{}
buying.
\begin{prop}
\label{prop:SingleExpectancy}The probability that $R_{l}^{e}$ triggers
``type b'' buying is at most $\frac{1}{n}$.
\end{prop}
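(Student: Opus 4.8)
The plan is to reuse, almost verbatim, the argument of Lemma \ref{lem:RequestProbabilityLow}, with the interval $[r_j,t_j)$ replaced by the window $[t_{l+1}^e,t_{l+3}^e)$ and with $n$ in place of $N$. The whole point of the threshold-time construction is that the total fractional mass $\sum_{i\mid e\in S_i}\int_{t_{l+1}^e}^{t_{l+3}^e}x_i(t)\diff t$ over that window equals $\frac{(l+3)-(l+1)}{4}=\frac12$, which will play the role that the bound $\gamma_j(t_j)\ge\frac12$ played in Lemma \ref{lem:RequestProbabilityLow}.

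First I would reduce the event ``$R_l^e$ triggers type-b buying'' to an event about ``type a'' buying alone. If $R_l^e=\emptyset$ there is nothing to prove, so assume otherwise. By the modified rule (Step \ref{enu:Step2-1}), $R_l^e$ triggers type-b buying exactly when some $q_j\in R_l^e$ is still pending at time $t_{l+3}^e$. Since $q_j\in R_l^e$ forces $r_j<t_{l+1}^e$, any purchase whatsoever of any set $S_i\ni e$ during $[t_{l+1}^e,t_{l+3}^e)\subseteq[r_j,t_{l+3}^e)$ would have served $q_j$; hence no set containing $e$ is \emph{type-a} bought in $[t_{l+1}^e,t_{l+3}^e)$. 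Writing $A_i^I$ for the event ``$S_i$ is not type-a bought in $I$'' as in the proof of Lemma \ref{lem:RequestProbabilityLow}, this gives $\Pr(R_l^e\text{ triggers type b})\le\Pr\bigl(\bigwedge_{i\mid e\in S_i}A_i^{[t_{l+1}^e,t_{l+3}^e)}\bigr)$.

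Then I would invoke the ingredients already proved inside Lemma \ref{lem:RequestProbabilityLow}: the threshold times are deterministic (they depend only on $\ONF$), the events $A_i^I$ for distinct $i$ are independent, and for a single set one has $\Pr(A_i^I)\le 1-2\ln n\int_I x_i(t)\diff t$ when $\int_I x_i(t)\diff t\le\frac1{2\ln n}$ and $\Pr(A_i^I)=0$ otherwise (now using that the type-a thresholds are drawn from $U(0,\frac1{2\ln n})$). Using independence to factor the product, the equality $\sum_{i\mid e\in S_i}\int_{t_{l+1}^e}^{t_{l+3}^e}x_i(t)\diff t=\frac12$, and the arithmetic--geometric-mean step of Lemma \ref{lem:RequestProbabilityLow} with $k_e=|\{i\mid e\in S_i\}|$, I get
\[
\Pr(R_l^e\text{ triggers type b})\le\prod_{i\mid e\in S_i}\Bigl(1-2\ln n\int_{t_{l+1}^e}^{t_{l+3}^e}x_i(t)\diff t\Bigr)\le\Bigl(1-\tfrac{\ln n}{k_e}\Bigr)^{k_e}\le e^{-\ln n}=\tfrac1n,
\]
with the degenerate case (some factor $\le 0$, i.e. some $\int_I x_i(t)\diff t>\frac1{2\ln n}$) making the product $0\le\frac1n$.

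I expect the only substantive step is the reduction in the second paragraph: one must be certain that a single pending request of $R_l^e$ forbids \emph{every} purchase of \emph{every} set containing $e$ over the \emph{entire} window $[t_{l+1}^e,t_{l+3}^e)$, since only then does the probability collapse onto the independent ``type a'' events handled by Lemma \ref{lem:RequestProbabilityLow}. This is precisely what the three-threshold-wide delay window for type-b buying is designed to give: after all requests of $R_l^e$ have arrived (which is by $t_{l+1}^e$), there remain two full threshold intervals, of combined fractional mass $\frac12$, in which a type-a purchase could have occurred. Everything else is a transcription of the earlier lemma.
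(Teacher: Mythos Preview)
Your proposal is correct and is exactly the approach the paper takes: it reduces to showing that no set containing $e$ is type-a bought during $[t_{l+1}^e,t_{l+3}^e)$ with probability at most $\frac{1}{n}$, and declares this ``identical to the proof of Lemma~\ref{lem:RequestProbabilityLow}.'' You have simply written out the details (the reduction, the mass $\tfrac12$ over the two-threshold window, and the AM--GM bound with $n$ in place of $N$) that the paper leaves implicit.
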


\begin{proof}
It is enough to show that the probability that the algorithm does
not perform ``type a'' buying of a set containing $e$ during $[t_{l+1},t_{l+3})$
is at most $\frac{1}{n}$. Showing this is identical to the proof
of Lemma \ref{lem:RequestProbabilityLow}.
\end{proof}
\begin{prop}
\label{prop:FractionalHigh}The total cost of $\ONF$ is at least $\frac{1}{4}\cdot s_{e}\cdot c(S_{e})$,
for any element $e$.
\end{prop}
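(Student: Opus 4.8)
The plan is to bound the total cost of $\ONF$ from below by its buying cost alone, and then to lower-bound that buying cost using the definition of the threshold times. Recall that $\text{Cost}_{\ONF} = \text{Cost}_{\ONF}^p + \text{Cost}_{\ONF}^d \ge \text{Cost}_{\ONF}^p = \sum_i c(S_i)\int_0^\infty x_i(t)\diff{t}$, so it suffices to show $\text{Cost}_{\ONF}^p \ge \frac{1}{4}s_e c(S_e)$.

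First I would discard all sets not containing $e$, writing $\text{Cost}_{\ONF}^p \ge \sum_{i\mid e\in S_i} c(S_i)\int_0^\infty x_i(t)\diff{t}$. Next, since $S_e$ is the least expensive set containing $e$ (so that $c(S_i)\ge c(S_e)$ for every $i$ with $e\in S_i$), I would replace each coefficient $c(S_i)$ by $c(S_e)$ and pull it out of the sum, obtaining
\[
\text{Cost}_{\ONF}^p \;\ge\; c(S_e)\sum_{i\mid e\in S_i}\int_0^\infty x_i(t)\diff{t} \;=\; c(S_e)\int_0^\infty\left(\sum_{i\mid e\in S_i} x_i(t)\right)\diff{t}.
\]
Finally, I would invoke inequality \eqref{eq:BuyingAtLeastLength}, which states precisely that $\int_0^\infty\left(\sum_{i\mid e\in S_i} x_i(t)\right)\diff{t}\ge \frac{s_e}{4}$, to conclude $\text{Cost}_{\ONF}\ge \text{Cost}_{\ONF}^p \ge \frac{1}{4}s_e c(S_e)$, as claimed.

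There is no real obstacle here: the statement is essentially a repackaging of the defining property of $s_e$ (that the cumulative fractional buying of sets touching $e$ reaches $s_e/4$) together with the observation that every such purchase costs at least $c(S_e)$. The only point requiring mild care is the interchange of summation and integration and the step $c(S_i)\ge c(S_e)$, both of which are immediate from the definitions.
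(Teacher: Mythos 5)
Your argument is correct and follows the same route as the paper's own proof: lower-bound the total cost by the buying cost restricted to sets containing $e$, use $c(S_i)\ge c(S_e)$ since $S_e$ is the cheapest set containing $e$, and apply inequality \eqref{eq:BuyingAtLeastLength} to get the factor $\frac{s_e}{4}$. Your write-up just spells out the (immediate) sum/integral manipulation that the paper leaves implicit.
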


\begin{proof}
From Equation \ref{eq:BuyingAtLeastLength}, we have that $\ONF$ buys
at least $\frac{s_{e}}{4}$ fraction of sets containing $e$. Since
$S_{e}$ is the least expensive set containing $e$, $\ONF$ must have
payed a buying cost of at least $\frac{1}{4}\cdot s_{e}\cdot c(S_{e})$.
From the definition of threshold times, and the definition of $S_{e}$
as the least expensive set containing $e$.
\end{proof}
\begin{prop}
\label{lem:RandomizedLow}For every element $e$, the total expected
``type b'' buying cost for that element is at most $\frac{1}{n}\cdot s_{e}\cdot c(S_{e})$
\end{prop}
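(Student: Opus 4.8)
The plan is a direct application of linearity of expectation over the intervals $R_{l}^{e}$, $0\le l\le s_{e}-1$, into which the timeline is partitioned for the element $e$. The first point to note is that $s_{e}$ is \emph{deterministic}: it is determined entirely by the fractional algorithm $\ONF$ run in the background, which does not depend on the random thresholds $\Lambda_{i}$ used by the rounding. Hence the number of intervals associated with $e$ is fixed, and we may sum over them without worrying about a random number of summands.

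Next I would let $B_{e}$ denote the total ``type b'' buying cost incurred on behalf of the element $e$, and for each $l$ let $X_{l}$ be the indicator of the event that $R_{l}^{e}$ triggers ``type b'' buying, i.e. that $R_{l}^{e}$ remains unserved until time $t_{l+3}^{e}$. By the modified rule each such trigger causes exactly one purchase of the set $S_{e}$ at cost $c(S_{e})$, so $B_{e}\le c(S_{e})\cdot\sum_{l=0}^{s_{e}-1}X_{l}$ (in fact $X_{s_{e}-1}=X_{s_{e}-2}=0$ since those intervals are empty, but this only helps). Taking expectations and invoking Proposition \ref{prop:SingleExpectancy}, which bounds $\E[X_{l}]$ by $\frac{1}{n}$, we obtain
\[
\E[B_{e}]\;\le\;c(S_{e})\sum_{l=0}^{s_{e}-1}\E[X_{l}]\;\le\;\frac{1}{n}\cdot s_{e}\cdot c(S_{e}),
\]
which is exactly the claimed bound.

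There is no real obstacle beyond this bookkeeping; the only point deserving care is the remark above that $s_{e}$ does not depend on the algorithm's randomness, so that the linearity-of-expectation step ranges over a deterministic collection of events, each of which was already bounded in probability by Proposition \ref{prop:SingleExpectancy}.
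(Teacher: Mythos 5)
Your proposal is correct and coincides with the paper's argument: the paper likewise introduces the indicators $X_{l}^{e}$ of $R_{l}^{e}$ triggering ``type b'' buying and concludes directly from linearity of expectation together with Proposition \ref{prop:SingleExpectancy}. Your added observation that $s_{e}$ is deterministic (being fixed by $\ONF$, independent of the rounding's randomness) is a worthwhile explicit justification of the bookkeeping the paper leaves implicit.
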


\begin{proof}
Let $X_{l}^{e}$ be the indicator random variable of $R_{l}^{e}$
being ``type b'' bought. The lemma results directly from linearity
of expectation and Proposition \ref{prop:SingleExpectancy}.
\end{proof}
\begin{lem}
\label{lem:ElementTypeBLow}$\E[\text{Cost}_{\ONR}^{b}]\le4\cdot\text{Cost}_{\ONF}$.
\end{lem}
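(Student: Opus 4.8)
The plan is to deduce the lemma directly from the two preceding propositions, with no new ideas needed. First I would observe that every ``type b'' purchase is triggered by some block $R_l^e$, so $\text{Cost}_{\ONR}^b$ decomposes as a sum over elements $e$ of the ``type b'' cost associated with $e$. Applying Proposition \ref{lem:RandomizedLow} to each element and using linearity of expectation then gives
\[
\E[\text{Cost}_{\ONR}^b]\ \le\ \frac{1}{n}\sum_{e} s_e\cdot c(S_e).
\]

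The second step is to bound $\sum_e s_e\cdot c(S_e)$. The point to be careful about is to \emph{not} sum the estimates of Proposition \ref{prop:FractionalHigh} termwise, which would (loosely) count $\text{Cost}_{\ONF}$ up to $n$ times. Instead, Proposition \ref{prop:FractionalHigh} gives $s_e\cdot c(S_e)\le 4\cdot\text{Cost}_{\ONF}$ \emph{uniformly in $e$} --- a single worst element already accounts for a quarter of the whole fractional cost --- so since the instance has exactly $n$ elements, $\sum_e s_e\cdot c(S_e)\le 4n\cdot\text{Cost}_{\ONF}$. Substituting into the previous display, the factor $n$ cancels the $\frac1n$ and yields $\E[\text{Cost}_{\ONR}^b]\le 4\cdot\text{Cost}_{\ONF}$, as claimed.

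There is essentially no obstacle: the statement is a two-line combination. The only conceptual content is recognizing that the $\frac1n$ in Proposition \ref{lem:RandomizedLow} --- which in turn comes from drawing the ``type a'' thresholds $\Lambda_i$ from $U\!\left(0,\frac{1}{2\ln n}\right)$ rather than from $U(0,1)$ --- is precisely what compensates for summing a per-element bound over all $n$ elements. This mirrors exactly the structure of Corollary \ref{cor:TypeBGood} in the $O(\log k\cdot\log N)$ analysis, with $n$ playing the role of $N$.
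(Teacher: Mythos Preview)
Your proposal is correct and matches the paper's proof essentially line for line: the paper also sums Proposition~\ref{lem:RandomizedLow} over all elements to get $\E[\text{Cost}_{\ONR}^b]\le\frac{1}{n}\sum_e s_e\cdot c(S_e)$, then bounds each summand by the single worst term $s_{e^\ast}\cdot c(S_{e^\ast})$ (with $e^\ast=\arg\max_e s_e\cdot c(S_e)$) and applies Proposition~\ref{prop:FractionalHigh} once to that term, which is exactly your ``uniform in $e$'' observation.
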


\begin{proof}
We fix $e^{\ast}=\arg\max_{e}(s_{e}\cdot c(S_{e}))$. Proposition
\ref{lem:RandomizedLow} implies that the expected ``type b'' buying
cost is at most:
\[
\sum_{e}\frac{1}{n}\cdot s_{e}\cdot c(S_{e})\le\frac{1}{n}\sum_{e}s_{e^{\ast}}\cdot c(S_{e^{\ast}})=s_{e^{\ast}}\cdot c(S_{e^{\ast}})\le4\cdot\text{Cost}_{\ONF}
\]
where the first inequality is from the definition of $e^{\ast}$,
and the last inequality is from Proposition \ref{prop:FractionalHigh}.
This concludes the proof.
\end{proof}
We now prove the main theorem.
\begin{proof}
(of Theorem \ref{thm:ElementRandomizedCompetitive}) Using Lemmas
\ref{lem:ElementTypeALow}, \ref{cor:ElementDelayLow} and \ref{lem:ElementTypeBLow},
we get:
\[
\E[\text{Cost}_{\ONR}^{a}+\text{Cost}_{\ONR}^{b}+\text{Cost}_{\ONR}^{d}]\le(4\ln n+8)\cdot\text{Cost}_{\ONF}
\]
which proves the theorem.
\end{proof}

\section{\label{sec:NonclairvoyantIsOptimal2}Reduction from Online Set Cover to Set Cover with Delay}
In this section, we show an online reduction from the online set cover problem of \cite{DBLP:journals/siamcomp/AlonAABN09} (denoted OSC) to the non-clairvoyant SCD problem of this paper which preserves the running time and competitive ratio. A lower bound of $\Omega(\log n \log m)$ on the competitiveness of any polynomial-time, randomized algorithm for OSC (conditioned on $NP\not\subseteq BPP$) is given in \cite{KormanSetCoverRandomization}; this implies that our $O(\log n \log m)$-competitive algorithm is optimal for non-clairvoyant SCD.

The reduction works by creating an instance of set cover with deadlines, a special case of SCD. Since we consider non-clairvoyant SCD, the algorithm is not aware of the future delay of a request before it arrives. In the case of deadlines, this translates to the input "announcing" the deadline of a previously-released request $q$ at some point in time $t$, which forces the algorithm to immediately serve $q$ if it is still pending. This allows the adversary to refrain from committing to a specific deadline upon the release of a request -- a crucial property in the construction.

We proceed to show the online reduction. Suppose there exists an $\alpha$-competitive algorithm $\SCDALG$ for set cover with delay. We construct an $\alpha$-competitive algorithm for $\OSC$ (which uses $\SCDALG$ as a black box), which operates on the $\OSC$ request sequence $e_1,\cdots, e_l$. The algorithm is operates in the following way:
\begin{enumerate}
	\item Initially, create a $\SCD$ input for $\SCDALG$ starting at time $0$. Release a set of $n$ requests, one on each element, at time $0$. Let $S\gets \emptyset$ be the collection of sets bought thus far.
	
	\item For $i$ from $1$ to $l$:
	\begin{enumerate}
		\item Receive the next requested element $e_i$. 
		
		\item Advance $\SCDALG$'s time to $i$, and announce the deadline of the request on $e_i$. 
		
		\item Let $S_i$ be the collection of sets bought by $\SCDALG$ upon the deadline of the request on $q_i$.
		
		\item Buy the sets in $S_i$, setting $S\gets S\cup S_i$.
	\end{enumerate}
\end{enumerate}

To analyze this reduction, consider the final $\SCD$ input as terminating after time $l$ -- that is, the remaining $n-l$ requests released at time $0$ never reach their deadline, and thus do not have to be served. 

\paragraph*{Feasibility.} We need to show that the $\OSC$ solution indeed covers every element in the request sequence upon its arrival. This holds since the algorithm holds after the $i$'th request the union of the sets bought by $\SCDALG$ until time $i$ -- which must include the set which covers the $i$'th request.

\paragraph*{Cost.} The cost of $\SCDALG$ for the given input is at most $\alpha$ times the optimal cost for the input. Now, observe that buying the optimal set cover for $\{e_1,\cdots, e_l\}$ immediately after time $0$ is a feasible $\SCD$ solution for the given input (the cost of which is exactly the cost of the optimal $\OSC$ solution). Also note that the cost of the $\OSC$ algorithm is at most the cost of  $\SCDALG$. This implies that the $\OSC$ algorithm is also $\alpha$-competitive.

Clearly, the asymptotic running time of the $\OSC$ algorithm is exactly that of $\SCDALG$.

%
%

\end{document}